\documentclass[journal,twoside,print]{ieeecolor}
\usepackage{lcsys}
\usepackage{booktabs}
\usepackage{mathtools}
\usepackage{cite}
\usepackage{amsmath,amssymb,amsfonts}
\usepackage{graphicx}   % For including images
\usepackage{caption}    % For customizing captions
\usepackage{subcaption} % For subfigures
\usepackage{url}
\usepackage{amsmath}
\usepackage{amsthm, amssymb}
\usepackage{algorithm}
\usepackage{algpseudocode} % modern
\usepackage{float} % only needed if you use [H] placement

\usepackage{xcolor}   % V4: blue font for newly added content

\newenvironment{list4}{
	\begin{list}{$\bullet$}{%
			\setlength{\itemsep}{0.05cm}
			\setlength{\labelsep}{0.2cm}
			\setlength{\labelwidth}{0.3cm}
			\setlength{\parsep}{0in} 
			\setlength{\parskip}{0in}
			\setlength{\topsep}{0in} 
			\setlength{\partopsep}{0in}
			\setlength{\leftmargin}{0.16in}}}
	{\end{list}}

\theoremstyle{definition}
\newtheorem{definition}{Definition}
\theoremstyle{assumption}
\newtheorem{assumption}{Assumption}
\theoremstyle{problem}

\theoremstyle{lemma}
\newtheorem{lemma}{Lemma}
\theoremstyle{remark}
\newtheorem{remark}{Remark}
\theoremstyle{theorem}
\newtheorem{theorem}{Theorem}
\theoremstyle{corollary}
\newtheorem{corollary}{Corollary}
\newtheorem{proposition}{Proposition}

\usepackage{breqn}
\newcommand{\zono}[1]{\langle #1 \rangle}

\newcommand{\TC}[1]{{\color{blue}#1}}

\usepackage{textcomp}
\def\BibTeX{{\rm B\kern-.05em{\sc i\kern-.025em b}\kern-.08em
    T\kern-.1667em\lower.7ex\hbox{E}\kern-.125emX}}

\begin{document}

\title{Reachability Analysis With Probabilistic Zonotopes: Learning Realized Disturbances and Refining Aleatory Uncertainty}

\author{Amir Modares, Zhen Zhang, Themistoklis Charalambous, Amr Alanwar, Hamidreza Modares
\thanks{A. Modares and T. Charalambous are with the School of Engineering, University of Cyprus, 1678 Nicosia, Cyprus.  E-mails: \texttt{\{modarres.amir,~charalambous.themistoklis\}@ucy.ac.cy}. T. Charalambous is also a Visiting Professor at the School of Electrical Engineering, Aalto University, 02150 Espoo, Finland.}
\thanks{Z. Zhang and A. Alanwar are with Technical University of Munich, Germany. Emails: \texttt{\{zhenzhang.zhang, alanwar\}@tum.de}.} 
\thanks{Hamidreza Modares is with Michigan State University, USA. E-mail: \texttt{ modaresh@msu.edu}.}
}

\maketitle

\begin{abstract}
This paper develops a data-driven reachability framework for linear systems
whose disturbances are modeled by probabilistic zonotopes (PZs), combining
bounded deterministic and Gaussian stochastic components. In contrast to methods
that require a precisely known disturbance model (either purely deterministic or purely
stochastic), we assume only a conservative prior PZ and refine
it from data. The framework separates two uncertainty sources: realized
disturbances, which act along the collected trajectory and govern the size of the
data-consistent model set, and aleatory disturbances, which enter as future
additive uncertainty during reachable-set propagation; both shape the reachable
sets, but through different mechanisms. Refinement exploits prior system knowledge
together with trajectory-consistency constraints induced by the data, which impose
affine couplings between deterministic and Gaussian latent variables. We
accordingly develop a constrained-PZ calculus that absorbs the stochastic part of
these constraints into an equivalent representation, removes infeasible latent
directions, and reduces stochastic covariance, together with identification-aware
fusion rules for combining heterogeneous constrained-PZ descriptions. The refined
realized-disturbance proxies then serve as scenarios in a linear program that
learns the smallest translated and scaled copy of the prior disturbance set that
contains all proxy confidence sets while remaining nested in the prior. The
resulting deterministic, high-probability reachable sets carry formal containment
guarantees with substantially reduced conservatism, and numerical examples confirm
that the pipeline tightens both the data-consistent model set and the propagated
reachable sets.
\end{abstract}

\section{Introduction}
Reachability analysis is a fundamental tool in control theory for characterizing the set of states that a dynamical system can attain under uncertainty~\cite{mitchell2005time}. Accurate reachable-set overapproximations are essential for safety verification~\cite{xue2023reach}, robust control synthesis~\cite{sieber2022system}, and constraint satisfaction~\cite{zhang2022robust}, particularly in safety-critical applications such as autonomous systems~\cite{wang2023safe}, robotics~\cite{xiang2020reachable}, and cyber–physical systems~\cite{chen2022reachability}. The practical usefulness of reachability methods hinges on their ability to balance computational tractability with tightness of the resulting sets~\cite{chen2018decomposition}.

% While there is a considerable amount of literature on computing reachable sets for different model classes, these methods assume an a priori given model. Obtaining a model that adequately describes the system from first principles or noisy data is usually challenging and time-consuming. Simultaneously, system data in the form of measured trajectories are often readily available in many applications. Therefore, we are interested in reachability analysis directly from noisy data of an unknown system model. One recent contribution in this direction can be found in \cite{conf:murat}, where the authors introduce two data-driven methods for computing the reachable sets with probabilistic guarantees. The first method represents the reachability problem as a binary classification problem using a Gaussian process classifier. The second method makes use of a Monte Carlo sampling approach to compute the reachable set. A probabilistic reachability analysis is proposed for general nonlinear systems using level sets of Christoffel functions in \cite{conf:murat_christoffel} where they guarantee that the algorithm's output is an accurate reachable set approximation in a probabilistic sense. 

Despite the extensive progress on reachability computation, the majority of existing approaches are inherently model-based. This includes Hamilton--Jacobi reachability methods and their modern variants \cite{doshi2022hamilton,ganai2024hamilton}, as well as set-propagation approaches based on structured set representations and toolchains \cite{althoff2021set,luo2023reachability}. In practice, however, constructing a sufficiently accurate model from first principles or identifying it reliably from imperfect measurements is often challenging \cite{wang2025system,althoff2023checking}. 
Recent years have seen growing interest in data-driven reachability methods, which leverage measured input–state trajectories to reduce modeling uncertainty when explicit system identification is challenging or unreliable~\cite{alanwar2023data}. Most existing data-driven approaches assume disturbances are bounded within a known deterministic set~\cite{alanwar2023data,hu2025robust}. However, it often leads to excessive conservatism in practice, as the disturbance bounds must accommodate rare or extreme realizations (outliers) observed in data~\cite{gao2022robust}. 

Beyond worst-case bounded-noise models, stochastic reachability studies
high-probability guarantees for systems driven by random uncertainty. This
includes probabilistic reachable and invariant sets for linear systems
\cite{fiacchini2021probabilistic}, chance-constrained stochastic MPC
\cite{hewing2019scenario,hewing2020recursively}, and safety assessment under
state-uncertainty envelopes in autonomous systems
\cite{althoff2010reachability,shetty2020predicting}. Related set-based
estimation methods also provide probabilistic consistency guarantees that
account for epistemic uncertainty together with aleatory disturbances
\cite{li2022set}. These approaches reduce the conservatism of purely
worst-case models while retaining interpretable safety margins.

A complementary direction is to reduce uncertainty itself before propagation. Several works incorporate uncertainty refinement using data
\cite{UncertaintyLearning4}, prior knowledge
\cite{raghuraman2022set,UncertaintyLearning3,UncertaintyLearning1}, or both
\cite{scott2016constrained,UncertaintyLearning2,UncertaintyLearning5}.
However, these refinement methods are typically restricted to deterministic
uncertainty descriptions, most commonly zonotopes
\cite{girard2005reachability} or polytopes~\cite{blanchini1999set}. As a
result, they cannot naturally exploit the stochastic structure needed by
probabilistic reachability. Moreover, existing approaches generally treat uncertainty sources independently~\cite{chen2024robust}, without explicitly distinguishing between past realized disturbances~\cite{der2009aleatory} (which have already occurred in the data and are fixed once observed) and aleatory disturbances~\cite{li2025aleatory}, which represent persistent, inherently stochastic effects during online operation. This lack of separation prevents data from being used in the most effective
way: past realized disturbances should refine the model set, whereas future
aleatory disturbances should refine the disturbance set used in propagation.
With probabilistic zonotopes, however, this separation is technically
nontrivial. Data-consistency and prior-knowledge constraints introduce affine
couplings between deterministic and Gaussian latent variables, so classical zonotope refinements are insufficient. Moreover, combining prior-induced and data-derived descriptions requires fusion rules that preserve probabilistic
confidence while reducing the effective uncertainty.

This paper addresses these challenges by developing a unified probabilistic-zonotope framework for data-driven reachability. The main novelties are fivefold. First, we relax the common assumption that the disturbance model is exactly known and either purely deterministic or purely
stochastic; instead, we start from a conservative prior that combines bounded deterministic and Gaussian stochastic components and refine it from data. Second, we explicitly separate past realized disturbances from future aleatory disturbances, so that past disturbances tighten the data-consistent model set while future disturbances are refined for reachable-set propagation. Third, we develop a probabilistic-zonotope calculus for affine consistency constraints, which absorbs the stochastic part of the constraints, removes infeasible latent directions, and reduces stochastic covariance while retaining the remaining
deterministic equality structure. Fourth, we introduce identification-aware fusion rules that combine heterogeneous constrained probabilistic-zonotope descriptions according to their projected effect on the model set. Finally, building on the refined realized-disturbance proxies, we further contract the conservative aleatory disturbance prior through a linear program that learns the smallest translated and scaled copy of the prior confidence set containing all proxy confidence sets while remaining nested inside the original prior. The combined refinement of both uncertainty sources yields deterministic high-probability reachable sets with formal containment guarantees and significantly lower conservatism than purely deterministic or unrefined stochastic disturbance models.

\section{Notations and Definitions}

\subsection{Notation}

$\mathbb{N}$ denotes the set of natural numbers. $\mathbb{R}^{n}$ denotes the real linear space for all real vectors with dimensions $n \in \mathbb{N}$, and $\mathbb{R}^{m \times n}$ denotes the real linear space for all real matrices with dimensions $m \times n$, $m,n\in\mathbb{N}$. For vectors $x,y \in \mathbb{R}^n$, the notation $x \le y$ (resp.\ $x < y$, $x \ge y$)
is interpreted elementwise, i.e., $x_i \le y_i ~\forall~ i=1,\ldots,n$.
The notation $[-1,1]^{n}$ is the $n$-dimensional hypercube defined as the Cartesian product of the interval $[-1,1]$ with itself $n$ times.
The operator $\mathrm{col}(x_0,\ldots,x_{n-1})$ denotes the vertical stacking of vectors, i.e., $\mathrm{col}(x_0,\ldots,x_{n-1}) = [x_0^\top, \ldots, x_{n-1}^\top]^\top$.
For a matrix $A$, \(A\succ0\) means that
\(A\) is positive definite, $\|A\|_{\infty}$ is its infinity norm, $\mathrm{vec}(A)$ stacks its columns into a single column vector, $A^\perp$ denotes its orthogonal complement,
 $A^\top$ 
 is its transpose, $A^{\dagger}$ is its pseudoinverse,  $\mathrm{tr}(A)$ is its trace, $\mathrm{Im}(A)$ and $\ker(A)$ denote its image (column space)
and nullspace, respectively, and $\mathrm{rank}(A)$ denotes its rank. For matrices $A_i \in \mathbb{R}^{n_i \times m_i}$, $n_i, m_i \in \mathbb{N}$, $i=1,\ldots,n$, 
$\operatorname{blkdiag}(A_1,\ldots,A_n)$ denotes the block-diagonal matrix whose $i$th diagonal block is $A_i$ and whose off-diagonal blocks are zero. $I_n$ denotes the identity matrix of dimension $n$, and \(\mathbf 1_n\in\mathbb R^n\) denotes the column vector whose entries are all equal to one.  For matrices $A$ and $B$, $A \otimes B$ denotes the Kronecker product.
 $\mathbb S_+^{n}$ denotes the cone of real symmetric positive semidefinite
matrices in $\mathbb R^{n\times n}$.
% The sign function is denoted by $\mathrm{sgn}(\cdot)$.
% The support function $h_{\mathcal S}(v)=\sup_{w\in\mathcal S} v^\top w$ of a set $\mathcal S$ satisfies $h_{\mathcal A+c}(v) = v^\top c + h_{\mathcal A}(v)$ for translation, $h_{G\mathcal A}(v) = h_{\mathcal A}(G^\top v)$ for affine transformations, and the monotonicity property $\mathcal A \subseteq \mathcal B \implies h_{\mathcal A}(v) \le h_{\mathcal B}(v)$ for all $v$. 
% Throughout the paper, 
% $\otimes$ is used to express stacked linear mappings arising from
% vectorized system dynamics. 
For sets $A,B \subset \mathbb{R}^n$, the Minkowski sum is denoted by
$A \oplus B = \{a+b : a\in A,\; b\in B\}$, and $\subsetneq
$ denotes strict subset.

{
All random variables/vectors are defined on a probability space $(\Omega,\mathcal{F},\mathbb{P})$.  The probability of an event $\mathcal{E}\subseteq\Omega$ is denoted by $\mathbb{P}(\mathcal{E})$. For random variables $X$ and $Y$, $p(X)$ and $p(X\mid Y)$ denote the probability density or mass function of $X$ and the conditional density or mass function of $X$ given $Y$, respectively. The abbreviation “a.s.” stands for “almost surely,” meaning that the stated property holds with probability one.
The probability of an event 
Expectation, covariance, and cross-covariance are denoted by
$\mathbb{E}[\cdot]$, $\mathrm{Var}(\cdot)$, and $\mathrm{Cov}(\cdot,\cdot)$,
respectively. 
$\mathcal{N}(\mu,\Sigma)$ denotes a Gaussian distribution with mean
$\mu$ and covariance matrix $\Sigma$. Finally, $\operatorname{erf}(\gamma/\sqrt{2})^{n}$ denotes the
probability mass of an $n$-dimensional standard Gaussian contained in the hypercube $[-\gamma,\gamma]^n$.

\subsection{Definitions}
This subsection formally defines the hierarchy of zonotopic sets used to model uncertainty in our framework. 

\begin{definition}[\textbf{Zonotope}{\cite{kuhn1998rigorously}}]
% \textbf{(Zonotope)}\cite{kuhn1998rigorously}
Given a center $c \in \mathbb{R}^{n}$, $n\in \mathbb{N}$, and a generator matrix
$G_d=[g^1,\ \cdots,\ g^{m_d}] \in \mathbb{R}^{n \times {m_d}}$, $m_d\in \mathbb{N}$, the {zonotope} is defined as
\begin{align*}
\mathcal{Z}
= \zono{c,G_d}_{\mathrm{Z}}
&= \Bigl\{
    z = c + G_d\alpha
    :
    \alpha \in [-1,1]^{m_d}
  \Bigr\}.
%   \\
% &= \Bigl\{
%     z = c + \sum_{i=1}^{m_d} \alpha_i g^i
%     ,
%     \alpha_i \in [-1,1]
%   \Bigr\}.
\end{align*}

% \hfill$\square$
\end{definition}

\begin{definition}[\textbf{Gaussian Zonotope} {\cite{althoff2009safety}}]
\label{DEf:GZ} 
Given a center $c \in \mathbb{R}^{n}$, and a generator matrix 
$G_s=[g^1, \cdots, g^{m_s}] \in \mathbb{R}^{n \times m_s}$, $m_s\in \mathbb{N}$,
the Gaussian zonotope is defined as
\begin{align*}
   \mathcal{Z}_{\mathcal{N}} = \zono{c,G_s}_{\mathcal{N}}
    = \Bigl\{z= c + G_s \nu: \nu \sim \mathcal{N}(0,I_{m_s}) \Bigl\},
\end{align*}
where $z \sim \mathcal{N}(c,\, G_sG_s^{\top})$.
\end{definition}

\begin{remark}
Since
$G_sG_s^\top$ is symmetric positive semidefinite, the Gaussian random vector $z\sim\mathcal{N}(c,G_sG_s^\top)$ is well defined. If $G_s$ has full row rank, then $G_sG_s^\top\succ0$ and the distribution is nondegenerate. Otherwise, it is a degenerate Gaussian supported on $c+\operatorname{range}(G_s)$. 
\end{remark}

\begin{definition}[\textbf{Probabilistic Zonotope (PZ)}{\cite{althoff2009safety}}]
\label{def:mixed_zonotope}
Given $c \in\mathbb{R}^n$, $G_d\in\mathbb{R}^{n\times m_d}$, and 
$G_s\in\mathbb{R}^{n\times m_s}$, the PZ is defined as
\begin{align}
\label{eq:mixed_zono_def}
&\mathcal{Z}_{\mathrm{P}} = \zono{c,G_d,G_s}_{\mathrm{P}}
\\ &
= \Bigl\{
z = c + G_d\alpha + G_s\nu :
\alpha\in[-1,1]^{m_d},\;
\nu\sim\mathcal{N}(0,I_{m_s})
\Bigr\}. \nonumber
\end{align}
\end{definition}

% \begin{remark}
% \label{rem:mixed_sum_zero_bias}
% A probabilistic zonotope $\mathcal{Z}_{\mathcal{P}}$ can be represented as the Minkowski sum of a deterministic
% zonotope $\mathcal{Z}=\zono{c_d,G_d}_{\mathcal{Z}}$ and a Gaussian zonotope $\mathcal{Z}_{\mathcal N}=\zono{c_s,G_s,\mu_s,\Lambda_s}_{\mathcal N}$, i.e.,
% \(
% \mathcal{Z}_{\mathcal{P}}=\mathcal{Z}\ \oplus\ \mathcal{Z}_{\mathcal N}=\zono{c,G_d,G_s,\mu_s,\Lambda_s}_{\mathcal{P}},
% % =
% % \zono{c_d,G_d}_{\mathcal{Z}}
% % \ \oplus\
% % \zono{c_s,G_s,\mu_s,\Lambda_s}_{\mathcal N},
% \)
% where the overall center satisfies $c = c_d + c_s$.
% Since the constant term can be decomposed into a deterministic center and a
% stochastic bias in multiple ways, this decomposition is not unique.
% Nevertheless, all such decompositions represent the same disturbance
% distribution and are therefore equivalent.

% \end{remark}

\begin{definition}[\textbf{Constrained Probabilistic Zonotope (CPZ)}]
\label{def:cmixed_zonotope}
Let $c \in\mathbb{R}^n$, $G_d\in\mathbb{R}^{n\times m_d}$,
$G_s\in\mathbb{R}^{n\times m_s}$,
$A_d \in \mathbb{R}^{p \times m_d}$,
$A_s \in \mathbb{R}^{p \times m_s}$, and
$b \in \mathbb{R}^{p}$, $p\in\mathbb{N}$.
The CPZ is defined as
\begin{align}
\label{eq:cmz}
\mathcal{Z}_{\mathrm{CP}}
&= \zono{c,G_d,G_s,A_d,A_s}_{\mathrm{CP}} \nonumber\\
&= \Bigl\{
z = c + G_d\alpha + G_s\nu
\; : \;
A_d\alpha + A_s\nu = b,
\nonumber\\[-1mm]
&\hspace{4.2em}
\alpha \in [-1,1]^{m_d},\;
\nu \sim \mathcal{N}(0,I_{m_s})
\Bigr\}.
\end{align}
\end{definition}

We now propose matrix probabilistic zonotopes as follows.

% \begin{definition}[\textbf{Matrix Probabilistic Zonotope (MPZ)}]
% \label{def:matrix_mixed_zono}
% Let \(C\in\mathbb R^{n\times N}\), and let
% \(
% \widetilde G_d
% =
% \bigl[
% G_d^{(1)}\;\cdots\;G_d^{(m_d)}
% \bigr],
% \,\,
% \widetilde G_s
% =
% \bigl[
% G_s^{(1)}\;\cdots\;G_s^{(m_s)}
% \bigr],
% \)
% where \(G_d^{(i)},G_s^{(j)}\in\mathbb R^{n\times N}\) are matrix
% generators. The matrix probabilistic zonotope is defined as
% \begin{align}
% \label{eq:W_matrix_mixed}
% \mathcal Z_{\mathcal{MP}}
% &=
% \zono{C,\widetilde G_d,\widetilde G_s}_{\mathcal{MP}}
% \nonumber\\
% &=
% \Bigl\{
% Z=C+
% \sum_{i=1}^{m_d}\alpha^{(i)}G_d^{(i)}
% +
% \sum_{j=1}^{m_s}\nu^{(j)}G_s^{(j)}
% \; ; \;
% \nonumber\\
% &\hspace{1.4em}
% \alpha\in[-1,1]^{m_d},\;
% \nu\sim\mathcal N(0,I_{m_s})
% \Bigr\}.
% \end{align}
% It is a 
% \end{definition}

\begin{definition}[\textbf{Constrained Matrix Probabilistic Zonotopes (CMPZ)}]
\label{def:matrix_and_constrained_mixed_zono}
Let \(C\in\mathbb R^{n\times N}\), $N\in\mathbb{N}$, and let
\(
G_d
=
\bigl[
G_d^{(1)}\;\cdots\;G_d^{(m_d)}
\bigr],
\,
G_s
=
\bigl[
G_s^{(1)}\;\cdots\;G_s^{(m_s)}
\bigr],
\)
where \(G_d^{(i)},G_s^{(j)}\in\mathbb R^{n\times N}\) are matrix
generators. \(
A_d
=
\bigl[
A_d^{(1)}\;\cdots\;A_d^{(m_d)}
\bigr],
\,
A_s
=
\bigl[
A_s^{(1)}\;\cdots\;A_s^{(m_s)}
\bigr],
\)
with \(A_d^{(i)},A_s^{(j)},B\in\mathbb R^{p\times q}\). Then, the constrained matrix probabilistic zonotope (CMPZ) is defined
as
\begin{align}
\label{eq:CMPZ_def}
\mathcal Z_{\mathrm{CMP}}
&=
\langle
C,G_d,G_s,A_d,A_s,B
\rangle_{\mathrm{CMP}}
\nonumber\\
&=
\Bigl\{
Z=C+
\sum_{i=1}^{m_d}\alpha^{(i)}G_d^{(i)}
+
\sum_{j=1}^{m_s}\nu^{(j)}G_s^{(j)}
\; : \;
\nonumber\\[-1mm]
&\hspace{2.5em}
\sum_{i=1}^{m_d}\alpha^{(i)}A_d^{(i)}
+
\sum_{j=1}^{m_s}\nu^{(j)}A_s^{(j)}
=
B,\;
\nonumber\\[-1mm]
&\hspace{2.5em}
\alpha\in[-1,1]^{m_d},\;
\nu\sim\mathcal N(0,I_{m_s})
\Bigr\}.
\end{align}
In the absence of the affine matrix equality constraint on the latent variables, the CMPZ reduces to the matrix probabilistic zonotope (MPZ).
% As the unconstrained special case of the CMPZ, obtained by removing the affine matrix equality constraint imposed on the latent variables, the matrix probabilistic zonotope (MPZ) is defined as
% \begin{align}
% \label{eq:W_matrix_mixed}
% \mathcal Z_{\mathcal{MP}}
% &=
% \langle C,\, G_d,\, G_s \rangle_{\mathcal{MP}}
% \nonumber\\
% &=
% \Bigl\{
% Y=C+
% \sum_{i=1}^{m_d}\alpha^{(i)}G_d^{(i)}
% +
% \sum_{j=1}^{m_s}\nu^{(j)}G_s^{(j)}
% \; : \;
% \nonumber\\[-1mm]
% &\hspace{1.4em}
% \alpha\in[-1,1]^{m_d},\;
% \nu\sim\mathcal N(0,I_{m_s})
% \Bigr\}.
% \end{align}

% If, in addition, the same latent variables are required to satisfy an affine
% matrix equality, then the constrained matrix probabilistic zonotope (CMPZ) is defined
% as
% \begin{align}
% \label{eq:CMPZ_def}
% \mathcal Z_{\mathcal{CMP}}
% &=
% \langle
% C,G_d,G_s,
% A_d,A_s,B
% \rangle_{\mathcal{CMP}}
% \nonumber\\
% &=
% \Bigl\{
% Y\in\mathcal Z_{\mathcal{MP}}
% \; ; \;
% \sum_{i=1}^{m_d}\alpha^{(i)}A_d^{(i)}
% +
% \sum_{j=1}^{m_s}\nu^{(j)}A_s^{(j)}
% =
% B
% \Bigr\},
% \end{align}
% where
% \(
% A_d
% =
% \bigl[
% A_d^{(1)}\;\cdots\;A_d^{(m_d)}
% \bigr],
% \,
% A_s
% =
% \bigl[
% A_s^{(1)}\;\cdots\;A_s^{(m_s)}
% \bigr],
% \)
% with \(A_d^{(i)},A_s^{(j)},B\in\mathbb R^{p\times q}\). 
\end{definition}}

%  \begin{definition}[\textbf{Matrix and Constrained Matrix Probabilistic Zonotopes}]
% \label{def:matrix_and_constrained_mixed_zono}
% Let \(C\in\mathbb R^{n\times N}\), and let
% \(
% G_d
% =
% \bigl[
% G_d^{(1)}\;\cdots\;G_d^{(m_d)}
% \bigr],
% \,
% G_s
% =
% \bigl[
% G_s^{(1)}\;\cdots\;G_s^{(m_s)}
% \bigr],
% \)
% where \(G_d^{(i)},G_s^{(j)}\in\mathbb R^{n\times N}\) are matrix
% generators. The matrix probabilistic zonotope (MPZ) is defined as
% \begin{align}
% \label{eq:W_matrix_mixed}
% \mathcal Z_{\mathcal{MP}}
% &=
% \langle C,\, G_d,\, G_s \rangle_{\mathcal{MP}}
% \nonumber\\
% &=
% \Bigl\{
% Y=C+
% \sum_{i=1}^{m_d}\alpha^{(i)}G_d^{(i)}
% +
% \sum_{j=1}^{m_s}\nu^{(j)}G_s^{(j)}
% \; : \;
% \nonumber\\[-1mm]
% &\hspace{1.4em}
% \alpha\in[-1,1]^{m_d},\;
% \nu\sim\mathcal N(0,I_{m_s})
% \Bigr\}.
% \end{align}

% If, in addition, the same latent variables are required to satisfy an affine
% matrix equality, then the constrained matrix probabilistic zonotope (CMPZ) is defined
% as
% \begin{align}
% \label{eq:CMPZ_def}
% \mathcal Z_{\mathcal{CMP}}
% &=
% \langle
% C,G_d,G_s,
% A_d,A_s,B
% \rangle_{\mathcal{CMP}}
% \nonumber\\
% &=
% \Bigl\{
% Y\in\mathcal Z_{\mathcal{MP}}
% \; ; \;
% \sum_{i=1}^{m_d}\alpha^{(i)}A_d^{(i)}
% +
% \sum_{j=1}^{m_s}\nu^{(j)}A_s^{(j)}
% =
% B
% \Bigr\},
% \end{align}
% where
% \(
% A_d
% =
% \bigl[
% A_d^{(1)}\;\cdots\;A_d^{(m_d)}
% \bigr],
% \,
% A_s
% =
% \bigl[
% A_s^{(1)}\;\cdots\;A_s^{(m_s)}
% \bigr],
% \)
% with \(A_d^{(i)},A_s^{(j)},B\in\mathbb R^{p\times q}\). 
% \end{definition}}

\begin{remark}
\label{CMPZ-CPZ}
Consider the CMPZ in~\eqref{eq:CMPZ_def}. By vectorizing the matrix variable
\(Z\), the CMPZ becomes the CPZ
\begin{align}
\operatorname{vec}(\mathcal Z_{\mathrm{CMP}})
=
\Bigl\{
&z=c_v+G_{d,v}\alpha+G_{s,v}\nu
\; ; \;
\nonumber\\[-1mm]
&A_{d,v}\alpha+A_{s,v}\nu=b_v,\quad
\alpha\in[-1,1]^{m_d},
\nonumber\\[-1mm]
&\nu\sim\mathcal N(0,I_{m_s})
\Bigr\},
\end{align}
where \vspace{-12pt}
\begin{align*}
c_v&:=\operatorname{vec}(C),\qquad
b_v:=\operatorname{vec}(B),\\
G_{d,v}
&:=
[\operatorname{vec}(G_d^{(1)})\;\cdots\;
\operatorname{vec}(G_d^{(m_d)})],\\
G_{s,v}
&:=
[\operatorname{vec}(G_s^{(1)})\;\cdots\;
\operatorname{vec}(G_s^{(m_s)})],\\
A_{d,v}
&:=
[\operatorname{vec}(A_d^{(1)})\;\cdots\;
\operatorname{vec}(A_d^{(m_d)})],\\
A_{s,v}
&:=
[\operatorname{vec}(A_s^{(1)})\;\cdots\;
\operatorname{vec}(A_s^{(m_s)})].
\end{align*}
Thus, a CMPZ is exactly a CPZ in the vectorized matrix space
\(\mathbb R^{nN}\). Moreover, the \(k\)-th column is obtained by the linear projection
$E_k:=e_k^\top\otimes I_n, \, Z_{(:,k)}=E_k\operatorname{vec}()$, where \(e_k\) is the \(k\)-th standard basis vector in \(\mathbb R^N\).
Since the projection is applied to the constrained set, the equality
constraint is preserved. Hence, the projected column set is $\mathcal Z_{\mathrm{CP},k}
=
\{E_kz:\; z\in\operatorname{vec}(\mathcal Z_{\mathrm{CMP}})\}$.
 Therefore, the column projection does not remove the global equality
constraint. In general, all projected column sets share the same latent
variables \((\alpha,\nu)\) and the same equality constraint
\(
A_{d,v}\alpha+A_{s,v}\nu=b_v .
\)
Thus, the CMPZ represents a coupled collection of column projections, not a
Cartesian product of independent column CPZs, unless the generators and
constraints are block-separated across columns.
\end{remark}

Inspired by the probabilistic bounding techniques introduced in~\cite{althoff2009safety}, we evaluate the probability metrics of standard Gaussian vectors before generalizing to set-based structures. From~\cite[Proposition~2]{althoff2009safety}, for an $n$-dimensional independent standard Gaussian vector $\nu \sim \mathcal{N}(0, I_n)$, the probability that all components remain bounded within $[-\gamma_\delta, \gamma_\delta]^n$ is $\mathbb{P}(\nu \in [-\gamma_\delta, \gamma_\delta]^n) = \operatorname{erf}(\gamma_\delta/\sqrt{2})^n$. For a matrix probabilistic zonotope, by Remark~\ref{CMPZ-CPZ} the stochastic object lives in the vectorized space $\mathbb{R}^{nN}$, so the corresponding probability becomes $\operatorname{erf}(\gamma_\delta/\sqrt{2})^{nN}$. Enforcing these probabilities to meet a target confidence level $1-\delta$ leads to the following unified confidence operator for PZs and MPZs.

\begin{definition}[\textbf{High-Probability Zonotopes via $\delta$-Confidence Sets}]
\label{def:confidence_set}
Let $\delta\in(0,1)$ be a chosen confidence level. For a probabilistic element residing in an $n$-dimensional space, define the truncation width
\begin{equation}
\gamma_\delta := \sqrt{2}\,\operatorname{erf}^{-1}\!\bigl((1-\delta)^{1/n}\bigr).
\label{eq:trunc_width}
\end{equation}
Then, the $\delta$-confidence sets are constructed as follows:

\noindent\textbf{(i) PZ:} For a probabilistic zonotope $\mathcal{Z}_{\mathrm{P}} = \langle c,\, G_d,\, G_s \rangle_{\mathrm{P}}$ defined in \eqref{eq:mixed_zono_def}, its $\delta$-confidence set is
\begin{equation}
\label{eq:PZ_confidence_set}
\begin{split}
 \mathcal{Z}_{\mathrm{P}}(\delta) = \Bigl\{z = c &+ \sum_{i=1}^{m_d}\alpha^{(i)}g_d^{(i)} + \gamma_\delta\sum_{j=1}^{m_s}\beta^{(j)}g_s^{(j)} \;;\; \\
 &\alpha\in[-1,1]^{m_d},\; \beta\in[-1,1]^{m_s}\Bigr\}.
\end{split}
\end{equation}

\noindent\textbf{(ii) MPZ:} For a matrix probabilistic zonotope $\mathcal{Z}_{\mathrm{MP}} = \langle C,\, G_d,\, G_s \rangle_{\mathrm{MP}}$ defined in \eqref{eq:CMPZ_def}, the stochastic object lives in $\mathbb{R}^{nN}$, so the truncation width is $\gamma_\delta^{\mathrm{MPZ}} := \sqrt{2}\,\operatorname{erf}^{-1}((1-\delta)^{1/(nN)})$, and its $\delta$-confidence set is
\begin{equation}
\label{eq:MPZ_confidence_set}
\begin{split}
\mathcal{Z}_{\mathrm{MP}}(\delta) = \Bigl\{Z = C &+ \sum_{i=1}^{m_d}\alpha^{(i)}G_d^{(i)} + \gamma_\delta^{\mathrm{MPZ}}\sum_{j=1}^{m_s}\beta^{(j)}G_s^{(j)} \;;\; \\
&\alpha\in[-1,1]^{m_d},\; \beta\in[-1,1]^{m_s}\Bigr\}.
\end{split}
\end{equation}

In both cases, the confidence set is deterministic and bounded and one has
\begin{align*}
 \mathbb{P}\!\left(\mathcal{Z}_{\mathrm{P}}\in\mathcal{Z}_{\mathrm{P}}(\delta)\right)
 \geq
 \operatorname{erf}\!\left(\frac{\gamma_\delta}{\sqrt{2}}\right)^{\!n}
 = 1-\delta,
 \\
 \mathbb{P}\!\left(\mathcal{Z}_{\mathrm{MP}}\in\mathcal{Z}_{\mathrm{MP}}(\delta)\right)
 \geq
 \operatorname{erf}\!\left(\frac{\gamma_\delta^{\mathrm{MPZ}}}{\sqrt{2}}\right)^{\!nN}
 = 1-\delta.
 \label{eq:confidence_prob}
\end{align*}
For constrained PZs or CMPZs with deterministic equality constraints, such as $A_d\alpha=b$, the $\delta$-confidence operation leaves the equality unchanged; it only bounds the Gaussian factor and scales the stochastic generators by the respective $\gamma_\delta$.
\end{definition}

% \begin{remark}
% The \(\delta\)-confidence set replaces the Gaussian factor
% \(\nu\sim\mathcal N(0,I_{m_s})\) with a bounded factor
% \(\beta\in[-1,1]^{m_s}\) and scales the stochastic generator as \(\delta G_s\). Thus, larger \(\delta\) gives higher confidence but a larger set, while smaller \(\delta\) gives a tighter set with lower confidence. 
% % A PZ or MPZ confidence set is therefore a deterministic zonotope or matrix
% % zonotope with augmented generator \([G_d\;\;\delta G_s]\). Since CPZs are
% % converted to PZs by Lemma~\ref{lem:generalized_constraint_absorption}, their truncation is handled through the resulting PZ representation.
% \end{remark}

The following lemma is required.

\begin{lemma}[\hspace{0.00001cm}{\cite{anderson2005optimal}}]
\label{lem:gaussian_conditioning}
Let
\(x=[x_\alpha^\top\;x_\beta^\top]^\top\sim\mathcal N(\mu,\Sigma)\),
where
\(\mu=[\mu_\alpha^\top\;\mu_\beta^\top]^\top\) and
\[
\Sigma=
\begin{bmatrix}
\Sigma_{\alpha\alpha} & \Sigma_{\alpha\beta}\\
\Sigma_{\alpha\beta}^{\top} & \Sigma_{\beta\beta}
\end{bmatrix},
\qquad
\Sigma_{\beta\beta}\succ0 .
\]
Then,
\(x\mid x_\beta=\bar\beta
\sim\mathcal N(\mu_{\mathrm{cond}},\Sigma_{\mathrm{cond}})\),
where
\[
\mu_{\mathrm{cond}}
=
\begin{bmatrix}
\mu_{\mathrm{cond},\alpha}\\
\bar\beta
\end{bmatrix},
\qquad
\Sigma_{\mathrm{cond}}
=
\operatorname{blkdiag}(\Sigma_{\mathrm{cond},\alpha\alpha},0),
\]
with
\(
\mu_{\mathrm{cond},\alpha}
=
\mu_\alpha+\Sigma_{\alpha\beta}\Sigma_{\beta\beta}^{-1}
(\bar\beta-\mu_\beta),
\,
\Sigma_{\mathrm{cond},\alpha\alpha}
=
\Sigma_{\alpha\alpha}
-\Sigma_{\alpha\beta}\Sigma_{\beta\beta}^{-1}
\Sigma_{\alpha\beta}^{\top}.
\)
\end{lemma}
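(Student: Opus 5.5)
The plan is to prove Lemma~\ref{lem:gaussian_conditioning} by the standard decorrelation argument. First define the residual
\[
r := x_\alpha - \mu_\alpha - \Sigma_{\alpha\beta}\Sigma_{\beta\beta}^{-1}(x_\beta-\mu_\beta),
\]
which is well defined because $\Sigma_{\beta\beta}\succ0$. The pair $(r, x_\beta)$ is a linear image of the Gaussian vector $x$, obtained by applying the matrix
\[
T=\begin{bmatrix} I & -\Sigma_{\alpha\beta}\Sigma_{\beta\beta}^{-1}\\ 0 & I\end{bmatrix}
\]
and a shift. Hence it is jointly Gaussian. Then compute
\[
\mathrm{Cov}(r,x_\beta)=\Sigma_{\alpha\beta}-\Sigma_{\alpha\beta}\Sigma_{\beta\beta}^{-1}\Sigma_{\beta\beta}=0,
\qquad
\mathrm{Var}(r)=\Sigma_{\alpha\alpha}-\Sigma_{\alpha\beta}\Sigma_{\beta\beta}^{-1}\Sigma_{\alpha\beta}^\top .
\]
For the variance, expand $T\Sigma T^\top$; the cross terms cancel to a single negative Schur-complement term. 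Also $\mathbb E[r]=0$.

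Next, use that for jointly Gaussian vectors zero cross-covariance implies independence. So $r$ is independent of $x_\beta$, and $r\sim\mathcal N(0,\Sigma_{\mathrm{cond},\alpha\alpha})$ regardless of the conditioning. This step needs care when $\Sigma_{\mathrm{cond},\alpha\alpha}$ is singular, since densities may not exist. I would therefore argue through characteristic functions: the joint characteristic function of $(r,x_\beta)$ factorizes because the cross block of its covariance vanishes. This avoids any need for invertibility of the full $\Sigma$.

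Given $x_\beta=\bar\beta$, write
\[
x_\alpha = \mu_\alpha+\Sigma_{\alpha\beta}\Sigma_{\beta\beta}^{-1}(\bar\beta-\mu_\beta) + r .
\]
By independence, the regular conditional distribution of $r$ given $x_\beta$ is its marginal. So $x_\alpha\mid x_\beta=\bar\beta\sim\mathcal N(\mu_{\mathrm{cond},\alpha},\Sigma_{\mathrm{cond},\alpha\alpha})$.

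Finally, stack this with the degenerate component $x_\beta=\bar\beta$, which is a point mass with zero covariance and zero cross-covariance with $x_\alpha$. This gives the full vector $x\mid x_\beta=\bar\beta\sim\mathcal N(\mu_{\mathrm{cond}},\operatorname{blkdiag}(\Sigma_{\mathrm{cond},\alpha\alpha},0))$, as stated.

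The main obstacle is making the conditioning on the measure-zero event $\{x_\beta=\bar\beta\}$ rigorous when $\Sigma$ is only positive semidefinite. I would handle it with regular conditional distributions. The independence decomposition $x_\alpha=g(x_\beta)+r$, with $g$ measurable and $r$ independent of $x_\beta$, yields the conditional law directly for almost every $\bar\beta$. The continuity of the resulting Gaussian kernel in $\bar\beta$ then justifies taking this version for every $\bar\beta$.
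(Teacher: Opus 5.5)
Your proof is correct and complete. The paper gives no proof of this lemma; it is imported from \cite{anderson2005optimal}, so the comparison is with the usual textbook derivation rather than with an argument in the paper.

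That textbook derivation works with densities. One writes $p(x_\alpha\mid x_\beta)=p(x_\alpha,x_\beta)/p(x_\beta)$, inverts $\Sigma$ blockwise through the Schur complement, and completes the square. This route is shorter but needs $\Sigma\succ0$. Because $\Sigma_{\beta\beta}\succ0$, $\Sigma$ is singular exactly when $\Sigma_{\mathrm{cond},\alpha\alpha}$ is singular. So the density argument misses precisely the degenerate cases that the lemma's hypothesis allows.

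Your decorrelation argument (residual $r$, vanishing cross-covariance, independence via factorization of the characteristic function, then $x_\alpha=g(x_\beta)+r$) uses only $\Sigma_{\beta\beta}\succ0$. It therefore proves the statement as written, and it yields the degenerate full-vector law with covariance $\operatorname{blkdiag}(\Sigma_{\mathrm{cond},\alpha\alpha},0)$ directly.

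The measure-theoretic worry you raise is already settled by your own construction. By the substitution property for independent variables, the kernel $\bar\beta\mapsto$ law of $(g(\bar\beta)+r,\bar\beta)$ is defined for every $\bar\beta$ and is a regular conditional distribution. It is also the unique weakly continuous version, because $\mathcal N(\mu_\beta,\Sigma_{\beta\beta})$ has full support.

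In the paper's only application, the proof of Lemma~\ref{lem:generalized_constraint_absorption}, the joint covariance $\Sigma_q=H^{-1}(H^{-1})^\top$ is nonsingular, so the density route would suffice there. Your version covers the full generality the lemma actually states.
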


\section{Affine Transformations and Information Fusion of Probabilistic Zonotope }

% In the subsequent uncertainty-reduction steps, we repeatedly impose affine equalities and combine multiple probabilistic set descriptions while keeping computations tractable. 
% This section establishes the corresponding calculus for probabilistic zonotopes: how affine constraints can be absorbed into an equivalent unconstrained representation, and how such representations can be fused to contract uncertainty.

% {This section develops results for:
% 1) transforming a CPZ into an equivalent PZ
% by incorporating affine constraints (Lemma~\ref{lem:joint_eq_mixed_complete}), resulting in a Gaussian component with
% reduced covariance; and
% 2) integrating information from a PZ and a CPZ, yielding a PZ with reduced overall uncertainty (Lemma~\ref{lem:support_trunc_domination}).
%  These results will be used in the subsequent sections to integrate prior
% knowledge on the system parameters (Assumption~\ref{PK}) with data and to refine disturbance descriptions by removing
% realizations that are inconsistent with either the data or the prior knowledge.
% The resulting refined disturbance sets are then leveraged in the main results section to obtain tighter
% reachable sets.}

 This section establishes the 
corresponding calculus for uncertainty reduction by developing results for: 1) transforming a 
CPZ into an equivalent CPZ with no stochastic element in the equality constraint (Lemma~\ref{lem:generalized_constraint_absorption}), yielding a representation 
with reduced stochastic covariance and shrunk deterministic generator 
range (Lemma \ref{lem:uncertainty_reduction}); and 3) the fusion of multiple CPZs 
becomes a CPZ with  reduced overall uncertainty (Lemma~\ref{lem:exact_cpz_intersection}). Together, these results  provide the algebraic foundation in the subsequent sections to progressively refine  disturbance sets, ultimately 
yielding tighter reachable-set overapproximations.

\begin{lemma}
\label{lem:generalized_constraint_absorption}
Consider the CPZ
\begin{align}
\mathcal Z_{\mathrm {CP}}
=
\Bigl\{
&z=c+G_d\alpha+G_s\nu, \, A_d\alpha+A_s\nu=b:
\nonumber\\[-1mm] \, \,
&\alpha\in[-1,1]^{m_d}, \, \nu\sim\mathcal N(0,I_{m_s})
\Bigr\},
\end{align}
where \(c\in\mathbb R^n\), \(G_d\in\mathbb R^{n\times m_d}\),
\(G_s\in\mathbb R^{n\times m_s}\), \(A_d\in\mathbb R^{p\times m_d}\),
\(A_s\in\mathbb R^{p\times m_s}\), and \(b\in\mathbb R^p\). Let
\(\operatorname{rank}(A_s)=r\le p<m_s\), and let
\(U=[U_r\;\;U_0]\in\mathbb R^{p\times p}\) be orthogonal, where
\(U_r\in\mathbb R^{p\times r}\) spans \(\operatorname{Im}(A_s)\) and
\(U_0\in\mathbb R^{p\times(p-r)}\) spans \(\ker(A_s^\top)\). Let \(K\in\mathbb R^{m_s\times(m_s-r)}\) be an orthonormal basis for
\(\ker(A_s)\). Define
\begin{align}
\tilde A_d:=U_0^\top A_d,
\qquad
\tilde b:=U_0^\top b .
\label{eq:tilde_Ad_b_def}
\end{align}
If \(\mathcal Z_{\mathrm{CP}}\) is nonempty, then
\(\mathcal Z_{\mathrm{CP}}=\widehat{\mathcal Z}_{\mathrm{CP}}\), where
\begin{align}
\widehat{\mathcal Z}_{\mathrm{CP}}
=
\Bigl\{
&z=c_c+G_{c,d}\alpha+G_{c,s}\zeta, \,\, \tilde A_d\alpha=\tilde b,
\nonumber\\[-1mm]
&\alpha\in[-1,1]^{m_d},\;
\zeta\sim\mathcal N(0,I_{m_s-r})
\Bigr\},
\label{eq:absorbed_cpz_representation}
\end{align}
with $c_c:=c+G_sA_s^{\dagger}b$, $G_{c,d}:=G_d-G_sA_s^{\dagger}A_d$, and
$G_{c,s}:=G_sK\in\mathbb R^{n\times(m_s-r)}$.
\end{lemma}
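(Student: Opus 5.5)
The plan is to split the constraint along the orthogonal decomposition $\mathbb R^p=\operatorname{Im}(A_s)\oplus\ker(A_s^\top)$ supplied by $U=[U_r\;U_0]$, and to treat the two parts separately. Since $U$ is orthogonal, $A_d\alpha+A_s\nu=b$ is equivalent to the pair $U_r^\top(A_d\alpha+A_s\nu-b)=0$ and $U_0^\top(A_d\alpha+A_s\nu-b)=0$. Because $U_0^\top A_s=0$, the second equation reads $\tilde A_d\alpha=\tilde b$ with $\tilde A_d,\tilde b$ as in \eqref{eq:tilde_Ad_b_def}. This is exactly the purely deterministic constraint retained in $\widehat{\mathcal Z}_{\mathrm{CP}}$.

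The first equation is equivalent to $A_s\nu=P_{\mathrm{Im}(A_s)}(b-A_d\alpha)$, where $P_{\mathrm{Im}(A_s)}=U_rU_r^\top=A_sA_s^\dagger$. For any fixed $\alpha$, the right-hand side lies in $\operatorname{Im}(A_s)$, so this linear system in $\nu$ is always solvable. Its solution set is the affine subspace
\[
\nu=A_s^\dagger(b-A_d\alpha)+K\zeta,\qquad \zeta\in\mathbb R^{m_s-r},
\]
using $A_s^\dagger A_sA_s^\dagger=A_s^\dagger$ and $\ker(A_s)=\operatorname{Im}(K)$. Substituting into $z=c+G_d\alpha+G_s\nu$ gives
\[
z=c+G_sA_s^\dagger b+(G_d-G_sA_s^\dagger A_d)\alpha+G_sK\zeta=c_c+G_{c,d}\alpha+G_{c,s}\zeta .
\]
Hence the admissible pairs $(\alpha,\nu)$ of $\mathcal Z_{\mathrm{CP}}$ correspond bijectively to the pairs $(\alpha,\zeta)$ with $\alpha\in[-1,1]^{m_d}$ and $\tilde A_d\alpha=\tilde b$. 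They produce the same $z$, which establishes the set-level (support) equality of the two descriptions. Nonemptiness is used only to guarantee that the feasible $\alpha$-set $\{\alpha\in[-1,1]^{m_d}:\tilde A_d\alpha=\tilde b\}$ is nonempty, so that both sides are genuine objects rather than the degenerate empty case.

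The main obstacle is the probabilistic part: showing that $\zeta\sim\mathcal N(0,I_{m_s-r})$ is the right law. I would argue via Lemma~\ref{lem:gaussian_conditioning}. Complete $K$ to an orthogonal matrix $[K\;K_\perp]$ with $\operatorname{Im}(K_\perp)=\operatorname{Im}(A_s^\top)$, and write $\nu=K\zeta+K_\perp\eta$. By rotation invariance of $\mathcal N(0,I_{m_s})$, the coordinates $\zeta$ and $\eta$ are independent standard Gaussians. The constraint fixes only $\eta=K_\perp^\top A_s^\dagger(b-A_d\alpha)$, i.e.\ it conditions on the $x_\beta$-block.

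Apply Lemma~\ref{lem:gaussian_conditioning} with $x_\alpha=\zeta$ and $x_\beta=\eta$. Since $\Sigma_{\alpha\beta}=0$ and $\Sigma_{\beta\beta}=I_r\succ0$, it gives $\zeta\mid\eta\sim\mathcal N(0,I_{m_s-r})$, so the conditioned law is unchanged. Moreover $K_\perp\eta=A_s^\dagger(b-A_d\alpha)$, because $A_s^\dagger(\cdot)$ already lies in $\operatorname{Im}(A_s^\top)$. Thus, conditioned on the constraint, $\nu$ has exactly the law $A_s^\dagger(b-A_d\alpha)+K\zeta$ with $\zeta$ standard Gaussian, and the distributional equality follows. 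I would note that the hypothesis $r<m_s$ ensures $K$ is nontrivial, so a residual stochastic component remains.
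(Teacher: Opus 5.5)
Your proof is correct and follows essentially the same route as the paper: the $U=[U_r\;U_0]$ split isolating the deterministic residual $\tilde A_d\alpha=\tilde b$, the affine parametrization $\nu=A_s^{\dagger}(b-A_d\alpha)+K\zeta$, and Lemma~\ref{lem:gaussian_conditioning} applied in null-space/range-space coordinates with zero cross-covariance. The differences are cosmetic: you use an orthonormal completion $[K\;K_\perp]$, which gives $\Sigma_{\beta\beta}=I_r$, instead of the paper's non-orthogonal $H=[K\;\widetilde A_s^\top]$ with $\Sigma_{q,\beta\beta}=(\widetilde A_s\widetilde A_s^\top)^{-1}$, and you obtain both set inclusions at once from the bijection of admissible latent pairs rather than proving them separately.
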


\begin{proof}
For fixed \(\alpha\), the affine constraint
\(A_d\alpha+A_s\nu=b\) requires
\(
A_s\nu=b-A_d\alpha=:y(\alpha).
\label{eq:y_alpha_def}
\)
Pre-multiplying by \(U^\top=[U_r^\top;\;U_0^\top]\) gives
\begin{equation}
\begin{bmatrix}
U_r^\top A_s\\
U_0^\top A_s
\end{bmatrix}\nu
=
\begin{bmatrix}
U_r^\top y(\alpha)\\
U_0^\top y(\alpha)
\end{bmatrix}.
\end{equation}
Since the columns of \(U_0\) span \(\ker(A_s^\top)\), one has
\(U_0^\top A_s=0\). Hence, the lower block imposes the purely deterministic
condition
\(
U_0^\top(b-A_d\alpha)=0,
\)
or equivalently
\begin{align}
\tilde A_d\alpha=\tilde b .
\label{eq:det_residual_constraint}
\end{align}
Thus, if \(\alpha\) violates \eqref{eq:det_residual_constraint}, then no
\(\nu\) can satisfy the original affine constraint. For every
\(\alpha\) satisfying \eqref{eq:det_residual_constraint}, only the upper
block remains
\begin{align}
\widetilde A_s\nu
:=
(U_r^\top A_s)\nu
=
U_r^\top y(\alpha)
=: \tilde y(\alpha),
\label{eq:upper_block}
\end{align}
where \(\widetilde A_s\in\mathbb R^{r\times m_s}\) is full row rank. Since \(\ker(\widetilde A_s)=\ker(A_s)\), the basis \(K\) satisfies
\(\widetilde A_sK=0\). Define
$
H:=[K\;\;\widetilde A_s^\top]\in\mathbb R^{m_s\times m_s}$.
The columns of \(K\) span \(\ker(\widetilde A_s)\), while the columns of
\(\widetilde A_s^\top\) span \(\operatorname{Im}(\widetilde A_s^\top)\);
these subspaces are orthogonal complements in \(\mathbb R^{m_s}\). Hence,
\(H\) is invertible. Let \(q=H^{-1}\nu\), where
\begin{align*}
H^{-1}
=
\begin{bmatrix}
K^\top\\
(\widetilde A_s\widetilde A_s^\top)^{-1}\widetilde A_s
\end{bmatrix}.
\end{align*}
Partition \(q^\top=[q_\alpha^\top\;q_\beta^\top]\), with
\(q_\alpha\in\mathbb R^{m_s-r}\) and \(q_\beta\in\mathbb R^r\). Since
\(\nu\sim\mathcal N(0,I_{m_s})\), we have
\(q\sim\mathcal N(0,\Sigma_q)\), where
\begin{align}
\Sigma_q
&=
H^{-1}(H^{-1})^\top
=
\operatorname{blkdiag}
\bigl(\Sigma_{q,\alpha\alpha},\Sigma_{q,\beta\beta}\bigr),
\nonumber\\
\Sigma_{q,\alpha\alpha}
&=
K^\top K
=
I_{m_s-r},
\qquad
\Sigma_{q,\beta\beta}
=
(\widetilde A_s\widetilde A_s^\top)^{-1}
.
\end{align}
Therefore, \(q_\alpha\) and \(q_\beta\) are \emph{a priori} independent. Substituting \(\nu=Hq=[K\;\;\widetilde A_s^\top]q\) into
\eqref{eq:upper_block} gives
$[\,0\;\;\widetilde A_s\widetilde A_s^\top\,]
\begin{bmatrix}
q_\alpha\\
q_\beta
\end{bmatrix}
=
\tilde y(\alpha)$,
and therefore $q_\beta
=
(\widetilde A_s\widetilde A_s^\top)^{-1}\tilde y(\alpha)$.
% Thus, the range-space coordinate is fixed by the equality constraint, while the null-space coordinate remains free.

Applying Lemma~\ref{lem:gaussian_conditioning} to
\(q=[q_\alpha^\top\;q_\beta^\top]^\top\), conditioned on
\(
\bar\beta:=
(\widetilde A_s\widetilde A_s^\top)^{-1}\tilde y(\alpha),
\)
yields
\(
\mu_{\mathrm{cond},\alpha}
=
0+
\Sigma_{q,\beta\beta}^{-1}\bar\beta
=
0, \,
\Sigma_{\mathrm{cond},\alpha\alpha}
=
I_{m_s-r}
-
\Sigma_{q,\beta\beta}^{-1}
=
I_{m_s-r}
\), since $\Sigma_{q,\alpha\beta}
=
K^\top\widetilde A_s^\top
(\widetilde A_s\widetilde A_s^\top)^{-1}
=
0$.
Hence,
\begin{align}
q\mid q_\beta=\bar\beta
\sim
\mathcal N(\mu_{\mathrm{cond}},\Sigma_{\mathrm{cond}}),
\end{align}
where
\begin{align*}
\mu_{\mathrm{cond}}
=
\begin{bmatrix}
0\\
(\widetilde A_s\widetilde A_s^\top)^{-1}\tilde y(\alpha)
\end{bmatrix},
\,\,
\Sigma_{\mathrm{cond}}
=
\operatorname{blkdiag}(I_{m_s-r},0).
\end{align*}
Since \(\nu=Hq\), one has
\begin{align}
\mathbb E[\nu\mid q_\beta]
&=
\widetilde A_s^\top
(\widetilde A_s\widetilde A_s^\top)^{-1}\tilde y(\alpha)
=
A_s^{\dagger}y(\alpha),
\label{eq:nu_mean}
\\
\operatorname{Var}(\nu\mid q_\beta)
&=
KK^\top .
\label{eq:nu_var}
\end{align}
Therefore,
\(
\nu\mid A_s\nu=y(\alpha)
\sim
\mathcal N\bigl(A_s^{\dagger}(b-A_d\alpha),KK^\top\bigr).
\)
Equivalently, with
\(\zeta:=q_\alpha\sim\mathcal N(0,I_{m_s-r})\), one may write
\(
\nu=A_s^+(b-A_d\alpha)+K\zeta .
\) Substituting this into
\(z=c+G_d\alpha+G_s\nu\) gives
\begin{align}
z
&=
c+G_d\alpha
+
G_sA_s^{\dagger}(b-A_d\alpha)
+
G_sK\zeta
\nonumber\\
&=
\underbrace{(c+G_sA_s^{\dagger}b)}_{c_c}
+
\underbrace{(G_d-G_sA_s^{\dagger}A_d)}_{G_{c,d}}\alpha
+
\underbrace{G_sK}_{G_{c,s}}\zeta .
\label{eq:z_decomposed}
\end{align}
Together, with the residual deterministic constraint
\(\tilde A_d\alpha=\tilde b\), this shows that every element of
\(\mathcal Z_{\mathrm{CP}}\) belongs to
\(\widehat{\mathcal Z}_{\mathrm{CP}}\).

Conversely, take any \(z\in\widehat{\mathcal Z}_{\mathrm{CP}}\). Then
there exist \(\alpha\in[-1,1]^{m_d}\), satisfying
\(\tilde A_d\alpha=\tilde b\), and
\(\zeta\sim\mathcal N(0,I_{m_s-r})\), such that
\(
z=c_c+G_{c,d}\alpha+G_{c,s}\zeta .
\)
Define
\(
\nu:=A_s^{\dagger}(b-A_d\alpha)+K\zeta .
\)
Since \(\tilde A_d\alpha=\tilde b\), one has
\(U_0^\top(b-A_d\alpha)=0\), which implies
\(b-A_d\alpha\in\operatorname{Im}(A_s)\). Hence,
\(
A_sA_s^{\dagger}(b-A_d\alpha)=b-A_d\alpha .
\)
Also \(A_sK=0\). Therefore,
\(
A_s\nu
=
A_sA_s^{\dagger}(b-A_d\alpha)+A_sK\zeta
=
b-A_d\alpha,
\)
and consequently \(A_d\alpha+A_s\nu=b\). Substituting this \(\nu\) into the
original representation gives
\(
c+G_d\alpha+G_s\nu
=
c_c+G_{c,d}\alpha+G_{c,s}\zeta
=
z .
\)
Thus \(z\in\mathcal Z_{\mathrm{CP}}\). Hence
\(\widehat{\mathcal Z}_{\mathrm{CP}}\subseteq\mathcal Z_{\mathrm{CP}}\),
and combined with the first inclusion,
\(
\mathcal Z_{\mathrm{CP}}=\widehat{\mathcal Z}_{\mathrm{CP}}.
\)
\end{proof}

\begin{lemma}
\label{lem:uncertainty_reduction}
Let
\begin{align}
\mathcal Z_{\mathrm P}
=
\Bigl\{
&z=c+G_d\alpha+G_s\nu,
\nonumber\\[-1mm]
&\alpha\in[-1,1]^{m_d},\;
\nu\sim\mathcal N(0,I_{m_s})
\Bigr\},
\label{eq:unconstrained_pz_for_reduction}
\end{align}
and define the corresponding CPZ
\begin{align}
\mathcal Z_{\mathrm{CP}}
=
\Bigl\{
z\in\mathcal Z_{\mathrm P}
\; ;\;
A_d\alpha+A_s\nu=b
\Bigr\}.
\label{eq:constrained_pz_for_reduction}
\end{align}
Let \(\widehat{\mathcal Z}_{\mathrm{CP}}\) be the representation of \(\mathcal Z_{\mathrm{CP}}\) via
Lemma~\ref{lem:generalized_constraint_absorption}. Then,
\begin{align}
\widehat{\mathcal Z}_{\mathrm{CP}}
=
\mathcal Z_{\mathrm{CP}}
\subseteq
\mathcal Z_{\mathrm P}.
\label{eq:cpz_subset_original_pz}
\end{align}
Moreover, the stochastic covariance is reduced as
\begin{align}
\Sigma_{\mathrm{orig}}-\Sigma_c
=
G_sP_sG_s^\top\succeq0,
\label{eq:covariance_reduction_revised}
\end{align}
where
\(\Sigma_{\mathrm{orig}}\)
is the original stochastic covariance of $\mathcal Z_{\mathrm P}$,
\(\Sigma_c\)
is the covariance after constraint absorption of $\widehat{\mathcal Z}_{\mathrm{CP}}$,
\(P_s:=I_{m_s}-NN^\top\), and \(\operatorname{rank}(P_s)=r\).
\end{lemma}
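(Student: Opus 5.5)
The plan has two parts: the set inclusion and the covariance identity. Throughout, $N$ in the statement is read as the orthonormal basis $K$ of $\ker(A_s)$ from Lemma~\ref{lem:generalized_constraint_absorption}.

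\textbf{Set equality and inclusion.} The equality $\widehat{\mathcal Z}_{\mathrm{CP}}=\mathcal Z_{\mathrm{CP}}$ follows directly from Lemma~\ref{lem:generalized_constraint_absorption}, assuming $\mathcal Z_{\mathrm{CP}}$ is nonempty; otherwise the inclusion is trivial. For $\mathcal Z_{\mathrm{CP}}\subseteq\mathcal Z_{\mathrm P}$, argue at the level of latent variables. Any $z\in\mathcal Z_{\mathrm{CP}}$ has the form $z=c+G_d\alpha+G_s\nu$ with $\alpha\in[-1,1]^{m_d}$ and $\nu$ in the support of $\mathcal N(0,I_{m_s})$, which is all of $\mathbb R^{m_s}$. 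The extra constraint $A_d\alpha+A_s\nu=b$ only restricts the admissible pairs $(\alpha,\nu)$. Hence the same pair certifies $z\in\mathcal Z_{\mathrm P}$.

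This inclusion is really about supports and realizations, since the conditioned Gaussian has a different law. I would state explicitly that set membership is understood as realizability under the latent parametrization. That is the sense in which both Definitions~\ref{def:mixed_zonotope} and~\ref{def:cmixed_zonotope} are used.

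\textbf{Covariance reduction.} The stochastic covariance of $\mathcal Z_{\mathrm P}$ is $\Sigma_{\mathrm{orig}}=G_sG_s^\top$, because $\nu\sim\mathcal N(0,I_{m_s})$. After absorption, the stochastic generator is $G_{c,s}=G_sK$ with $\zeta\sim\mathcal N(0,I_{m_s-r})$. Equivalently, by \eqref{eq:nu_var}, $\operatorname{Var}(\nu\mid A_s\nu=y(\alpha))=KK^\top$, so $\Sigma_c=G_sKK^\top G_s^\top$. Subtracting gives
\[
\Sigma_{\mathrm{orig}}-\Sigma_c=G_s(I_{m_s}-KK^\top)G_s^\top=G_sP_sG_s^\top .
\]

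It remains to show $P_s\succeq0$ with rank $r$. Since $K$ has orthonormal columns, $KK^\top$ is the orthogonal projector onto $\ker(A_s)$. Therefore $P_s$ is the orthogonal projector onto $\ker(A_s)^\perp=\operatorname{Im}(A_s^\top)$. Concretely, $P_s=A_s^\dagger A_s$, which can be checked through $H=[K\;\widetilde A_s^\top]$ from the proof of Lemma~\ref{lem:generalized_constraint_absorption}. An orthogonal projector is symmetric and idempotent, so $P_s=P_s^\top P_s\succeq0$. Its rank equals $\dim\operatorname{Im}(A_s^\top)=\operatorname{rank}(A_s)=r$. Then for any $v$, $v^\top G_sP_sG_s^\top v=\|P_sG_s^\top v\|^2\ge0$.

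\textbf{Expected obstacles.} The only delicate points are notational: identifying $N$ with $K$, and making precise which covariance is meant for $\Sigma_c$, since the conditional covariance does not depend on $\alpha$. The algebra itself is routine.
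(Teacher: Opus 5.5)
Your proposal is correct and takes essentially the same route as the paper. The inclusion holds because the extra equality only restricts the same latent pair $(\alpha,\nu)$, the equality $\widehat{\mathcal Z}_{\mathrm{CP}}=\mathcal Z_{\mathrm{CP}}$ is inherited from Lemma~\ref{lem:generalized_constraint_absorption}, and $\Sigma_{\mathrm{orig}}-\Sigma_c=G_s(I_{m_s}-NN^\top)G_s^\top$, where $I_{m_s}-NN^\top$ is the rank-$r$ orthogonal projector onto $\operatorname{Im}(A_s^\top)$. Your added clarifications (identifying $N$ with $K$, reading inclusion as realizability of latent pairs, and the explicit bound $v^\top G_sP_sG_s^\top v=\|P_sG_s^\top v\|^2\ge0$) are sound and make explicit what the paper leaves implicit.
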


\begin{proof}
By construction, \(\mathcal Z_{\mathrm{CP}}\) is obtained from
\(\mathcal Z_{\mathrm P}\) by imposing the additional affine equality
\(A_d\alpha+A_s\nu=b\) on the same latent variables. Therefore, every element
of \(\mathcal Z_{\mathrm{CP}}\) is also an element of
\(\mathcal Z_{\mathrm P}\), and hence
\(
\mathcal Z_{\mathrm{CP}}\subseteq\mathcal Z_{\mathrm P}.
\)
Lemma~\ref{lem:generalized_constraint_absorption} gives an equivalent
representation of \(\mathcal Z_{\mathcal{CP}}\), namely
\(
\widehat{\mathcal Z}_{\mathrm{CP}}
=
\mathcal Z_{\mathrm{CP}}.
\)
Combining the two relations gives
\(
\widehat{\mathcal Z}_{\mathrm{CP}}
=
\mathcal Z_{\mathrm{CP}}
\subseteq
\mathcal Z_{\mathrm P}.
\)

It remains to prove the covariance reduction. In the original unconstrained
PZ, the stochastic component is \(G_s\nu\), where
\(\nu\sim\mathcal N(0,I_{m_s})\). Hence,
\(
\Sigma_{\mathrm{orig}}=G_sG_s^\top .
\)
By Lemma~\ref{lem:generalized_constraint_absorption}, imposing the stochastic
part of the equality yields
\(
\nu=A_s^+(b-A_d\alpha)+N\zeta,
\,
\zeta\sim\mathcal N(0,I_{m_s-r}),
\)
where \(N\) is an orthonormal basis for \(\ker(A_s)\). Therefore, the
remaining stochastic component is \(G_sN\zeta\), whose covariance is
\(
\Sigma_c=G_sNN^\top G_s^\top .
\)
Since \(N\) has orthonormal columns, \(NN^\top\) is the orthogonal projector
onto \(\ker(A_s)\). Thus,
\(
P_s:=I_{m_s}-NN^\top
\)
is the orthogonal projector onto \(\operatorname{Im}(A_s^\top)\), and
\(
\Sigma_{\mathrm{orig}}-\Sigma_c
=
G_sG_s^\top-G_sNN^\top G_s^\top
=
G_sP_sG_s^\top\succeq0.
\)
Finally, since \(\operatorname{rank}(A_s)=r\), the projector onto
\(\operatorname{Im}(A_s^\top)\) has rank \(r\), so
\(
\operatorname{rank}(P_s)=r.
\)
\end{proof}

\begin{lemma}
\label{lem:exact_cpz_intersection}
Let
\(\mathcal Z^{i}_{\mathcal{CP}}=\zono{c_i,G_d^{i},G_s^{i},A_d^{i},A_s^{i}}_{\mathcal{CP}}\),
\(i=1,2\), be two CPZs as in Definition~\ref{def:cmixed_zonotope}, with
constraints \(A_d^{i}\alpha_i+A_s^{i}\nu_i=b_i\). Then their intersection is
\emph{exactly} a CPZ,
\[
\mathcal Z^{1}_{\mathrm{CP}}\cap\mathcal Z^{2}_{\mathrm{CP}}
=\zono{\hat c,\hat G_d,\hat G_s,\hat A_d,\hat A_s}_{\mathrm{CP}},
\]
where, with \(\hat\alpha=[\alpha_1^\top\ \alpha_2^\top]^\top\) and
\(\hat\nu=[\nu_1^\top\ \nu_2^\top]^\top\),
\begin{align}
&\hat c=c_1,\quad
\hat G_d=[\,G_d^{1}\ \ 0\,],\quad
\hat G_s=[\,G_s^{1}\ \ 0\,],\nonumber\\
&\hat A_d=
\begin{bmatrix}A_d^{1}&0\\ 0&A_d^{2}\\ G_d^{1}&-G_d^{2}\end{bmatrix},\
\hat A_s=
\begin{bmatrix}A_s^{1}&0\\ 0&A_s^{2}\\ G_s^{1}&-G_s^{2}\end{bmatrix},\
\hat b=
\begin{bmatrix}b_1\\ b_2\\ c_2-c_1\end{bmatrix}.\nonumber
\end{align}
The representation is exact: \(z\in\mathcal Z^{1}_{\mathrm{CP}}\cap\mathcal
Z^{2}_{\mathrm{CP}}\) if and only if \(z=\hat c+\hat G_d\hat\alpha+\hat
G_s\hat\nu\) for some \((\hat\alpha,\hat\nu)\) with
\(\hat A_d\hat\alpha+\hat A_s\hat\nu=\hat b\),
\(\hat\alpha\in[-1,1]^{m_d^{1}+m_d^{2}}\) and \(\hat\nu\sim\mathcal N(0,I)\).
\end{lemma}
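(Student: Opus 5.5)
The plan is a double-inclusion argument at the level of latent variables, following the standard constrained-zonotope intersection proof. Throughout, membership in a CPZ means that some admissible latent pair exists: $\alpha$ lies in the unit box, $\nu$ ranges over the Gaussian latent space, and the equality holds. Equality of sets is then equality of the sets of points generated this way. The Gaussian factor only attaches a law to the latent variable and does not change which values are admissible. For the stacked latent $\hat\nu=[\nu_1^\top\ \nu_2^\top]^\top$, the prior is the product $\mathcal N(0,I_{m_s^1})\otimes\mathcal N(0,I_{m_s^2})=\mathcal N(0,I)$. The two CPZs carry independent latent factors, so this stacking is legitimate. I will state this convention explicitly first, because it is the only place where the stochastic part enters.

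\emph{Step 1 (intersection $\subseteq$ CPZ).} Take $z\in\mathcal Z^1_{\mathrm{CP}}\cap\mathcal Z^2_{\mathrm{CP}}$. Then there exist $(\alpha_1,\nu_1)$ with $z=c_1+G_d^1\alpha_1+G_s^1\nu_1$ and $A_d^1\alpha_1+A_s^1\nu_1=b_1$. Likewise there exist $(\alpha_2,\nu_2)$ with $z=c_2+G_d^2\alpha_2+G_s^2\nu_2$ and $A_d^2\alpha_2+A_s^2\nu_2=b_2$. Subtracting the two expressions for $z$ gives
\[
G_d^1\alpha_1-G_d^2\alpha_2+G_s^1\nu_1-G_s^2\nu_2=c_2-c_1,
\]
which is exactly the third block row of $\hat A_d\hat\alpha+\hat A_s\hat\nu=\hat b$. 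The first two block rows are the original constraints. Moreover $\hat\alpha\in[-1,1]^{m_d^1+m_d^2}$, and $z=\hat c+\hat G_d\hat\alpha+\hat G_s\hat\nu$ because the zero blocks kill $(\alpha_2,\nu_2)$.

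\emph{Step 2 (CPZ $\subseteq$ intersection).} Conversely, let $z=c_1+G_d^1\alpha_1+G_s^1\nu_1$ with $(\hat\alpha,\hat\nu)$ satisfying all three block rows. The first row shows that $z\in\mathcal Z^1_{\mathrm{CP}}$. The third row lets me rewrite $z=c_2+G_d^2\alpha_2+G_s^2\nu_2$, and together with the second row this shows $z\in\mathcal Z^2_{\mathrm{CP}}$. The box constraints on $\alpha_1$ and $\alpha_2$ follow from the box constraint on $\hat\alpha$.

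\emph{Main obstacle.} The set-theoretic part is routine. The delicate point is justifying that ``$\hat\nu\sim\mathcal N(0,I)$'' is the right stochastic description. This needs independence of $\nu_1$ and $\nu_2$, and it needs the reading that the equality acts as a conditioning, in the sense of Lemma~\ref{lem:generalized_constraint_absorption}, rather than as an almost-sure identity between independent Gaussians. I would handle this by stating the lemma as an identity of latent-constraint descriptions. Any downstream probabilistic interpretation is then obtained by applying Lemma~\ref{lem:generalized_constraint_absorption} to the fused CPZ, which conditions the joint Gaussian $\hat\nu$ on the stacked constraint. I would also remark that, by Lemma~\ref{lem:uncertainty_reduction}, this conditioning can only reduce covariance relative to $\mathcal Z^1_{\mathrm{CP}}$ alone.
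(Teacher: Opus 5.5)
Your double-inclusion argument is correct and is essentially the paper's proof. Both stack the latent pairs, read the first two block rows as the source constraints and the third as the equality of the two representations of $z$, and reverse this for the converse. Your extra remarks go beyond what the paper states: it treats CPZ membership purely existentially over the latent variables and does not discuss independence of $\nu_1,\nu_2$ or conditioning. These clarifications are harmless and do not change the argument.
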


% \begin{proof}
% This extends the constrained-zonotope intersection of~\cite{scott2016constrained}
% to the probabilistic case. A point \(z\) belongs to both sets if and only if
% there exist admissible pairs \((\alpha_1,\nu_1)\) and \((\alpha_2,\nu_2)\) such
% that
% \(z=c_1+G_d^{1}\alpha_1+G_s^{1}\nu_1=c_2+G_d^{2}\alpha_2+G_s^{2}\nu_2\). The first
% equality represents \(z\) through \((\hat G_d,\hat G_s)\); the two source
% constraints, together with the coupling
% \(G_d^{1}\alpha_1-G_d^{2}\alpha_2+G_s^{1}\nu_1-G_s^{2}\nu_2=c_2-c_1\), are exactly
% \(\hat A_d\hat\alpha+\hat A_s\hat\nu=\hat b\). No relaxation is introduced, hence
% the intersection is represented exactly.
% \end{proof}

\begin{proof}
This extends the constrained-zonotope intersection of~\cite{scott2016constrained}
to the probabilistic case. We prove the two inclusions. First, let
\(z\in\mathcal Z^{1}_{\mathrm{CP}}\cap\mathcal Z^{2}_{\mathrm{CP}}\).
Then, there exist admissible coefficients
\(\alpha_1\in[-1,1]^{m_d^1}\), \(\alpha_2\in[-1,1]^{m_d^2}\), and Gaussian
latent variables \(\nu_1,\nu_2\), satisfying the two original constraints
\(
A_d^1\alpha_1+A_s^1\nu_1=b_1,
\,
A_d^2\alpha_2+A_s^2\nu_2=b_2,
\)
such that
\[
z=c_1+G_d^1\alpha_1+G_s^1\nu_1
=
c_2+G_d^2\alpha_2+G_s^2\nu_2 .
\]
Define
\(
\hat\alpha=\operatorname{col}(\alpha_1,\alpha_2),
\,
\hat\nu=\operatorname{col}(\nu_1,\nu_2).
\)
Then, \(z\) can be represented using the first CPZ as
\[
z
=
c_1+
[\,G_d^1\;\;0\,]\hat\alpha
+
[\,G_s^1\;\;0\,]\hat\nu
=
\hat c+\hat G_d\hat\alpha+\hat G_s\hat\nu .
\]
Moreover, the two source constraints give the first two block rows of
\(\hat A_d\hat\alpha+\hat A_s\hat\nu=\hat b\). The equality of the two
representations gives
\(
G_d^1\alpha_1-G_d^2\alpha_2
+
G_s^1\nu_1-G_s^2\nu_2
=
c_2-c_1,
\)
which is exactly the third block row of
\(\hat A_d\hat\alpha+\hat A_s\hat\nu=\hat b\). Hence,
\(z\in\zono{\hat c,\hat G_d,\hat G_s,\hat A_d,\hat A_s,\hat b}_{\mathrm{CP}}\).

Conversely, let
\(
z
=
\hat c+\hat G_d\hat\alpha+\hat G_s\hat\nu
\)
for some
\(\hat\alpha=\operatorname{col}(\alpha_1,\alpha_2)\) and
\(\hat\nu=\operatorname{col}(\nu_1,\nu_2)\) satisfying
\(\hat A_d\hat\alpha+\hat A_s\hat\nu=\hat b\). By the first two block rows,
\(
A_d^1\alpha_1+A_s^1\nu_1=b_1,
\,
A_d^2\alpha_2+A_s^2\nu_2=b_2.
\)
Thus, the coefficient pairs are admissible for the two original CPZs. Also,
by the definition of \(\hat c,\hat G_d,\hat G_s\),
\(
z=c_1+G_d^1\alpha_1+G_s^1\nu_1 .
\)
The third block row gives
\(
G_d^1\alpha_1-G_d^2\alpha_2
+
G_s^1\nu_1-G_s^2\nu_2
=
c_2-c_1,
\)
and therefore
\(
c_1+G_d^1\alpha_1+G_s^1\nu_1
=
c_2+G_d^2\alpha_2+G_s^2\nu_2 .
\)
Hence, the same point \(z\) also belongs to
\(\mathcal Z^{2}_{\mathrm{CP}}\). Therefore,
\(z\in\mathcal Z^{1}_{\mathrm{CP}}\cap\mathcal Z^{2}_{\mathrm{CP}}\). The two inclusions prove equality. No relaxation or overapproximation is
introduced.
\end{proof}

\section{{Reachability Using Probabilistic Zonotopes}}

Consider the discrete-time system
\begin{align} \label{syst}
x_{k+1} = A^\star x_k+B^\star u_k + w_k,
\end{align}
where $A^\star$ and $B^\star$ are unknown systems' state-transition and input matrices, respectively, $x_k \in \mathbb{R}^n$, $u_k \in \mathbb{R}^m$, and $w_k$ is the disturbance. 

The following (possibly conservative) prior knowledge is assumed regarding the disturbance and the system parameters.

{\begin{assumption}
\label{PK}
Let $\theta^\star = [A^\star\ \ B^\star] \in \mathbb{R}^{n \times (n+m)}$. Then,
\(
% \label{eq:theta_mixed}
\mathrm{vec}(\theta^\star) \in \Theta
\), where
\begin{align}
% \label{ass:theta}
& \Theta
= 
\Big\{
\mathrm{vec}(\theta)=c_{\theta} +  G_{\theta,d}\alpha + G_{\theta,s}\nu : \nonumber \\ & \quad \quad \quad
\alpha_\theta\in[-1,1]^{m_{\theta_d}},
\,
\nu_\theta\sim\mathcal N(0,I_{m_{\theta_s}})
\Big\},
\end{align}
where
$c_{\theta} \in \mathbb{R}^{n(n+m)}$,
$G_{\theta,d} \in \mathbb{R}^{n(n+m)\times m_{\theta_d}}$,
$G_{\theta,s} \in \mathbb{R}^{n(n+m)\times m_{\theta_s}}$
are known prior parameters.
\end{assumption}}

\begin{assumption} \label{distass}
The disturbance $w_k$ belongs to the probabilistic zonotope $\mathcal W$ defined as
\begin{align} \label{Zdist}
& \mathcal W
= 
\Big\{
w=c+G_{d}\alpha+G_{s}\nu:\  \nonumber \\ & \quad \quad \quad
\alpha\in[-1,1]^{m_d},\ 
\nu\sim\mathcal N(0,I_{m_s})
\Big\},
\end{align}
where $w,c\in\mathbb R^{n_w}$,
$G_d\in\mathbb R^{n_w\times m_d}$,
$G_s\in\mathbb R^{n_w\times m_s}$,
$\alpha\in\mathbb R^{m_d}$,
and $\nu\in\mathbb R^{m_s}$.
\end{assumption}

For disturbances specifically, consider the decomposition $w_k = d_k + G_s\nu_k$,
where $d_k \in \mathcal{Z}(c_d, G_d)$ is a bounded deterministic term
and $\nu_k \sim \mathcal{N}(0, I_{m_s})$ is additive Gaussian noise.
Their sum is directly a PZ,
\begin{align}
  w_k &= d_k + G_s\nu_k
  \;\in\;
  \Bigl\{c_d + G_d\alpha + G_s\nu :
  \nonumber\\
  &\qquad \alpha \in [-1,1]^{m_d},\;
  \nu \sim \mathcal{N}(0, I_{m_s})\Bigr\}
  \;=\; \mathcal{Z}_{\mathrm{P}},
\end{align}
which is consistent with the structure of $\mathcal{W}$ in Assumption~\ref{distass}.

% Reachable sets are constructed by jointly leveraging two complementary components: 
% (i) a data-consistent set of system models, representing unknown but fixed realized  disturbances that are refined using probabilistic zonotopes to reduce epistemic 
% uncertainty, and (ii) an admissible disturbance set, representing  persistent aleatory uncertainty as per Assumption~\ref{distass}. While realized 
% uncertainty can be significantly reduced through data-driven refinement, aleatory uncertainty is inherently stochastic. However, we leverage realized sequences to probabilistically refine the conservative a priori aleatory uncertainty, driving the representation closer to the true underlying distribution.

Next, we extend standard approaches for reachable set computation to PZ disturbance sets and highlight their limitations in handling the two distinct sources of uncertainty considered in this paper.

Estimating the reachable set of a system with uncertain dynamics requires first characterizing the set of all system models consistent with the observed data. To this end, the following input-state trajectory of length \TC{$T$}  is collected:
\begin{align}
X_0 &= \begin{bmatrix} x_0 & x_1 & \cdots & x_{T-1} \end{bmatrix} \in \mathbb{R}^{n \times T}, \nonumber\\
X_1 &= \begin{bmatrix} x_1 & x_2 & \cdots & x_T \end{bmatrix} \in \mathbb{R}^{n \times T}, \nonumber\\
U_0 &= \begin{bmatrix} u_0 & u_1 & \cdots & u_{T-1} \end{bmatrix} \in \mathbb{R}^{m \times T}. \label{data}
\end{align}

\noindent Now, define the unmeasurable noise term by
\begin{align}\label{W}
W \;=\; \begin{bmatrix}w_0 & w_1 & \cdots & w_{T-1}\end{bmatrix}\in\mathbb{R}^{n\times T},
\end{align}

\begin{assumption}\label{PE}
The data matrix $X_0$ is full row rank (i.e., persistently exciting)
and $\mathrm{rank}(D_0)=n+m$, where
\begin{align} \label{D0}
    D_0 = \begin{bmatrix} X_0 \\[2pt] U_0 \end{bmatrix}\in\mathbb{R}^{(n+m)\times T}.
\end{align}
\end{assumption}

\noindent Assumption~\ref{PE} ensures that the recorded trajectory is sufficiently informative for the control design purpose. \vspace{6pt}

Lemma~\ref{lem:AB_mixed_short} provides a superset $\mathcal{S}_{AB}$ 
that contains all matrices $[A\;B]$ consistent with the measured 
trajectory and the prior disturbance band $\mathcal{W}$.

\begin{lemma}
\label{lem:AB_mixed_short}
Consider the system~\eqref{syst} and a measured trajectory~\eqref{data}
satisfying Assumption~\ref{PE}. Suppose each realized disturbance
\(w_k\), \(k=0,\ldots,T-1\), is described by an independent copy of the
same prior disturbance set \(\mathcal W\) in~\eqref{Zdist}. Let
\(G_d=[g_d^{(1)}\;\cdots\;g_d^{(m_d)}]\) and
\(G_s=[g_s^{(1)}\;\cdots\;g_s^{(m_s)}]\). Define
\(C_W:=c\mathbf 1_T^\top\). Let \(\gamma_{d,W}:=Tm_d\) and
\(\gamma_{s,W}:=Tm_s\). For \(i=1,\ldots,\gamma_{d,W}\), write
\(i=km_d+q\), with \(k=0,\ldots,T-1\) and \(q=1,\ldots,m_d\), and define
\begin{align}
G_{d,W}^{(i)}
:=
\bigl[
0_{n\times k}\;\;
g_d^{(q)}\;\;
0_{n\times(T-k-1)}
\bigr].
\label{eq:GdW_i_def}
\end{align}
Similarly, for \(j=1,\ldots,\gamma_{s,W}\), write
\(j=km_s+r\), with \(k=0,\ldots,T-1\) and \(r=1,\ldots,m_s\), and define
\begin{align}
G_{s,W}^{(j)}
:=
\bigl[
0_{n\times k}\;\;
g_s^{(r)}\;\;
0_{n\times(T-k-1)}
\bigr].
\label{eq:GsW_j_def}
\end{align}
Then, a superset of all data-consistent matrices \([A\;B]\) is
\begin{align}
\mathcal S_{AB}
=
&\Bigl\{
\theta
=
\Bigl(
X_1-C_W
-\sum_{i=1}^{\gamma_{d,W}}\alpha^{(i)}G_{d,W}^{(i)}
- 
\sum_{j=1}^{\gamma_{s,W}}\nu^{(j)}G_{s,W}^{(j)}
\Bigr) \nonumber\\ & D_0^\dagger:
\hspace{1.2em}
\alpha\in[-1,1]^{\gamma_{d,W}},\;
\nu\sim\mathcal N(0,I_{\gamma_{s,W}})
\Bigr\}.
\label{eq:AB_short}
\end{align}
\end{lemma}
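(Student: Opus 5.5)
The plan is to follow the standard data-driven argument of \cite{alanwar2023data}, adapted to the matrix probabilistic zonotope description of the stacked disturbance. First, stack the dynamics~\eqref{syst} over the recorded horizon. Column by column, this gives the exact identity
\begin{align*}
X_1 = [A^\star\;B^\star]\,D_0 + W,
\end{align*}
with \(W\) as in~\eqref{W}.

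Next, show that \(W\) lies in the MPZ with center \(C_W\) and generators \(G_{d,W}^{(i)},G_{s,W}^{(j)}\). Since each \(w_k\) is described by an independent copy of \(\mathcal W\), we can write \(w_k=c+G_d\alpha_k+G_s\nu_k\) with \(\alpha_k\in[-1,1]^{m_d}\) and \(\nu_k\sim\mathcal N(0,I_{m_s})\). Independence across \(k\) makes the stacked vector \(\nu=\operatorname{col}(\nu_0,\ldots,\nu_{T-1})\sim\mathcal N(0,I_{Tm_s})\), and \(\alpha=\operatorname{col}(\alpha_0,\ldots,\alpha_{T-1})\in[-1,1]^{Tm_d}\). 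The index relabeling \(i=km_d+q\), \(j=km_s+r\) places \(g_d^{(q)}\) and \(g_s^{(r)}\) in column \(k\). Hence
\begin{align*}
W=C_W+\sum_{i}\alpha^{(i)}G_{d,W}^{(i)}+\sum_j\nu^{(j)}G_{s,W}^{(j)},
\end{align*}
which is exactly the noise term subtracted in~\eqref{eq:AB_short}.

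Then solve for \(\theta\). Every data-consistent \(\theta=[A\;B]\) satisfies \(\theta D_0=X_1-W\) for some admissible \(W\) in that MPZ. By Assumption~\ref{PE}, \(D_0\) has full row rank \(n+m\), so \(D_0D_0^\dagger=I_{n+m}\). Right-multiplying by \(D_0^\dagger\) then gives
\begin{align*}
\theta=(X_1-W)D_0^\dagger,
\end{align*}
which has exactly the form in~\eqref{eq:AB_short}. Therefore every data-consistent \(\theta\), and in particular \(\theta^\star\), belongs to \(\mathcal S_{AB}\).

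The set is called a superset because the converse fails. For an arbitrary latent choice, \((X_1-W)D_0^\dagger D_0\) need not equal \(X_1-W\) when \(T>n+m\). As a result, \(\mathcal S_{AB}\) also contains matrices that do not reproduce the data exactly, and so it only overapproximates the data-consistent set.

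The main point requiring care is stochastic rather than algebraic. The Gaussian factors in the set-based statement are latent variables. "Data-consistent" must be read as existence of latent realizations \((\alpha,\nu)\) in the support, consistent with how PZ membership is used throughout the paper, for example in Lemma~\ref{lem:exact_cpz_intersection}. The independence hypothesis is needed precisely so that the stacked Gaussian factor has identity covariance, which makes the MPZ representation with \(\nu\sim\mathcal N(0,I_{\gamma_{s,W}})\) legitimate. I would state this interpretation explicitly. The remaining steps are routine linear algebra.
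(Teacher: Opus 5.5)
Your proposal is correct and is essentially the paper's proof. Stack the dynamics as $X_1=[A^\star\;B^\star]D_0+W$, place the generators of $\mathcal W$ column-wise to write $W$ in the MPZ, then right-multiply by $D_0^\dagger$ using $D_0D_0^\dagger=I_{n+m}$ from Assumption~\ref{PE}. Your extra remarks (independence giving the identity covariance of the stacked Gaussian factor, and why the set is only a superset) are not needed for the inclusion; note also that the superset need not be strict, since if $G_s$ has full row rank the Gaussian support already makes every $W$ admissible and the two sets coincide.
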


\begin{proof}

Using the matrix generators in~\eqref{eq:GdW_i_def} and
\eqref{eq:GsW_j_def}, the stacked disturbance matrix \(W=[w_0\;\cdots\;w_{T-1}]\) belongs to the MPZ
\begin{align}
\mathcal Z_{\mathrm{MP}}
=
\Bigl\{
Y
&=
C_W
+\sum_{i=1}^{\gamma_{d,W}}\alpha^{(i)}G_{d,W}^{(i)}
+\sum_{j=1}^{\gamma_{s,W}}\nu^{(j)}G_{s,W}^{(j)}:
\nonumber\\[-1mm]
&\hspace{1.2em}
\alpha\in[-1,1]^{\gamma_{d,W}},\;
\nu\sim\mathcal N(0,I_{\gamma_{s,W}})
\Bigr\}.
\label{eq:W_MPZ}
\end{align}
The measured data \eqref{data} satisfy
\(X_1=[A\;B]D_0+W\). Since \(\operatorname{rank}(D_0)=n+m\) by
Assumption~\ref{PE}, \(D_0D_0^\dagger=I_{n+m}\), and therefore
\([A\;B]=(X_1-W)D_0^\dagger\). Substituting the MPZ parametrization
of \(W\) from~\eqref{eq:W_MPZ} into this affine reconstruction map gives
\eqref{eq:AB_short}.
\end{proof}

The superset obtained in Lemma~\ref{lem:AB_mixed_short}, is utilized by Algorithm~\ref{alg:hp_reach_single_window} to compute reachable set
over-approximations, with guarantees provided in
Theorem~\ref{thm:hp_reach_single_window}.

{ \begin{theorem}
\label{thm:hp_reach_single_window}
Consider the system~\eqref{syst} with trajectory data~\eqref{data}
satisfying Assumption~\ref{PE}. Let \(\mathcal{S}_{AB}(\delta)\) and
\(\mathcal{Z}_{\mathrm{MP}}(\delta)\), as high-confidence sets of \eqref{eq:AB_short} and \eqref{eq:W_MPZ}, respectively, constructed from Definition~\ref{def:confidence_set}. Define
\(
p_\delta:=1-\delta
\) and suppose $\mathbb{P}\bigl([A^\star\;B^\star]\in\mathcal{S}_{AB}(\delta)\bigr)
\ge p_\delta$, $\mathbb{P}\bigl(w_k\in\mathcal{Z}_{\mathrm{MP}}(\delta)\bigr)
\ge
p_\delta$, and
$k\ge 0$, which holds by Definition~\ref{def:confidence_set}.
Algorithm~\ref{alg:hp_reach_single_window} implements
\begin{equation}
\label{eq:hp_recursion_single}
\hat{\mathcal{R}}_{0}=\mathcal{X}_0,\ \text{and }
\hat{\mathcal{R}}_{k+1}
=
\mathcal{S}_{AB}(\delta)
\bigl(\hat{\mathcal{R}}_k\times\mathcal{U}_k\bigr)
\oplus \mathcal{Z}_{\mathrm{MP}}(\delta).
\end{equation}
and yields
\begin{equation}
\label{eq:hp_reach_single_window_product}
\mathbb{P}\!\left(
\mathcal{R}_t\subseteq \hat{\mathcal{R}}_t,\;
t=0,\ldots,k
\right)
\ge
p_\delta^{\,k+1}.
\end{equation}
\end{theorem}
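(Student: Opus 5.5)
The proof goes by induction on $k$, using set-valued monotonicity of the recursion \eqref{eq:hp_recursion_single} and the multiplication rule for probabilities. First, fix notation. The true reachable set is $\mathcal R_0=\mathcal X_0$ and
\[
\mathcal R_{t+1}=\{A^\star x+B^\star u+w_t : x\in\mathcal R_t,\ u\in\mathcal U_t\}.
\]
For each step $t$, define the good event
\[
\mathcal E_t:=\{[A^\star\;B^\star]\in\mathcal S_{AB}(\delta)\}\cap\{w_t\in\mathcal Z_{\mathrm{MP}}(\delta)\}.
\]
The deterministic core is the following claim: on $\mathcal E_0\cap\cdots\cap\mathcal E_{k-1}$ we have $\mathcal R_t\subseteq\hat{\mathcal R}_t$ for every $t\le k$. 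The base case $t=0$ holds trivially because $\hat{\mathcal R}_0=\mathcal X_0=\mathcal R_0$. For the inductive step, take any $x_{t+1}\in\mathcal R_{t+1}$ and write it as $x_{t+1}=[A^\star\;B^\star]\,\mathrm{col}(x_t,u_t)+w_t$ with $x_t\in\mathcal R_t\subseteq\hat{\mathcal R}_t$ and $u_t\in\mathcal U_t$. On $\mathcal E_t$, the matrix $[A^\star\;B^\star]$ lies in $\mathcal S_{AB}(\delta)$ and $w_t$ lies in $\mathcal Z_{\mathrm{MP}}(\delta)$. Therefore
\[
x_{t+1}\in\mathcal S_{AB}(\delta)(\hat{\mathcal R}_t\times\mathcal U_t)\oplus\mathcal Z_{\mathrm{MP}}(\delta)=\hat{\mathcal R}_{t+1},
\]
where the set–matrix product is read as $\{\theta v:\theta\in\mathcal S_{AB}(\delta),\,v\in\hat{\mathcal R}_t\times\mathcal U_t\}$. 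This step only uses the definition of the Minkowski sum and of the set product, so it is purely algebraic.

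The probabilistic part is to lower-bound $\mathbb P(\bigcap_{t}\mathcal E_t)$ by $p_\delta^{k+1}$, and I expect this to be the main obstacle. The inclusion of the model, $\{[A^\star\;B^\star]\in\mathcal S_{AB}(\delta)\}$, is a single event that holds with probability at least $p_\delta$. The disturbance events $\{w_t\in\mathcal Z_{\mathrm{MP}}(\delta)\}$ for $t=0,\ldots,k-1$ each hold with probability at least $p_\delta$. That gives $k+1$ confidence events in total, which matches the exponent $k+1$ in \eqref{eq:hp_reach_single_window_product}.

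To multiply these probabilities I will use the independence built into the setup. Lemma~\ref{lem:AB_mixed_short} and Assumption~\ref{distass} model each $w_k$ as an independent copy of $\mathcal W$. The set $\mathcal S_{AB}(\delta)$ is built only from the past realized disturbances in the data window, since it is the affine image of $W$ under $(X_1-\cdot)D_0^\dagger$. The future disturbances $w_t$ used in propagation are fresh draws that are independent of that data window. Hence the model event is independent of the future disturbance events, and the future disturbance events are mutually independent. I will state this independence explicitly as the modeling premise that the theorem's hypotheses implicitly invoke; a fallback would be an argument conditioned step by step. Granting it, the product rule gives
\[
\mathbb P\Bigl(\bigcap_{t}\mathcal E_t\Bigr)\ge p_\delta\cdot p_\delta^{k}=p_\delta^{k+1}.
\]
Combined with the deterministic inclusion this proves \eqref{eq:hp_reach_single_window_product}. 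If full independence were unavailable, a union bound would only give $1-(k+1)\delta$, so I would note that the product form relies on independence.

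The last detail is that the confidence levels attached to $\mathcal S_{AB}(\delta)$ and $\mathcal Z_{\mathrm{MP}}(\delta)$ come directly from Definition~\ref{def:confidence_set}: its truncation width $\gamma_\delta$ ensures $\mathrm{erf}(\gamma_\delta/\sqrt2)^{n}=1-\delta$, and the probability of landing in the confidence set is at least this value. Since these probabilities are granted as hypotheses in the theorem statement, I will simply cite them.
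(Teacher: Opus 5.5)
Your proposal is correct and follows essentially the same route as the paper: on the intersection of the model event $\{[A^\star\;B^\star]\in\mathcal S_{AB}(\delta)\}$ with the per-step events $\{w_t\in\mathcal Z_{\mathrm{MP}}(\delta)\}$, $t=0,\ldots,k-1$, containment follows by induction through the recursion. The bound $p_\delta^{k+1}$ then comes from independence of the model event from the mutually independent future disturbance events, which the paper simply asserts and you rightly state explicitly as a modeling premise; your remark that without it only $1-(k+1)\delta$ survives is exactly how the paper handles the refined setting in Corollary~\ref{thm:reach_hp_prefix_indep}.
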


\begin{proof}
Let \(E_{AB}:=\{[A^\star\;B^\star]\in\mathcal{S}_{AB}(\delta)\}\) and
\(E_{W,t}:=\{w_t\in\mathcal{Z}_{\mathrm{MP}}(\delta)\}\).  \(\{E_{w,t}(\delta)\}_{t\ge0}\) are mutually independent and
are independent of \(E_{AB}\), and, by
Definition~\ref{def:confidence_set},
\(\mathbb{P}(E_{AB})\ge p_\delta\) and \(\mathbb{P}(E_{W,t})\ge p_\delta\). For horizon \(k\),
define
\(
E_k
=
E_{AB}\cap\bigcap_{t=0}^{k-1}E_{W,t},
\)

with the empty intersection interpreted as the sure event. On \(E_k\), the containment follows by induction. The base case holds
because \(\hat{\mathcal R}_0=\mathcal X_0\). If
\(x_t\in\hat{\mathcal R}_t\) for some \(t<k\), then
\([A^\star\;B^\star]\in\mathcal S_{AB}(\delta)\) and
\(w_t\in\mathcal{Z}_{\mathrm{MP}}(\delta)\). Hence, for any
\(u_t\in\mathcal U_t\),
\begin{align}
x_{t+1}
&=
[A^\star\;B^\star]
\begin{bmatrix}x_t\\u_t\end{bmatrix}
+w_t
\notag\\
&\in
\mathcal{S}_{AB}(\delta)
\bigl(\hat{\mathcal R}_t\times\mathcal U_t\bigr)
\oplus \mathcal{Z}_{\mathrm{MP}}(\delta)
=
\hat{\mathcal R}_{t+1}.
\label{eq:one_step_reach_membership}
\end{align}
Thus, \(E_k\) implies
\(\mathcal R_t\subseteq\hat{\mathcal R}_t\) for all \(t=0,\ldots,k\), and
therefore
\begin{equation}
\label{eq:containment_from_prefix_event}
\mathbb{P}\!\left(
\mathcal{R}_t\subseteq \hat{\mathcal{R}}_t,\;
t=0,\ldots,k
\right)
\ge
\mathbb{P}(E_k).
\end{equation}
The independence of the events yields
\begin{align}
\mathbb{P}(E_k)
&=
\mathbb{P}(E_{AB})
\prod_{t=0}^{k-1}\mathbb{P}(E_{W,t})
\ge
p_\delta^{\,k+1}.
\label{eq:prefix_event_product_bound}
\end{align}
Combining~\eqref{eq:containment_from_prefix_event}
and~\eqref{eq:prefix_event_product_bound} gives
\eqref{eq:hp_reach_single_window_product}.
\end{proof}

\begin{algorithm}[t]
\caption{Conservative High-probability reachability analysis (single data window)}
\label{alg:hp_reach_single_window}
\begin{algorithmic}[1]
\Require Initial set $\mathcal{X}_0$, input sets $\{\mathcal{U}_k\}_{k=0}^{N-1}$,
confidence level $\delta\in(0,1)$,
data $(x_0,u_0),\ldots,(x_{T-1},u_{T-1})$ satisfying
Assumption~\ref{PE}
\Ensure Sets $\{\hat{\mathcal{R}}_k\}_{k=0}^{N}$
\State Construct the $\delta$-confidence model set
$\mathcal{S}_{AB}(\delta)$ from the data
via Definition~\ref{def:confidence_set} applied to
Lemma~\ref{lem:AB_mixed_short}.
\State Construct the $\delta$-confidence disturbance set for $\mathcal{Z}_{\mathrm{MP}}$ in \eqref{eq:W_MPZ} 
by applying Definition~\ref{def:confidence_set} and call it $\mathcal{Z}_{\mathrm{MP}}(\delta)$.
\State Set $\hat{\mathcal{R}}_0 \gets \mathcal{X}_0$.
\For{$k=0,1,\ldots,N-1$}
  \State $\hat{\mathcal{R}}_{k+1} \gets
  \mathcal{S}_{AB}(\delta)\bigl(\hat{\mathcal{R}}_k\times\mathcal{U}_k\bigr)
  \oplus \mathcal{Z}_{\mathrm{MP}}(\delta)$.
  \label{step:hp_recursion}
\EndFor
\State \Return $\{\hat{\mathcal{R}}_k\}_{k=0}^{N}$.
\end{algorithmic}
\end{algorithm}

Algorithm 1 is conservative because it does not exploit available information to refine the two distinct
sources of uncertainty that enter the reachable set. We now describe each source and outline how it can be reduced.

\begin{list4}
    \item \textbf{Realized disturbance sequence.}
    The disturbances \(W=[w_0,\ldots,w_{T-1}]\) that occurred during data
    collection are fixed after realization but remain unknown. The conservative
 approach in Algorithm 1 assigns every realized \(w_k\) the same full prior disturbance set \(\mathcal W\), which can be much larger than what is consistent with the measured trajectory. Instead, we estimate each realized disturbance using two sources of information: the model prior, which induces a sample-dependent set \(\mathcal W_{\theta,k}\), and the data equation, which enforces trajectory consistency across the collected window. This removes disturbance realizations that cannot explain the data and yields refined realized-disturbance proxies. These proxies are then used to construct a tighter data-consistent model set.

    \item \textbf{Aleatory disturbance set.}
   Future disturbances are different from past realized disturbances: they are not
fixed samples that can be reconstructed from data, but persistent stochastic
effects that continue to act during online operation. These future disturbances
and their true distribution are typically unknown; at the beginning, only a
conservative surrogate disturbance set is available. Finite data cannot
determine future disturbances exactly, but the refined realized-disturbance
proxies provide informative samples of the underlying disturbance mechanism.
Once enough such proxies are available, they are used to contract the
conservative prior disturbance set into a smaller high-confidence aleatory set
\(\mathcal W_\star(\delta)\), which is then used for future reachable-set
propagation.
\end{list4}

\noindent Together, refining both sources reduces the conservatism of
the reachable-set overapproximation from two complementary
directions: tighter realized disturbance sets yield a more accurate
model set $\mathcal{S}_{AB}(\delta)$, while a tighter aleatory set
directly shrinks the additive uncertainty propagated at each step.

% This distinction motivates two fundamental questions:
% \begin{itemize}
%     \item \textbf{Realized uncertainty reduction.}
%     How can prior knowledge and data be leveraged to estimate the realized
%     disturbance sequence $W$ more accurately, such that each element is confined
%     to a probabilistic zonotope with substantially smaller bounds and variance, yielding
%     an estimate of $W$ that is significantly tighter than the disturbance set
%     obtained by concatenating disturbances satisfying \eqref{Zdist}?

%     \item \textbf{Aleatory uncertainty reduction.}
%     How can these estimates be used to refine the conservative disturbance set
%     $\mathcal W$ defined in \eqref{Zdist} into a tighter admissible set
%     $\widehat{\mathcal W}$?
% \end{itemize}

% By constructing such refined estimates, the online disturbance $w_k$ can be
% assumed to satisfy $w_k \in \widehat{\mathcal W}$, which substantially reduces
% conservativeness in control design. In this manner, prior knowledge and collected
% data are leveraged not only to mitigate multiplicative uncertainty arising from
% the realized disturbance sequence, but also to tighten the admissible set of
% additive disturbances. {The next section shows how to leverage data and prior knowledge to reduce uncertainties and thus shrink the reachable set.}

\section{Uncertainty Reduction}
\subsection{Reducing the Realized Uncertainties}
% {This section introduces two lemmas that eliminate disturbance realizations inconsistent with the observed data (Lemma~\ref{lem:datacons_nzmean}) and with prior knowledge on the system parameters (Lemma~\ref{lem:prior_to_wk}). Each lemma yields a probabilistic zonotope representation for a given disturbance realization, shifting the center toward the true but unknown disturbance and reducing uncertainty by appropriately modifying the generators.}

{This section introduces two lemmas that eliminate disturbance realizations inconsistent with the prior knowledge on the system parameters (Lemma~\ref{lem:prior_to_wk}), and with the observed data (Lemma~\ref{lem:datacons_nzmean}). Their combination yields a CPZ representation for each disturbance realization, shifting its center toward the true but unknown value and reducing uncertainty by modifying the generators.

{
\begin{lemma}
\label{lem:prior_to_wk}
Consider the system~\eqref{syst}. Define
\begin{equation}
\label{eq:regressor_phi}
\phi_k :=
\begin{bmatrix} x_k \\ u_k \end{bmatrix}
\in \mathbb{R}^{n+m},
\qquad k=0,\ldots,T-1,
\end{equation}
and let
\(\theta^\star := [A^\star\ B^\star]\in\mathbb{R}^{n\times(n+m)}\).
Suppose Assumptions~\ref{PK} and~\ref{distass} hold. Then, for each fixed
\(k=0,\ldots,T-1\), the realized disturbance \(w_k\) consistent with the
model prior belongs to
\begin{align}
\mathcal W_{\theta,k}
:=
\Bigl\{
&c_k+G_{d,k}\alpha_{\theta}
+G_{s,k}\nu_{\theta}
\; : \;
\alpha_{\theta}\in[-1,1]^{m_{\theta_d}},
\nonumber\\
&\nu_{\theta}\sim\mathcal N(0,I_{m_{\theta_s}})
\Bigr\},
\label{eq:wtheta_k_set}
\end{align}
where
\begin{align}
M_k &:= \phi_k^\top\otimes I_n , \,\,
c_k := x_{k+1}-M_kc_\theta ,
\nonumber\\
G_{d,k} &:= -M_kG_{\theta,d},
\,\,
G_{s,k}:= -M_kG_{\theta,s}.
\label{eq:wk_params_from_theta}
\end{align}
\end{lemma}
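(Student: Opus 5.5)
The argument is a direct affine-image computation. It starts from the one-step data equation at time $k$ and rewrites the matrix product $\theta^\star\phi_k$ as a linear map acting on $\mathrm{vec}(\theta^\star)$. This makes the prior PZ of Assumption~\ref{PK} enter linearly.

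\textbf{Step 1 (data equation).} For each fixed $k\in\{0,\ldots,T-1\}$, the system~\eqref{syst} evaluated on the measured trajectory~\eqref{data} gives
\[
w_k = x_{k+1}-A^\star x_k-B^\star u_k = x_{k+1}-\theta^\star\phi_k .
\]
Here $x_{k+1}$ and $\phi_k$ are known, measured quantities.

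\textbf{Step 2 (vectorization).} We use the identity $\mathrm{vec}(XYZ)=(Z^\top\otimes X)\mathrm{vec}(Y)$ with $X=I_n$, $Y=\theta^\star$ and $Z=\phi_k$. Since $\theta^\star\phi_k\in\mathbb R^n$ equals its own vectorization, this gives
\[
\theta^\star\phi_k=(\phi_k^\top\otimes I_n)\mathrm{vec}(\theta^\star)=M_k\,\mathrm{vec}(\theta^\star).
\]
The dimensions are consistent: $M_k\in\mathbb R^{n\times n(n+m)}$ and $\mathrm{vec}(\theta^\star)\in\mathbb R^{n(n+m)}$.

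\textbf{Step 3 (substitute the prior).} By Assumption~\ref{PK}, $\mathrm{vec}(\theta^\star)=c_\theta+G_{\theta,d}\alpha_\theta+G_{\theta,s}\nu_\theta$ for some $\alpha_\theta\in[-1,1]^{m_{\theta_d}}$ and $\nu_\theta\sim\mathcal N(0,I_{m_{\theta_s}})$. Substituting this into Steps 1–2 yields
\[
w_k=(x_{k+1}-M_kc_\theta)+(-M_kG_{\theta,d})\alpha_\theta+(-M_kG_{\theta,s})\nu_\theta=c_k+G_{d,k}\alpha_\theta+G_{s,k}\nu_\theta .
\]
So $w_k\in\mathcal W_{\theta,k}$ as defined in~\eqref{eq:wtheta_k_set}. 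Equivalently, $\mathcal W_{\theta,k}$ is the image of the PZ $\Theta$ under the affine map $v\mapsto x_{k+1}-M_kv$. Such a map sends the centre to $c_k$ and multiplies both generator matrices by $-M_k$, so the image is again a PZ in the sense of Definition~\ref{def:mixed_zonotope}.

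\textbf{Step 4 (what is and is not claimed).} Assumption~\ref{distass} is not needed for the membership itself. It only makes $\mathcal W_{\theta,k}$ a prior-induced description of the same realized disturbance, which is what later allows fusion (Lemma~\ref{lem:exact_cpz_intersection}). A negated Gaussian is still standard Gaussian, so the stochastic part has the correct law $\mathcal N(0,G_{s,k}G_{s,k}^\top)$.

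\textbf{Main obstacle.} The algebra is routine; the delicate point is interpretive. $\theta^\star$ is fixed, and the PZ prior is a set-valued description, so the statement must be read the same way as Assumption~\ref{PK}. That is, $w_k$ is described by the pushforward of the prior's latent variables, with the same pair $(\alpha_\theta,\nu_\theta)$ shared across all $k$. I would state this coupling explicitly: the sets $\mathcal W_{\theta,k}$ for different $k$ are linked through common latents rather than being independent. The other careful point is getting the Kronecker ordering $\phi_k^\top\otimes I_n$ right, which is what matches column-stacking $\mathrm{vec}$.
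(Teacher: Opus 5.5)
Your proposal is correct and matches the paper's proof essentially step for step. You write $w_k=x_{k+1}-\theta^\star\phi_k$, rewrite $\theta^\star\phi_k=(\phi_k^\top\otimes I_n)\mathrm{vec}(\theta^\star)=M_k\mathrm{vec}(\theta^\star)$, and substitute the prior parametrization from Assumption~\ref{PK}, with the latent variables $(\alpha_\theta,\nu_\theta)$ shared across $k$. Your side remarks are also accurate: the membership does not actually use Assumption~\ref{distass}, and the common latents couple the sets $\mathcal W_{\theta,k}$ across samples.
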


\begin{proof}
For each fixed \(k\), the dynamics give
\(w_k=x_{k+1}-\theta^\star\phi_k\). Since
\(\theta^\star\phi_k=(\phi_k^\top\otimes I_n)\operatorname{vec}(\theta^\star)
=M_k\operatorname{vec}(\theta^\star)\), we have
\begin{align}
w_k=x_{k+1}-M_k\operatorname{vec}(\theta^\star).
\label{eq:wk_affine_in_vec_theta}
\end{align}
By Assumption~\ref{PK}, there exist common latent variables
\(\alpha_{\theta}\in[-1,1]^{m_{\theta_d}}\) and
\(\nu_{\theta}\sim\mathcal N(0,I_{m_{\theta_s}})\), independent of \(k\),
such that
\(\operatorname{vec}(\theta^\star)
=c_\theta+G_{\theta,d}\alpha_{\theta}
+G_{\theta,s}\nu_{\theta}\). Substitution into
\eqref{eq:wk_affine_in_vec_theta} gives
\begin{align}
w_k
&=
x_{k+1}
-
M_k
\bigl(
c_\theta+G_{\theta,d}\alpha_{\theta}
+G_{\theta,s}\nu_{\theta}
\bigr)
\nonumber\\
&=
c_k+G_{d,k}\alpha_{\theta}
+G_{s,k}\nu_{\theta},
\end{align}
with the parameters in~\eqref{eq:wk_params_from_theta}. Hence,
\(w_k\in\mathcal W_{\theta,k}\).
\end{proof}
}

We now further refine each realized disturbance estimate by enforcing consistency with the measured trajectory data.

{
\begin{lemma}
\label{lem:datacons_nzmean}
Consider the system~\eqref{syst} and a measured trajectory satisfying
Assumption~\ref{PE}, with \(T>n+m\). At update window \(j\), let
\(\mathcal W^{(j)}=\langle c_j,G_{d,j},G_{s,j}\rangle_{\mathcal P}\)
denote the frozen admissible disturbance prior, and let each \(w_k\),
\(k=0,\ldots,T-1\), be described by a copy of this set. Write
\begin{align}
G_{d,j}
&=
\bigl[
g_{d,j}^{(1)}\;\cdots\;g_{d,j}^{(\bar m_d)}
\bigr],
\qquad
G_{s,j}
=
\bigl[
g_{s,j}^{(1)}\;\cdots\;g_{s,j}^{(\bar m_s)}
\bigr].
\label{eq:local_generator_columns}
\end{align}
Define the stacked center
\begin{align}
C_N
:=
[c_j\;\cdots\;c_{j}]
\in\mathbb R^{n\times T}.
\label{eq:CN_matrix_center}
\end{align}
Let \(\gamma_{d,N}:=T\bar m_d\) and
\(\gamma_{s,N}:=T\bar m_s\). For each
\(i=1,\ldots,\gamma_{d,N}\), write
\(i=k\bar m_d+q\), with \(k=0,\ldots,T-1\) and
\(q=1,\ldots,\bar m_d\), and define
\begin{align}
G_{d,N}^{(i)}
:=
\bigl[
0_{n\times k}\;\;
g_{d,j}^{(q)}\;\;
0_{n\times(T-k-1)}
\bigr]
\in\mathbb R^{n\times T}.
\label{eq:GdN_i_explicit}
\end{align}
Similarly, for each \(\ell=1,\ldots,\gamma_{s,N}\), write
\(\ell=k\bar m_s+r\), with \(k=0,\ldots,T-1\) and
\(r=1,\ldots,\bar m_s\), and define
\begin{align}
G_{s,N}^{(\ell)}
:=
\bigl[
0_{n\times k}\;\;
g_{s,j}^{(r)}\;\;
0_{n\times(T-k-1)}
\bigr]
\in\mathbb R^{n\times T}.
\label{eq:GsN_j_explicit}
\end{align}
Collect the matrix generators as
\begin{align}
\widetilde G_{d,N}
&:=
\bigl[
G_{d,N}^{(1)}
\;\cdots\;
G_{d,N}^{(\gamma_{d,N})}
\bigr],
\nonumber\\
\widetilde G_{s,N}
&:=
\bigl[
G_{s,N}^{(1)}
\;\cdots\;
G_{s,N}^{(\gamma_{s,N})}
\bigr].
\label{eq:stacked_matrix_generators_compact}
\end{align}
Then, the stacked realized disturbance matrix satisfies
\(W\in\mathcal W_N^0\), where
\begin{align}
\mathcal W_N^0
:=
\Bigl\{
&Y
=
C_N
+
\sum_{i=1}^{\gamma_{d,N}}
\alpha^{(i)}G_{d,N}^{(i)}
+
\sum_{\ell=1}^{\gamma_{s,N}}
\nu^{(\ell)}G_{s,N}^{(\ell)}:
\nonumber\\
&\alpha\in[-1,1]^{\gamma_{d,N}},
\quad
\nu\sim\mathcal N(0,I_{\gamma_{s,N}})
\Bigr\}.
\label{eq:stacked_matrix_pz_prior}
\end{align}
Moreover, imposing data consistency yields CMPZ
\begin{align}
W\in\widehat{\mathcal W}_N^c
:=
\Bigl\{
Y\in\mathcal W_N^0\; ;\;
&\sum_{i=1}^{\gamma_{d,N}}\alpha^{(i)}A_{d,N}^{(i)}
+
\sum_{\ell=1}^{\gamma_{s,N}}\nu^{(\ell)}A_{s,N}^{(\ell)}
\nonumber\\[-1mm]
&=B_N
\Bigr\},
\label{eq:stacked_matrix_cmpz}
\end{align}
where
\begin{align}
A_{d,N}^{(i)}
&:=
G_{d,N}^{(i)}D_0^\perp,
\qquad
i=1,\ldots,\gamma_{d,N},
\nonumber\\
A_{s,N}^{(\ell)}
&:=
G_{s,N}^{(\ell)}D_0^\perp,
\qquad
\ell=1,\ldots,\gamma_{s,N},
\nonumber\\
B_N
&:=
(X_1-C_N)D_0^\perp .
\label{eq:cmpz_constraint_matrices}
\end{align}
Equivalently,
\[
\widehat{\mathcal W}_N^c
=
\langle
C_N,\widetilde G_{d,N},\widetilde G_{s,N},
\widetilde A_{d,N},\widetilde A_{s,N},B_N
\rangle_{\mathrm{CMP}},
\]
with
\(\widetilde A_{d,N}
=
[A_{d,N}^{(1)}\;\cdots\;A_{d,N}^{(\gamma_{d,N})}]\)
and
\(\widetilde A_{s,N}
=
[A_{s,N}^{(1)}\;\cdots\;A_{s,N}^{(\gamma_{s,N})}]\).
The set \(\widehat{\mathcal W}_N^c\) is nonempty on the confidence event
that the true realized disturbance sequence belongs to \(\mathcal W_N^0\).
\end{lemma}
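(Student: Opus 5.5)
The proof has three parts: show that $W\in\mathcal W_N^0$ by stacking, derive the data-consistency constraint by projecting the data equation onto $D_0^\perp$, and verify nonemptiness on the confidence event.

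\emph{Membership in $\mathcal W_N^0$.} Each $w_k$ is described by a copy of $\mathcal W^{(j)}=\langle c_j,G_{d,j},G_{s,j}\rangle_{\mathrm P}$. So for each $k$ there are latent variables $\alpha_k\in[-1,1]^{\bar m_d}$ and $\nu_k\sim\mathcal N(0,I_{\bar m_s})$, independent across $k$, with
\[
w_k=c_j+G_{d,j}\alpha_k+G_{s,j}\nu_k .
\]
Stack these as $\alpha=\operatorname{col}(\alpha_0,\ldots,\alpha_{T-1})$ and $\nu=\operatorname{col}(\nu_0,\ldots,\nu_{T-1})$, using the indexing $i=k\bar m_d+q$ and $\ell=k\bar m_s+r$. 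The generator $G_{d,N}^{(i)}$ places $g_{d,j}^{(q)}$ only in column $k$, so
\[
\sum_i\alpha^{(i)}G_{d,N}^{(i)}=[\,G_{d,j}\alpha_0\;\cdots\;G_{d,j}\alpha_{T-1}\,],
\]
and the stochastic sum has the same structure. Adding $C_N$ recovers $W=[w_0\;\cdots\;w_{T-1}]$. This is the same construction as in the proof of Lemma~\ref{lem:AB_mixed_short}, and it shows $W\in\mathcal W_N^0$. Independence across $k$ gives $\nu\sim\mathcal N(0,I_{\gamma_{s,N}})$.

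\emph{Data-consistency constraint.} The system~\eqref{syst} gives $X_1=\theta^\star D_0+W$. Let $D_0^\perp\in\mathbb R^{T\times(T-n-m)}$ be a basis of $\ker(D_0)$; it is nontrivial because $T>n+m$ and $\operatorname{rank}(D_0)=n+m$ by Assumption~\ref{PE}. Right-multiplying by $D_0^\perp$ removes the unknown $\theta^\star$ and gives $WD_0^\perp=X_1D_0^\perp$. Substituting the parametrization of $W$ yields
\[
\sum_i\alpha^{(i)}G_{d,N}^{(i)}D_0^\perp+\sum_\ell\nu^{(\ell)}G_{s,N}^{(\ell)}D_0^\perp=(X_1-C_N)D_0^\perp ,
\]
which is exactly the constraint with $A_{d,N}^{(i)}$, $A_{s,N}^{(\ell)}$ and $B_N$ as defined in~\eqref{eq:cmpz_constraint_matrices}. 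The latent pair that generates the true $W$ therefore satisfies this equality, so $W\in\widehat{\mathcal W}_N^c$.

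Conversely, every element of $\widehat{\mathcal W}_N^c$ equals some $Y\in\mathcal W_N^0$ with $(X_1-Y)D_0^\perp=0$. The rows of $X_1-Y$ then lie in the row space of $D_0$, so some $\theta$ satisfies $X_1=\theta D_0+Y$. This converse is not needed for the containment, but it confirms that the constraint encodes exactly trajectory consistency. The CMPZ tuple form of $\widehat{\mathcal W}_N^c$ is just a relabeling under Definition~\ref{def:matrix_and_constrained_mixed_zono}, and Remark~\ref{CMPZ-CPZ} allows vectorizing it if needed.

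\emph{Nonemptiness.} On the event that the true realized sequence lies in $\mathcal W_N^0$, the true latent variables form a feasible pair for the constraint, so $\widehat{\mathcal W}_N^c\neq\emptyset$.

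The main obstacle is the probabilistic meaning of ``$W\in$'' a set with Gaussian latent variables. The argument needs the realized $\nu$, which is a fixed sample, to be admissible in the set description. This holds by reading membership as existence of latent realizations, which is the convention used in Lemma~\ref{lem:AB_mixed_short}. Under the $\delta$-confidence reading of Definition~\ref{def:confidence_set}, it instead holds on the confidence event, and that is exactly why the statement qualifies the nonemptiness claim.
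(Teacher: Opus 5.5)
Your proposal is correct and follows essentially the same route as the paper. Both stack the $T$ copies of the frozen prior column-wise to obtain $\mathcal W_N^0$, then right-multiply $X_1=\theta^\star D_0+W$ by $D_0^\perp$ to eliminate the unknown model and get the affine constraint with $A_{d,N}^{(i)}$, $A_{s,N}^{(\ell)}$, $B_N$, and finally deduce nonemptiness from the true latent variables on the confidence event; your converse check, that $(X_1-Y)D_0^\perp=0$ places the rows of $X_1-Y$ in the row space of $D_0$, is a correct addition the paper does not state.
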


\begin{proof}
For each \(k=0,\ldots,T-1\), the frozen prior gives
\begin{align}
w_k\in\mathcal W^{(j)}
=
\Bigl\{
&\bar w_k
=
c_j+G_{d,j}\alpha_k+G_{s,j}\nu_k:
\nonumber\\
&\alpha_k\in[-1,1]^{\bar m_d},
\quad
\nu_k\sim\mathcal N(0,I_{\bar m_s})
\Bigr\}.
\label{eq:single_proxy_for_matrix_stack}
\end{align}
Using the column decompositions in~\eqref{eq:local_generator_columns}, this
same set can be written as
\begin{align}
\mathcal W^{(j)}
=
\Bigl\{
&\bar w_k
=
c_j
+
\sum_{q=1}^{\bar m_d}
\alpha_k^{(q)}g_{d,j}^{(q)}
+
\sum_{r=1}^{\bar m_s}
\nu_k^{(r)}g_{s,j}^{(r)}:
\nonumber\\
&\alpha_k\in[-1,1]^{\bar m_d},
\quad
\nu_k\sim\mathcal N(0,I_{\bar m_s})
\Bigr\}.
\label{eq:wk_column_expansion_set}
\end{align}
Placing each element \(\bar w_k\) in the \((k+1)\)-st column and zeros in
all other columns gives the matrix generators
\(G_{d,N}^{(i)}\) and \(G_{s,N}^{(\ell)}\) defined in
\eqref{eq:GdN_i_explicit}--\eqref{eq:GsN_j_explicit}. Hence all stacked
disturbance matrices generated by the \(T\) copies of the frozen prior are
exactly represented by \(\mathcal W_N^0\) in~\eqref{eq:stacked_matrix_pz_prior}.
Therefore, for the realized disturbance matrix
\(W=[w_0\;\cdots\;w_{T-1}]\), one has \(W\in\mathcal W_N^0\).

It remains to impose data consistency. A candidate disturbance matrix \(Y\)
is data-consistent if, after subtracting it from the measured next-state
matrix \(X_1\), the remaining matrix can be explained by one constant model
\([A\;B]\) over all collected samples. Thus, there must exist
\([A\;B]\) such that \(X_1-Y=[A\;B]D_0\). Since
\(D_0D_0^\perp=0\), every data-consistent \(Y\) must satisfy
\begin{align}
(X_1-Y)D_0^\perp=0 .
\label{eq:data_consistency_matrix_new}
\end{align}
Substituting the representation of \(Y\in\mathcal W_N^0\) into
\eqref{eq:data_consistency_matrix_new} gives
\begin{align}
\sum_{i=1}^{\gamma_{d,N}}
\alpha^{(i)}G_{d,N}^{(i)}D_0^\perp
+
\sum_{\ell=1}^{\gamma_{s,N}}
\nu^{(\ell)}G_{s,N}^{(\ell)}D_0^\perp
=
(X_1-C_N)D_0^\perp .
\label{eq:matrix_affine_constraint_new}
\end{align}
Using the definitions in~\eqref{eq:cmpz_constraint_matrices}, this becomes
\begin{align}
\sum_{i=1}^{\gamma_{d,N}}
\alpha^{(i)}A_{d,N}^{(i)}
+
\sum_{\ell=1}^{\gamma_{s,N}}
\nu^{(\ell)}A_{s,N}^{(\ell)}
=
B_N .
\label{eq:cmpz_affine_constraint}
\end{align}
Therefore, imposing data consistency on \(\mathcal W_N^0\) gives the
constrained matrix probabilistic zonotope
\(\widehat{\mathcal W}_N^c\) in~\eqref{eq:stacked_matrix_cmpz}.

Finally, let \(W^\star\) be the true realized disturbance matrix. Since the
measured data were generated by the true system, one has
\(X_1=[A^\star\;B^\star]D_0+W^\star\). Right multiplication by
\(D_0^\perp\) gives \((X_1-W^\star)D_0^\perp=0\). On the confidence event,
\(W^\star\in\mathcal W_N^0\). Therefore, there exist admissible
\(\alpha^\star\in[-1,1]^{\gamma_{d,N}}\) and
\(\nu^\star\sim\mathcal N(0,I_{\gamma_{s,N}})\) satisfying
\eqref{eq:cmpz_affine_constraint}. Hence,
\(\widehat{\mathcal W}_N^c\) is nonempty.
\end{proof}
}

Lemma~\ref{lem:datacons_nzmean} provides a CMPZ representation for the full
stacked realized-disturbance matrix using the frozen disturbance prior and the trajectory-consistency constraint. This representation allows different realized disturbances through
column-wise latent variables, while the trajectory-consistency constraint couples these variables across the data window.
Using Remark~\ref{CMPZ-CPZ}, the CMPZ is first viewed as a vectorized CPZ. Lemma~\ref{lem:generalized_constraint_absorption} is then applied to this full vectorized CPZ, not independently to each column. This absorbs the
stochastic part of the global equality constraint and yields an equivalent representation whose remaining equality constraint depends only on
deterministic latent variables. This remaining equality may still couple the variables associated with different time samples. The \(k\)-th realized-disturbance proxy is obtained by projecting this
globally constrained representation onto the \(k\)-th column. Hence, the
projected proxy is sample-wise in its output variable, but it may still carry
the effect of the global deterministic equality through an existential constraint on the stacked deterministic variables. 

Independently, Lemma~\ref{lem:prior_to_wk} uses the current model prior to
construct model-consistent proxies
\(
\{\widehat{\mathcal W}_{\theta,k}^{c}\}_{k=0}^{T-1}.
\)
The final realized-disturbance proxy for each sample is obtained by fusing
the data-consistent projected proxy with the model-consistent proxy via
Lemma~\ref{lem:exact_cpz_intersection}. The remaining deterministic equality
constraint does not change the fusion argument; it is carried into the fused
CPZ as an admissibility constraint. Thus, the fusion 
\begin{align}
\widehat{\mathcal W}_{k}^{c}(\delta)
\supseteq
\widehat{\mathcal W}_{D,k}^{c}(\delta)
\cap
\widehat{\mathcal W}_{\theta,k}^{c}(\delta),
\qquad k=0,\ldots,T-1 .
\label{fuzed1}
\end{align}
forms a CPZ for reach realization $k$, but with an equality constraint depending on all deterministic latent variables. Then, the
CMPZ obtained from stacking these realized disturbances after fusing admits the
CPZ representation
\begin{align}
\widehat{\mathcal W}^{c}
=
\Bigl\{
&Y
=
C_N
+
\sum_{i=1}^{\gamma_{d,N}}\alpha^{(i)}G_{d,N}^{(i)}
+
\sum_{j=1}^{\gamma_{s,N}}\nu^{(j)}G_{s,N}^{(j)}:
\nonumber\\[-1mm] \bar A_{d,N}\alpha=\bar b_N, \,\,
&\alpha\in[-1,1]^{\gamma_{d,N}},\;
\nu\sim\mathcal N(0,I_{\gamma_{s,N}})
\Bigr\}.
\label{eq:W_CMPZ_derived}
\end{align}

}

\begin{lemma}
\label{lem:AB_from_CMPZ}
Consider the system~\eqref{syst} and a measured trajectory~\eqref{data}
satisfying Assumption~\ref{PE}. For any disturbance-matrix set
\(\mathcal W\subseteq\mathbb R^{n\times T}\), define the induced model set
\begin{align}
\mathcal S_{AB}(\mathcal W)
:=
\{(X_1-Y)D_0^\dagger:Y\in\mathcal W\},
\label{eq:SAB_general}
\end{align}
Let \eqref{eq:W_CMPZ_derived} be the stacked refined realized disturbances. 
Then, the refined data-consistent model set is
\begin{align}
\widehat{\mathcal S}_{AB}^{c}
=
\mathcal S_{AB}(\widehat{\mathcal W}^{c}).
\label{eq:SAB_refined_set_def}
\end{align}
Equivalently,
\begin{align}
\widehat{\mathcal S}_{AB}^{c}
=
\Bigl\{
\theta \; ;\;
&\theta
=
\Bigl(
X_1-C_N
-
\sum_{i=1}^{\gamma_{d,N}}\alpha^{(i)}G_{d,N}^{(i)}
\nonumber\\[-1mm]
&\hspace{-1.3em}
-
\sum_{j=1}^{\gamma_{s,N}}\nu^{(j)}G_{s,N}^{(j)}
\Bigr)D_0^\dagger:
\nonumber\\[-1mm] \bar A_{d,N}\alpha=\bar b_N, \,
&\alpha\in[-1,1]^{\gamma_{d,N}}, \,
\nu\sim\mathcal N(0,I_{\gamma_{s,N}})
\Bigr\}.
\label{eq:SAB_CMPZ}
\end{align}
Moreover, every \(\theta\in\widehat{\mathcal S}_{AB}^{c}\) is consistent
with the measured trajectory and the refined stacked disturbance set.
\end{lemma}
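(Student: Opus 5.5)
The argument has three parts: the definition \eqref{eq:SAB_refined_set_def}, the explicit parametrization \eqref{eq:SAB_CMPZ}, and the final consistency claim. Equation \eqref{eq:SAB_refined_set_def} is simply the definition \eqref{eq:SAB_general} evaluated at \(\mathcal W=\widehat{\mathcal W}^{c}\), so there is nothing to prove there. The work lies in the other two parts, and both follow the reconstruction step already used in the proof of Lemma~\ref{lem:AB_mixed_short}.

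\textbf{Parametrization \eqref{eq:SAB_CMPZ}.} The map \(Y\mapsto (X_1-Y)D_0^\dagger\) is affine. Take any \(Y\in\widehat{\mathcal W}^{c}\). By \eqref{eq:W_CMPZ_derived}, it has the form
\[
Y=C_N+\sum_i\alpha^{(i)}G_{d,N}^{(i)}+\sum_j\nu^{(j)}G_{s,N}^{(j)},
\]
with \(\bar A_{d,N}\alpha=\bar b_N\) and \(\alpha\in[-1,1]^{\gamma_{d,N}}\). Substituting this into the map and distributing \(D_0^\dagger\) over the sum gives exactly the expression in \eqref{eq:SAB_CMPZ}. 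The latent variables are the same, so the equality constraint \(\bar A_{d,N}\alpha=\bar b_N\) and the box and Gaussian descriptions carry over unchanged. The converse inclusion holds by reading the same identity backwards: any admissible \((\alpha,\nu)\) in \eqref{eq:SAB_CMPZ} defines a \(Y\in\widehat{\mathcal W}^{c}\) whose image is the given \(\theta\). Together, the two inclusions give set equality. It also helps to note, in the spirit of Remark~\ref{CMPZ-CPZ}, that after vectorization this is a CPZ mapped by the linear operator \((D_0^{\dagger\top}\otimes I_n)\), followed by a translation. Image sets of this kind are again CPZs with the same constraint.

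\textbf{Consistency claim.} This is the step where care is needed. Take \(\theta=(X_1-Y)D_0^\dagger\) with \(Y\in\widehat{\mathcal W}^{c}\). I want to show that \(X_1=\theta D_0+Y\), i.e.\ that \((X_1-Y)D_0^\dagger D_0=X_1-Y\). The matrix \(D_0^\dagger D_0\) is the orthogonal projector onto the row space of \(D_0\), and \(I-D_0^\dagger D_0\) projects onto \(\ker(D_0)\). So the claim is equivalent to \((X_1-Y)D_0^\perp=0\), taking \(D_0^\perp\) as a basis of \(\ker D_0\).

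This is exactly the data-consistency constraint of Lemma~\ref{lem:datacons_nzmean}. My plan for establishing it has two steps:
\begin{itemize}
\item Argue that the deterministic constraint \(\bar A_{d,N}\alpha=\bar b_N\) in \eqref{eq:W_CMPZ_derived} still enforces the constraint \eqref{eq:cmpz_affine_constraint}. It is obtained from that constraint through the absorption of Lemma~\ref{lem:generalized_constraint_absorption}, which preserves the set exactly, and through the exact intersection of Lemma~\ref{lem:exact_cpz_intersection}, which only adds constraints.
\item Conclude that every \(Y\in\widehat{\mathcal W}^{c}\) satisfies \((X_1-Y)D_0^\perp=0\).
\end{itemize}

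I expect the main obstacle to be the possible looseness of the fusion: \eqref{fuzed1} is stated only as a superset relation. If \(\widehat{\mathcal W}^{c}\) strictly overapproximates the data-consistent set, I would fall back on a weaker reading of the claim. Specifically, I would interpret the consistency statement through the latent-variable representation: the stochastic part was absorbed so that the reconstructed \(\nu\) satisfies the equality by construction. I would also state that \(\theta\) is the least-squares model explaining \(X_1-Y\), which is exact whenever \(Y\) is data-consistent. Finally, \(\theta^\star\) itself lies in the set whenever \(W^\star\in\widehat{\mathcal W}^{c}\), because \((X_1-W^\star)D_0^\dagger=\theta^\star D_0D_0^\dagger=\theta^\star\) by Assumption~\ref{PE}.
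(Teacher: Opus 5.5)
Your treatment of \eqref{eq:SAB_refined_set_def} and \eqref{eq:SAB_CMPZ} matches the paper's proof. The first is the definition \eqref{eq:SAB_general} evaluated at \(\widehat{\mathcal W}^{c}\), and the second comes from substituting \eqref{eq:W_CMPZ_derived} into the affine map \(Y\mapsto(X_1-Y)D_0^\dagger\). For the final claim, the paper proves only that \(\theta\) is the data reconstruction of an admissible \(Y\), i.e., consistency ``by construction,'' which is essentially your fallback. Your plan breaks down at the stronger identity \(X_1=\theta D_0+Y\).

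\textbf{The gap.} Your first bullet is false: the residual equality \(\bar A_{d,N}\alpha=\bar b_N\) does not enforce \eqref{eq:cmpz_affine_constraint}.
\begin{itemize}
\item In Lemma~\ref{lem:generalized_constraint_absorption}, the deterministic residual \(U_0^\top(b-A_d\alpha)=0\) is only the projection of the constraint onto \(\ker(A_s^\top)\). The component in \(\operatorname{Im}(A_s)\) is not enforced by any constraint. It is enforced by the substitution \(\nu=A_s^\dagger(b-A_d\alpha)+K\zeta\), which is why the set is preserved only after the center and generators are replaced by \(c_c,G_{c,d},G_{c,s}\).
\item The set \eqref{eq:W_CMPZ_derived}, as written, keeps \(C_N\), \(G_{d,N}^{(i)}\), \(G_{s,N}^{(j)}\) and a free \(\nu\sim\mathcal N(0,I_{\gamma_{s,N}})\). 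So the term \(\sum_j\nu^{(j)}G_{s,N}^{(j)}D_0^\perp\) is unconstrained, and \((X_1-Y)D_0^\perp=0\) fails unless every \(G_{s,N}^{(j)}D_0^\perp\) vanishes.
\item Generically there is not even a residual. The vectorized stochastic constraint matrix has columns \(d_k\otimes g_{s,j}^{(r)}\), where \(d_k^\top\) is the \((k+1)\)-st row of \(D_0^\perp\). Whenever \(G_{s,j}\) has full row rank (as in the simulation), this matrix has full row rank and \(U_0\) is empty. The data equation then contributes nothing to \(\bar A_{d,N}\alpha=\bar b_N\).
\end{itemize}
So exact absorption plus exact intersection does not give consistency for this parametrization. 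This holds independently of the looseness in \eqref{fuzed1} that you flagged.

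\textbf{How to get the strong claim.} Keep the stacked set in absorbed coordinates, with latent variables shared across columns so that it stays inside \(\widehat{\mathcal W}_N^c\). The converse half of the proof of Lemma~\ref{lem:generalized_constraint_absorption} then turns each admissible \((\alpha,\zeta)\) into a \(\nu\) with \(A_d\alpha+A_s\nu=b\), i.e., \((X_1-Y)D_0^\perp=0\). Your projector argument then gives \((X_1-Y)D_0^\dagger D_0=X_1-Y\). Without that change, only the by-construction reading is justified.
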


\begin{proof}
By Lemma~\ref{lem:datacons_nzmean}, the realized disturbance sequence is
represented as a stacked disturbance matrix
\(Y=[y_0\;\cdots\;y_{T-1}]\) belonging to the data-consistent CMPZ
\(\widehat{\mathcal W}_N^c\). After applying
Lemma~\ref{lem:generalized_constraint_absorption}, the stochastic part of the
equality constraint is absorbed, while the remaining deterministic equality
\(\bar A_{d,N}\alpha=\bar b_N\) is retained in
\(\widehat{\mathcal W}^{c}\). Using the data \eqref{data} and the system dynamics, any candidate
disturbance matrix \(Y\in\widehat{\mathcal W}^{c}\) induces
\begin{align}
\theta=[A\;B]=(X_1-Y)D_0^\dagger .
\label{eq:proof_theta_from_Y}
\end{align}
Thus, each \(Y\in\widehat{\mathcal W}^{c}\), with its admissible
\(\alpha\) satisfying \(\bar A_{d,N}\alpha=\bar b_N\), induces a model
matrix \(\theta=[A\;B]\) consistent with the measured data. Therefore,
\begin{align}
\widehat{\mathcal S}_{AB}^{c}
=
\{(X_1-Y)D_0^\dagger:Y\in\widehat{\mathcal W}^{c}\}
=
\mathcal S_{AB}(\widehat{\mathcal W}^{c}).
\label{eq:proof_SAB_affine_image}
\end{align}
Substituting the parametrization \eqref{eq:W_CMPZ_derived} into
\eqref{eq:proof_SAB_affine_image} gives \eqref{eq:SAB_CMPZ}. 
This completes the proof.
\end{proof}

\subsection{Refining the Disturbance Set}
{This subsection shows how to refine the disturbance set defined in Assumption~\ref{distass} given available estimates of realized disturbances. The central idea is to contract the disturbance set so that it contains all disturbances consistent with the observed realizations, while remaining a conservative subset of the original prior set.}

\begin{lemma}
\label{lem:exact_containment_set_refinement}
Let the conservative disturbance prior in Assumption~\ref{distass} be
\(
\mathcal W=\langle c,G_d,G_s\rangle_{\mathrm{P}},
\)
and let the refined realized-disturbance proxy sets be
\begin{align}
\widehat{\mathcal W}_j
=
\langle \hat c_j,\hat G_{d,j},\hat G_{s,j}\rangle_{\mathrm{P}},
\qquad j=0,\ldots,T-1 .
\end{align}
Fix a confidence level \(\delta\), and define, with the truncation widths
\(\gamma_\delta\) of \eqref{eq:trunc_width} (each computed from the respective
number of stochastic generators),
\begin{align}
G_\delta=[G_d\;\;\gamma_\delta G_s],
\qquad
\hat G_{j,\delta}=[\hat G_{d,j}\;\;\gamma_\delta\hat G_{s,j}].
\end{align}
Let
\begin{align}
\mathcal W(\delta)
=
\{c+G_\delta\xi:\|\xi\|_\infty\le1\}
=
\{w:H_\delta w\le h_\delta\}
\label{eq:prior_Hrep}
\end{align}
be an exact half-space representation of the prior confidence set, and let
\begin{align}
\widehat{\mathcal W}_j(\delta)
=
\{\hat c_j+\hat G_{j,\delta}\zeta_j:
\|\zeta_j\|_\infty\le1\}.
\end{align}
Consider the linear program
\begin{subequations}
\label{eq:scaled_Hrep_refinement_lp}
\begin{align}
\mathop{\operatorname{minimize}}_{y,\rho}\quad & \rho
\label{eq:scaled_Hrep_refinement_lp_a}\\
\mathrm{s.t.}\quad
&0\le \rho\le1,
\label{eq:scaled_Hrep_refinement_lp_b}\\
&H_\delta y\le (1-\rho)h_\delta,
\label{eq:scaled_Hrep_refinement_lp_c}\\
&H_\delta \hat c_j
+
|H_\delta \hat G_{j,\delta}|\mathbf 1
\le
H_\delta y+\rho h_\delta,
\nonumber\\[-1mm]
&\hspace{34mm}
j=0,\ldots,T-1 .
\label{eq:scaled_Hrep_refinement_lp_d}
\end{align}
\end{subequations}
Let \((y^\star,\rho^\star)\) be an optimizer and define
\begin{align}
\mathcal W_\star(\delta)
:=
y^\star\oplus \rho^\star\mathcal W(\delta)
=
\{y^\star+\rho^\star c+\rho^\star G_\delta\xi:
\|\xi\|_\infty\le1\}.
\label{eq:learned_scaled_prior_set}
\end{align}
Then,
\begin{align}
\mathcal W_\star(\delta)\subseteq \mathcal W(\delta),
\,
\widehat{\mathcal W}_j(\delta)\subseteq \mathcal W_\star(\delta),
\quad j=0,\ldots,T-1 .
\end{align}
\end{lemma}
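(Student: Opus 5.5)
The plan is to rewrite everything in the half-space representation \eqref{eq:prior_Hrep} of $\mathcal W(\delta)$, so that each LP constraint becomes an exact set inclusion. First I will describe the candidate set $\mathcal W_\star(\delta)=y^\star\oplus\rho^\star\mathcal W(\delta)$ in H-form. Then constraint \eqref{eq:scaled_Hrep_refinement_lp_c} will give nesting in the prior, and constraint \eqref{eq:scaled_Hrep_refinement_lp_d} will give containment of the proxies. The argument uses only the optimizer's feasibility, not its optimality. Existence of an optimizer is assumed in the statement; for instance, $(y,\rho)=(0,1)$ is feasible whenever every $\widehat{\mathcal W}_j(\delta)\subseteq\mathcal W(\delta)$.

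\emph{Step 1 (H-representation of the scaled copy).} For any feasible $(y,\rho)$, I will show that
\[
y\oplus\rho\,\mathcal W(\delta)=\{w:\ H_\delta w\le H_\delta y+\rho h_\delta\}.
\]
\begin{itemize}
\item If $\rho>0$: $w=y+\rho v$ with $v\in\mathcal W(\delta)$ holds if and only if $v=(w-y)/\rho$ satisfies $H_\delta v\le h_\delta$. Multiplying by $\rho$ gives the claimed form.
\item If $\rho=0$: the left side is $\{y\}$, and the right side is $y+\{d: H_\delta d\le 0\}$. The set $\{d:H_\delta d\le0\}$ is the recession cone of the nonempty polytope $\mathcal W(\delta)$. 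Since $\mathcal W(\delta)$ is a bounded zonotope, this cone is $\{0\}$, so the identity still holds.
\end{itemize}
This degenerate case is the only delicate point. It is where exactness and boundedness of the H-representation are used.

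\emph{Step 2 (nesting in the prior).} Take any $w=y^\star+\rho^\star v$ with $v\in\mathcal W(\delta)$. Using \eqref{eq:scaled_Hrep_refinement_lp_c} and $H_\delta v\le h_\delta$ (valid since $\rho^\star\ge0$),
\[
H_\delta w= H_\delta y^\star+\rho^\star H_\delta v\le (1-\rho^\star)h_\delta+\rho^\star h_\delta=h_\delta .
\]
By exactness of \eqref{eq:prior_Hrep}, $w\in\mathcal W(\delta)$. Hence $\mathcal W_\star(\delta)\subseteq\mathcal W(\delta)$.

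\emph{Step 3 (containment of the proxies).} A zonotope lies in a polyhedron if and only if each row inequality holds at the maximum over the zonotope. For row $i$,
\[
\max_{\|\zeta_j\|_\infty\le1} H_{\delta,i}(\hat c_j+\hat G_{j,\delta}\zeta_j)=H_{\delta,i}\hat c_j+|H_{\delta,i}\hat G_{j,\delta}|\mathbf 1 ,
\]
the maximum being attained at $\zeta_j=\operatorname{sign}(H_{\delta,i}\hat G_{j,\delta})^\top$. So \eqref{eq:scaled_Hrep_refinement_lp_d} says exactly that every point of $\widehat{\mathcal W}_j(\delta)$ satisfies $H_\delta w\le H_\delta y^\star+\rho^\star h_\delta$. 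By Step 1 this gives $\widehat{\mathcal W}_j(\delta)\subseteq\mathcal W_\star(\delta)$ for each $j=0,\ldots,T-1$.

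Finally, I will verify the parametric form in \eqref{eq:learned_scaled_prior_set}. This is immediate by substituting $v=c+G_\delta\xi$ into $y^\star+\rho^\star v$.
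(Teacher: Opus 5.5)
Your proposal is correct and follows essentially the same route as the paper: write $\mathcal W_\star(\delta)$ in the half-space form $\{w: H_\delta w\le H_\delta y^\star+\rho^\star h_\delta\}$, use constraint \eqref{eq:scaled_Hrep_refinement_lp_c} for nesting in the prior, and combine the row-wise bound $H_\delta\hat G_{j,\delta}\zeta_j\le|H_\delta\hat G_{j,\delta}|\mathbf 1$ with constraint \eqref{eq:scaled_Hrep_refinement_lp_d} for containment of the proxies. Your version is tidier in two places: the direct estimate $H_\delta w\le(1-\rho^\star)h_\delta+\rho^\star h_\delta$ covers all $\rho^\star\in[0,1]$ at once, whereas the paper splits into $\rho^\star<1$ (a convex-combination argument with $p^\star=y^\star/(1-\rho^\star)$) and $\rho^\star=1$; and your recession-cone remark handles the case $\rho^\star=0$, which the paper's ``equivalently'' step for the half-space form passes over silently.
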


\begin{proof}
By Definition~\ref{def:confidence_set}, the high-confidence disturbance
prior is the deterministic zonotope \(\mathcal W(\delta)\), which admits the
exact half-space representation given in~\eqref{eq:prior_Hrep}.
The learned disturbance set is defined as
\eqref{eq:learned_scaled_prior_set}.
Thus, \(\mathcal W_\star(\delta)\) uses exactly the same generator
directions as the prior set; only its translation \(y^\star\) and scale
\(\rho^\star\) are learned.

We first prove that
\(\mathcal W_\star(\delta)\subseteq\mathcal W(\delta)\). From
constraint~\eqref{eq:scaled_Hrep_refinement_lp_c}, we have \vspace{-12pt}
\begin{align}
H_\delta y^\star\le (1-\rho^\star)h_\delta .
\label{eq:proof_y_scaled_hrep}
\end{align}
Since \(0\le\rho^\star\le1\), this means that
\(y^\star\in(1-\rho^\star)\mathcal W(\delta)\). Equivalently, there exists
\(p^\star\in\mathcal W(\delta)\) such that
\begin{align}
y^\star=(1-\rho^\star)p^\star .
\label{eq:proof_anchor_relation}
\end{align}
Thus,
\(
p^\star:=\frac{y^\star}{1-\rho^\star}
\)
satisfies \(H_\delta p^\star\le h_\delta\) by \eqref{eq:prior_Hrep}. Therefore, for
\(0\le\rho^\star<1\), the learned set can be written as
\(
\mathcal W_\star(\delta)
=
(1-\rho^\star)p^\star
\oplus
\rho^\star\mathcal W(\delta),
\)
which is contained in \(\mathcal W(\delta)\) by convexity.

The remaining case is \(\rho^\star=1\). Then,
\eqref{eq:proof_y_scaled_hrep} gives \(H_\delta y^\star\le0\). For any
\(w\in\mathcal W_\star(\delta)\), there exists
\(z\in\mathcal W(\delta)\) such that \(w=y^\star+z\). Hence,
\(
H_\delta w
=
H_\delta y^\star+H_\delta z
\le
0+h_\delta
=
h_\delta .
\)
Thus, \(w\in\mathcal W(\delta)\), and consequently
\(\mathcal W_\star(\delta)\subseteq\mathcal W(\delta)\).

Now take any \(w\in\mathcal W_\star(\delta)\). By
\eqref{eq:learned_scaled_prior_set}, there exists
\(\xi\), with \(\|\xi\|_\infty\le1\), such that
\(
w
=
y^\star+\rho^\star(c+G_\delta\xi).
\)
Using \eqref{eq:proof_anchor_relation}, this becomes
\(
w
=
(1-\rho^\star)p^\star
+
\rho^\star(c+G_\delta\xi).
\)
Both \(p^\star\) and \(c+G_\delta\xi\) belong to
\(\mathcal W(\delta)\). Since \(\mathcal W(\delta)\) is convex and
\(0\le\rho^\star\le1\), the point \(w\) is a convex combination of two
points in \(\mathcal W(\delta)\). Hence \(w\in\mathcal W(\delta)\). Since
\(w\) was arbitrary, we obtain \(
\mathcal W_\star(\delta)\subseteq\mathcal W(\delta).
\)

We now prove that each refined proxy confidence set is contained in the
learned set. Fix \(j\in\{0,\ldots,T-1\}\). The \(j\)-th proxy confidence set
is
\(
\widehat{\mathcal W}_j(\delta)
=
\{\hat c_j+\hat G_{j,\delta}\zeta_j:
\|\zeta_j\|_\infty\le1\}.
\)
We first show that the learned set can be written in half-space form as
\begin{align}
\mathcal W_\star(\delta)
=
\{w:H_\delta w\le H_\delta y^\star+\rho^\star h_\delta\}.
\label{eq:proof_learned_hrep}
\end{align}
Note that
\(w\in y^\star\oplus\rho^\star\mathcal W(\delta)\) if and only if
\(w-y^\star\in\rho^\star\mathcal W(\delta)\). Since
\(\mathcal W(\delta)=\{z:H_\delta z\le h_\delta\}\), this is equivalent to
\(
H_\delta(w-y^\star)\le\rho^\star h_\delta,
\)
or equivalently
\(
H_\delta w\le H_\delta y^\star+\rho^\star h_\delta .
\) Therefore, to prove
\(\widehat{\mathcal W}_j(\delta)\subseteq\mathcal W_\star(\delta)\), it is
enough to show that every
\(w\in\widehat{\mathcal W}_j(\delta)\) satisfies the inequalities in
\eqref{eq:proof_learned_hrep}. Take an arbitrary
\(w\in\widehat{\mathcal W}_j(\delta)\). Then, there exists
\(\zeta_j\), with \(\|\zeta_j\|_\infty\le1\), such that
\(
w=\hat c_j+\hat G_{j,\delta}\zeta_j .
\label{eq:proof_proxy_element}
\)
Multiplying by \(H_\delta\) gives
\begin{align}
H_\delta w
=
H_\delta\hat c_j
+
H_\delta\hat G_{j,\delta}\zeta_j .
\label{eq:proof_proxy_hmap}
\end{align}
For each row \(i\), using \(\|\zeta_j\|_\infty\le1\), we have
\begin{align}
\bigl(H_\delta\hat G_{j,\delta}\zeta_j\bigr)_i
&\le
\sum_{\ell}
\left|
\bigl(H_\delta\hat G_{j,\delta}\bigr)_{i\ell}
\right|
\left|
(\zeta_j)_\ell
\right| \nonumber \\
&\le
\sum_{\ell}
\left|
\bigl(H_\delta\hat G_{j,\delta}\bigr)_{i\ell}
\right|.
\label{eq:proof_row_bound}
\end{align}
In vector form, this gives
\begin{align}
H_\delta\hat G_{j,\delta}\zeta_j
\le
|H_\delta\hat G_{j,\delta}|\mathbf 1 .
\label{eq:proof_vector_bound}
\end{align}
Combining \eqref{eq:proof_proxy_hmap} and
\eqref{eq:proof_vector_bound} yields
\(
H_\delta w
\le
H_\delta\hat c_j
+
|H_\delta\hat G_{j,\delta}|\mathbf 1 .
\)
By constraint~\eqref{eq:scaled_Hrep_refinement_lp_d},
\(
H_\delta\hat c_j
+
|H_\delta\hat G_{j,\delta}|\mathbf 1
\le
H_\delta y^\star+\rho^\star h_\delta .
\)
Therefore,
\(
H_\delta w
\le
H_\delta y^\star+\rho^\star h_\delta .
\)
By \eqref{eq:proof_learned_hrep}, this implies
\(w\in\mathcal W_\star(\delta)\). Since \(w\) was arbitrary, one has
\(
\widehat{\mathcal W}_j(\delta)
\subseteq
\mathcal W_\star(\delta).
\)
Since \(j\) was arbitrary, the containment holds for all
\(j=0,\ldots,T-1\).

Finally, the optimization minimizes \(\rho\). Thus, among all translated
and scaled copies of the prior confidence set that contain all refined proxy
confidence sets and remain inside the conservative prior, the optimizer
returns the tightest one according to the scale factor \(\rho\). This
completes the proof.
\end{proof}}

\begin{remark}
If the refined proxy \(\widehat{\mathcal W}_j\) is obtained from a projected
column of the stacked CMPZ, it may inherit a deterministic equality constraint
that involves latent variables from the full stacked disturbance sequence.
Thus, after stochastic constraint absorption, the proxy may have the form
\(\widehat{\mathcal W}_j=\{\hat c_j+\hat G_{d,j}\alpha+\hat G_{s,j}\nu \; ; \;
\hat A\alpha=\hat b,\;\alpha\in[-1,1]^{m_d},\;
\nu\sim\mathcal N(0,I_{m_s})\}\), where \(\alpha\) may contain coefficients
associated with all samples, not only the \(j\)-th one. Hence, the equality
constraint captures coupling among realized disturbances. Lemma~\ref{lem:exact_containment_set_refinement} can still be used after
replacing this equality-constrained proxy by a standard PZ
overapproximation. Let \(\alpha_0=\hat A^\dagger\hat b\), and let \(N_d\)
span \(\ker(\hat A)\). Then every feasible \(\alpha\) can be written as
\(\alpha=\alpha_0+N_d\beta\). The box constraint
\(\alpha\in[-1,1]^{m_d}\) may impose additional restrictions on \(\beta\);
dropping these restrictions and using a bounded box
\(\beta\in[-1,1]^{r_d}\), after scaling \(N_d\) if needed, gives a
conservative standard PZ
\(\overline{\mathcal W}_j
=\langle \bar c_j,\bar G_{d,j},\hat G_{s,j}\rangle_{\mathcal P}\), where
\(\bar c_j=\hat c_j+\hat G_{d,j}\alpha_0\) and
\(\bar G_{d,j}=\hat G_{d,j}N_d\). This removes the explicit deterministic
equality while preserving the outer containment
\(\widehat{\mathcal W}_j(\delta)\subseteq
\overline{\mathcal W}_j(\delta)\). Therefore, applying the LP in~\eqref{eq:scaled_Hrep_refinement_lp} to
\(\overline{\mathcal W}_j(\delta)\) guarantees containment of the original
coupled equality-constrained proxy as well. The learned set may be more
conservative, but
\(\widehat{\mathcal W}_j(\delta)\subseteq
\overline{\mathcal W}_j(\delta)\subseteq
\mathcal W_\star(\delta)\) remains valid.
\end{remark}

Because $T$ is finite, there remains a risk that a future disturbance realization falls outside the learned set $\mathcal{W}_\star^{(\gamma)}$. We bound this risk strictly using the scenario optimization framework.

\begin{lemma}
\label{lem:scenario_optimization_bound}
Assume the \(T\) proxy disturbance sets
\(\widehat{\mathcal{W}}_j\), \(j=0,\ldots,T-1\), are independent random
scenarios drawn from an underlying unknown distribution of disturbances.
Let \(\epsilon\in(0,1)\) denote the acceptable probability that a newly
drawn proxy set is not contained in the learned disturbance set, and let
\(\rho\in(0,1)\) denote the confidence parameter of the scenario guarantee.
Let the support rank be bounded by \(d=n_w+1\). If the number of collected
proxy sets \(T\) satisfies
\begin{equation}
    \sum_{i=0}^{d-1}
    {T \choose i}
    \epsilon^i(1-\epsilon)^{T-i}
    \le \rho,
    \label{eq:scenario_complexity}
\end{equation}
then, with probability at least \(1-\rho\) over the extraction of the
\(T\) proxy sets, the learned high-confidence disturbance set
\(\mathcal{W}_\star(\delta)\) covers a newly drawn unseen proxy set
\(\widehat{\mathcal{W}}_{\mathrm{new}}\) with probability at least
\(1-\epsilon\). Equivalently,
\begin{equation}
    \mathbb{P}^{T}\!\left(
    \mathbb{P}\!\left\{
    \widehat{\mathcal{W}}_{\mathrm{new}}
    \not\subseteq
    \mathcal{W}_\star(\delta)
    \right\}
    \le
    \epsilon
    \right)
    \ge
    1-\rho .
    \label{eq:scenario_generalization_bound}
\end{equation}
Here, \(\mathbb{P}^{T}\) is the product probability measure over the
\(T\) observed proxy sets. 
\end{lemma}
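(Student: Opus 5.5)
The plan is to read the LP \eqref{eq:scaled_Hrep_refinement_lp} of Lemma~\ref{lem:exact_containment_set_refinement} as a convex scenario program and invoke the classical result of Campi and Garatti on the violation probability of scenario programs. This result holds for convex programs with a unique optimizer and a finite number of support constraints (``support rank''). Concretely, the decision variable is $(y,\rho)\in\mathbb R^{n_w}\times\mathbb R$, so the optimization space has dimension $n_w+1=d$. The cost is linear and the deterministic constraints \eqref{eq:scaled_Hrep_refinement_lp_b}--\eqref{eq:scaled_Hrep_refinement_lp_c} are fixed polyhedral sets. Each scenario $\widehat{\mathcal W}_j$ contributes the block of linear inequalities \eqref{eq:scaled_Hrep_refinement_lp_d}. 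I will denote this block by the convex set $\Delta_j:=\{(y,\rho): H_\delta\hat c_j+|H_\delta\hat G_{j,\delta}|\mathbf 1\le H_\delta y+\rho h_\delta\}$. The random element of each constraint depends only on $\widehat{\mathcal W}_j$, and by hypothesis the proxies are i.i.d. Hence the LP has exactly the structure $\min c^\top x$ s.t. $x\in\mathcal X\cap\bigcap_{j}\Delta_j$ required by scenario theory.

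Next I will link constraint violation to set non-containment. For any feasible $(y,\rho)$, the argument in the proof of Lemma~\ref{lem:exact_containment_set_refinement} shows that $(y,\rho)\in\Delta_{\mathrm{new}}$ implies $\widehat{\mathcal W}_{\mathrm{new}}(\delta)\subseteq y\oplus\rho\mathcal W(\delta)$. By contraposition, the event $\{\widehat{\mathcal W}_{\mathrm{new}}\not\subseteq\mathcal W_\star(\delta)\}$ is contained in the event $\{(y^\star,\rho^\star)\notin\Delta_{\mathrm{new}}\}$. It therefore suffices to bound the violation probability $V(y^\star,\rho^\star):=\mathbb P\{(y^\star,\rho^\star)\notin\Delta_{\mathrm{new}}\}$. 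Campi--Garatti gives $\mathbb P^T\{V(y^\star,\rho^\star)>\epsilon\}\le\sum_{i=0}^{d-1}\binom{T}{i}\epsilon^i(1-\epsilon)^{T-i}$. Under \eqref{eq:scenario_complexity} the right-hand side is at most $\rho$. Taking complements then yields \eqref{eq:scenario_generalization_bound}. Here containment is meant for the confidence sets, consistent with how $\mathcal W_\star(\delta)$ is defined.

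Two technical points remain, and they are the main obstacle. \textbf{First}, the scenario theorem needs feasibility and uniqueness of the optimizer for every draw. Feasibility I will get from $(y,\rho)=(0,1)$, which makes $\mathcal W_\star=\mathcal W(\delta)$ and is feasible whenever every proxy confidence set lies in the prior. I will assume this, or argue it from the proxies being refinements of $\mathcal W$ via Lemma~\ref{lem:uncertainty_reduction}. Uniqueness I will enforce by a standard lexicographic (or small strictly convex) tie-break on $y$, which does not affect the support-constraint count. \textbf{Second}, each scenario contributes a \emph{block} of inequalities rather than a single one. The Helly-type bound of Calafiore--Campi still applies because the number of support scenarios is bounded by the dimension $d=n_w+1$ of the decision vector, not by the number of rows; this justifies the stated support rank. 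I will also make explicit the measurability of $\Delta_j$ as a function of the proxy parameters $(\hat c_j,\hat G_{j,\delta})$, which is immediate since it is defined by continuous functions of them.
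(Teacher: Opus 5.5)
Your proposal is correct and takes the same route as the paper, whose proof is only a one-line appeal to the scenario approach. You supply the details it omits: the decision vector $(y,\rho)$ of dimension $n_w+1$, convex per-proxy constraint blocks whose satisfaction certifies containment, a tie-break rule (needed because $y^\star$ is generally non-unique when only $\rho$ is minimized), and the Campi--Garatti bound, which is indeed the source of the binomial-tail condition \eqref{eq:scenario_complexity}. The one step to drop is your fallback of deriving feasibility from Lemma~\ref{lem:uncertainty_reduction}: chaining \eqref{eq:scaled_Hrep_refinement_lp_d} with \eqref{eq:scaled_Hrep_refinement_lp_c} shows the LP is feasible if and only if $\widehat{\mathcal W}_j(\delta)\subseteq\mathcal W(\delta)$ for every $j$. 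Lemma~\ref{lem:uncertainty_reduction} concerns the untruncated sets, so it does not give this for the $\delta$-confidence sets of the re-centred, re-parametrized and fused proxies, whose centres track realized disturbances that can lie outside $\mathcal W(\delta)$. Feasibility must therefore simply be assumed almost surely, as the paper implicitly does by presupposing that an optimizer exists.
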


\begin{proof}
The proof follows from the scenario-based optimization framework
in~\cite{1632303}.
\end{proof}

{
\begin{remark}
\label{rem5}
Lemmas~\ref{lem:prior_to_wk}, \ref{lem:datacons_nzmean},
\ref{lem:AB_from_CMPZ}, and
\ref{lem:exact_containment_set_refinement} are used sequentially and on
different time scales in Algorithm~\ref{alg:refined_reach}. At update window
\(j\), the current aleatory disturbance prior \(\mathcal W^{(j)}\) is first
frozen. Lemma~\ref{lem:datacons_nzmean} uses this frozen prior and the
trajectory-consistency constraint to construct data-consistent
realized-disturbance proxies. In parallel, Lemma~\ref{lem:prior_to_wk} uses
the current model prior to construct model-consistent realized-disturbance
proxies. These two descriptions are fused sample-wise using
Lemma~\ref{lem:exact_cpz_intersection}.  Thus, the fused
proxy contains realized disturbances consistent with both the data-consistency path
and the model-prior path.

The fused proxies are then used in a batch-wise refinement loop. For the
current data batch, the proxies in~\eqref{fuzed1} are stacked and used in
Lemma~\ref{lem:AB_from_CMPZ} to identify a refined admissible system model
set. Once enough fused proxies have been collected,
Lemma~\ref{lem:exact_containment_set_refinement} also uses them to learn a
nested aleatory disturbance set
\(
\mathcal W_\star^{(j+1)}(\delta)
\subseteq
\mathcal W^{(j)}(\delta).
\)
At the next update window, the CMPZ learned model set is fused with the MPZ
model prior in Assumption~\ref{PK}. To this end,
Remark~\ref{CMPZ-CPZ} is first used to express the CMPZ and MPZ as a CPZ and
a PZ, respectively. Lemma~\ref{lem:generalized_constraint_absorption} is then
applied to the resulting CPZ to absorb the stochastic part of its equality
constraint, after which Lemma~\ref{lem:exact_cpz_intersection} is used to fuse the
two model descriptions. The fused set is used as the updated model prior in
Lemma~\ref{lem:prior_to_wk}.
This update introduces an equality constraint into the model prior
\(
\Theta
=
\Bigl\{
\mathrm{vec}(\theta)
=
c_{\theta}
+
G_{\theta,d}\alpha_\theta
+
G_{\theta,s}\nu_\theta
:\;
\alpha_\theta\in[-1,1]^{m_{\theta_d}},
\;
\nu_\theta\sim\mathcal N(0,I_{m_{\theta_s}})
\Bigr\}.
\)
This equality constraint is not discarded. It is carried through
Lemma~\ref{lem:prior_to_wk} and appears in the induced disturbance set
\eqref{eq:wtheta_k_set}. Similarly, the learned aleatory set is used as
the updated disturbance prior in Lemma~\ref{lem:datacons_nzmean}. Hence, the
batch-wise order is
\begin{equation}
\mathcal W^{(j)},\mathcal S_{AB}^{(j)}
\Longrightarrow
\widehat{\mathcal W}^{c}_{k}
\Longrightarrow
\mathcal S_{AB}^{(j+1)}
\Longrightarrow
\mathcal W_{\star}^{(j+1)}
\Longrightarrow
\widehat{\mathcal R}_{k}^{(j+1)} .
\label{eq:sequential_refinement_order}
\end{equation}
Thus, improved priors refine the realized disturbances, and the refined realized disturbances are then used to further tighten the model and disturbance priors batch by batch. Based on Lemma \ref{lem:scenario_optimization_bound}, until the scenario condition
\eqref{eq:scenario_complexity} with \(\rho=\delta\) is satisfied,
Algorithm~\ref{alg:refined_reach} continues to use the conservative aleatory
prior for future propagation, while realized-disturbance and model-set
refinements can still be updated periodically.
\end{remark}
}

\section{Reachable-Set Refinement by Disturbance Reduction}
\label{sec:main_results}
This section leverages the results of the uncertainty reduction section to progressively refine reachable-set over-approximations. Algorithm 2 presents the proposed reachability estimation approach, which achieves substantially tighter reachable-set overapproximations than the conservative
baseline of Algorithm~\ref{alg:hp_reach_single_window}.

% The following algorithm summarizes the proposed procedure for periodically
% refining reachable sets by combining: (i) consistency-based refinement of realized disturbances using measured data and
% prior knowledge, which reduces uncertainty in the realized disturbance sequence and yields a tighter data-consistent
% model set; (ii) a conservative, nested refinement of the aleatory disturbance set using the refined
% realized-disturbance proxies, ensuring the learned disturbance model remains contained in the prior admissible set; and
% (iii) forward propagation with the refined model and disturbance sets to compute a deterministic reachable-set
% over-approximation together with a formal high-probability containment guarantee for the true reachable sets. The
% algorithm updates the reachable set after every fixed number of samples by tightening both disturbance representations,
% thereby reducing conservativeness in prediction and control.

\begin{algorithm}[t]
\caption{Periodic Reachable-Set Refinement via Realized
and Aleatory Disturbance Learning}
\label{alg:refined_reach}
\begin{algorithmic}[1]
\Require initial dataset \(\mathcal D^{(0)}\); initial set \(\mathcal X_0\);
horizon \(H\); input sets \(\{\mathcal U_h\}_{h=0}^{H-1}\);
initial disturbance prior \(\mathcal W^{(0)}\); update period \(M\);
confidence parameter \(\delta\).
\Ensure reachable tubes \(\{\widehat{\mathcal R}_h^{(j)}\}_{h=0}^{H}\)
and updated models \((\mathcal S^{(j)},\mathcal W^{(j)})\).

\State \(t=0,\; j=0,\; \mathcal D\gets\mathcal D^{(0)}\).
\While{true}
\If{\(\operatorname{rank}\!\left(\begin{bmatrix}X_0\\ U_0\end{bmatrix}\right)=n+m\)
\textbf{and} \(\operatorname{mod}(t,M-1)=0\)}
    \State Freeze \(\mathcal W^{(j)}\).
    \State Use Lemma~\ref{lem:datacons_nzmean} with \(\mathcal W^{(j)}\) to obtain the data-consistent stacked proxy \(\widehat{\mathcal W}^{c}_{D,N}\).
    \State Apply Lemma~\ref{lem:generalized_constraint_absorption} to the stacked proxy \(\widehat{\mathcal W}^{c}_{D,N}\) and project column-wise (retaining the global deterministic constraint) to obtain \(\{\widehat{\mathcal W}^{c}_{D,k}\}_{k=0}^{T-1}\).
    \State Use Lemma~\ref{lem:prior_to_wk} to obtain model-consistent proxies \(\{\widehat{\mathcal W}^{c}_{\theta,k}\}_{k=0}^{T-1}\), whose latent variables are shared across \(k\).
    \State Use Lemma \ref{lem:exact_cpz_intersection} to fuse \(\widehat{\mathcal W}^{c}_{D,k}\) with \(\widehat{\mathcal W}^{c}_{\theta,k}\) to obtain \(\widehat{\mathcal W}^{c}_{k}\).
    \State Stack \(\{\widehat{\mathcal W}^{c}_{k}\}_{k=0}^{T-1}\), construct
    \(\mathcal S^{(j+1)}\gets\mathcal S_{AB}(\widehat{\mathcal W}^{c}_{N})\) via Lemma~\ref{lem:AB_from_CMPZ}, and update the model prior.
    \If{the scenario condition in Lemma~\ref{lem:scenario_optimization_bound} holds}
        \State Learn \(\mathcal W_\star\) via Lemma~\ref{lem:exact_containment_set_refinement} and set \(\mathcal W^{(j+1)}\gets\mathcal W_\star\).
    \Else
        \State Set \(\mathcal W^{(j+1)}\gets\mathcal W^{(j)}\).
    \EndIf
    \State \(\widehat{\mathcal R}^{(j)}_0\gets\mathcal X_0\).
    \For{\(h=0,\ldots,H-1\)}
        \State \(\widehat{\mathcal R}^{(j)}_{h+1}
        \gets
        \mathcal S^{(j+1)}(\delta)
        \boxtimes
        (\widehat{\mathcal R}^{(j)}_h\times\mathcal U_h)
        \oplus
        \mathcal W^{(j+1)}(\delta)\).
    \EndFor
    \State \(j\gets j+1,\quad \mathcal D\gets\emptyset\).
\EndIf
\State \(\mathcal D\gets\mathcal D\cup\{(x_t,x_{t+1},u_t)\}\),
\(t\gets t+1\).
\EndWhile
\end{algorithmic}
\end{algorithm}

% Algorithm~\ref{alg:refined_reach} summarizes the proposed periodic
% refinement procedure. As new input--state data become available, the
% algorithm updates the reachable-set computation after every fixed number
% of samples by combining three steps. First, it refines the past realized
% disturbance sequence using Lemma~\ref{lem:prior_to_wk} and
% Lemma~\ref{lem:datacons_nzmean}: Lemma~\ref{lem:prior_to_wk}, combined with Proposition 1, uses prior
% model information and disturbance information to construct prior-consistent realized-disturbance
% proxies, while Lemma~\ref{lem:datacons_nzmean} enforces data consistency constraints over them. These refined realized proxies are then used in Lemma~\ref{lem:AB_from_CMPZ} to obtain a tighter
% data-consistent model set. Second, once enough refined realized-disturbance proxies have been
% collected, Lemma~\ref{lem:exact_containment_set_refinement} performs a
% conservative nested refinement of the aleatory disturbance set, ensuring
% that the learned disturbance model remains contained in the prior
% admissible set. Third, the algorithm propagates the refined model and
% disturbance sets forward using the recursion analyzed below. Thus,
% Algorithm~\ref{alg:refined_reach} reduces conservatism in prediction and
% control by periodically tightening both the model uncertainty induced by
% past data and the disturbance uncertainty used for future propagation.

Algorithm~\ref{alg:refined_reach} follows the sequential update logic in
Remark~\ref{rem5}. At update window \(j\), the current disturbance prior
\(\mathcal W^{(j)}\) is frozen. Lemma~\ref{lem:datacons_nzmean} uses this
frozen prior to impose trajectory consistency and construct data-consistent
realized-disturbance proxies. In parallel, Lemma~\ref{lem:prior_to_wk} uses
the current model prior, which may have been updated from previous learned
model sets as described in Remark~\ref{rem5}, to construct
model-consistent proxies.

The two proxy descriptions are fused using
Lemma~\ref{lem:exact_cpz_intersection}. The fused proxies then tighten the
data-consistent model set through Lemma~\ref{lem:AB_from_CMPZ}. Once enough
refined proxies are available, Lemma~\ref{lem:exact_containment_set_refinement}
contracts the future aleatory disturbance set while keeping it nested inside
the frozen prior. The propagation loop then uses the updated model set and
the refined disturbance set.

This subsection proves the containment guarantees for the final
propagation loop of Algorithm~\ref{alg:refined_reach}. 

% Once the corresponding
% \(\delta\)-confidence sets from Definition~\ref{def:confidence_set} are
% fixed, the reachable-set recursion is deterministic. 

% The next lemma formalizes the one-step containment property used by the propagation loop
% of Algorithm~\ref{alg:refined_reach}, and the subsequent corollary gives
% the corresponding multi-step high-probability tube guarantee.

Fix a confidence level \(\delta\in(0,1)\) as in
Definition~\ref{def:confidence_set}, and set \(\varepsilon=\delta\). Let
\(\widehat{\mathcal{S}}_{AB}^{c}(\delta)\) denote the truncated dynamics
set from Definition~\ref{def:confidence_set} and define the event
\begin{equation}
\label{eq:event_AB_trunc}
E_{AB}(\delta)
= \{[A\ B]\in\widehat{\mathcal{S}}_{AB}^{c}(\delta)\},
\,
\mathbb{P}\bigl(E_{AB}(\delta)\bigr)= 1-\varepsilon .
\end{equation}
Let \(\mathcal{W}_\star(\delta)\) denote the high-confidence refined disturbance zonotope and define the per-step event
\begin{equation}
\label{eq:event_w_trunc}
E_{w,k}(\delta) = \{w_k\in\mathcal{W}_\star(\delta)\},
\,\,
\mathbb{P}\bigl(E_{w,k}(\delta)\bigr)= 1-\varepsilon,
\quad k\ge 0.
\end{equation}
No temporal independence is assumed in~\eqref{eq:event_w_trunc}.

\begin{lemma}
\label{lem:one_step_trunc_invariance_no_indep}
Fix deterministic input sets \(\{\mathcal{U}_k\}_{k\ge 0}\) and
initial set \(\mathcal{X}_0\). Let the truncation events
\(E_{AB}(\delta)\) and \(E_{w,k}(\delta)\) be defined as
in~\eqref{eq:event_AB_trunc}--\eqref{eq:event_w_trunc}.
Define \(\hat{\mathcal{R}}_0=\mathcal{X}_0\) and, for \(k\ge 0\),
\begin{equation}
\label{eq:reach_recursion_MZ}
\hat{\mathcal{R}}_{k+1}
=
\widehat{\mathcal{S}}_{AB}^{c}(\delta)
\boxtimes\bigl(\hat{\mathcal{R}}_k\times\mathcal{U}_k\bigr)
\oplus\mathcal{W}_\star(\delta),
\end{equation}
where \(\boxtimes\) denotes the exact linear map of a set through
all matrices in \(\widehat{\mathcal{S}}_{AB}^{c}(\delta)\), defined
as~\cite{zhang2025data}. Let
\(E_k^\delta
= E_{AB}(\delta)\cap\bigcap_{i=0}^{k-1}E_{w,i}(\delta)\).
Then, for all \(k\ge 0\),
\begin{equation}
\label{eq:cond_prob_one_step}
\mathbb{P}\!\left(
x_{k+1}\in\hat{\mathcal{R}}_{k+1}
\;\big|\;
E_k^\delta\cap E_{w,k}(\delta)
\right)=1,
\end{equation}
equivalently,
\begin{equation}
\label{eq:joint_prob_one_step}
\mathbb{P}\!\left(
x_{k+1}\in\hat{\mathcal{R}}_{k+1}
\wedge E_k^\delta\wedge E_{w,k}(\delta)
\right)
=
\mathbb{P}\!\left(
E_k^\delta\wedge E_{w,k}(\delta)
\right).
\end{equation}
\end{lemma}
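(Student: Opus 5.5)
We plan to prove the statement by showing a sure (pointwise) set inclusion on the conditioning event and then reading off both probabilistic forms. Concretely, we first establish the deterministic claim
\begin{equation*}
E_k^\delta\cap E_{w,k}(\delta)\subseteq \bigl\{x_{t}\in\hat{\mathcal R}_{t},\;t=0,\ldots,k+1\bigr\},
\end{equation*}
that is, for every outcome $\omega$ in the intersection, the realized state at step $k+1$ lies in $\hat{\mathcal R}_{k+1}$. Since the recursion \eqref{eq:reach_recursion_MZ} uses only fixed deterministic sets $\widehat{\mathcal S}_{AB}^{c}(\delta)$, $\mathcal U_k$, $\mathcal X_0$, $\mathcal W_\star(\delta)$, the sets $\hat{\mathcal R}_t$ are non-random. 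The claim is therefore purely about realizations, and no independence between the $w_t$ is needed. This is consistent with the remark that temporal independence is not assumed.

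The inclusion is proved by induction on $t$, mirroring the argument in the proof of Theorem~\ref{thm:hp_reach_single_window}. For the base case, $x_0\in\mathcal X_0=\hat{\mathcal R}_0$. For the inductive step, fix $t\le k$ and suppose $x_t\in\hat{\mathcal R}_t$ on $E_k^\delta\cap E_{w,k}(\delta)$. This event implies $E_{AB}(\delta)$, so $[A^\star\;B^\star]\in\widehat{\mathcal S}_{AB}^{c}(\delta)$. It also implies $E_{w,t}(\delta)$, because $E_k^\delta$ contains $E_{w,0},\ldots,E_{w,k-1}$ and the remaining index $t=k$ is supplied by $E_{w,k}(\delta)$; hence $w_t\in\mathcal W_\star(\delta)$. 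With $u_t\in\mathcal U_t$, we write $x_{t+1}=[A^\star\;B^\star]\,\mathrm{col}(x_t,u_t)+w_t$. Here $\mathrm{col}(x_t,u_t)\in\hat{\mathcal R}_t\times\mathcal U_t$, and by the definition of $\boxtimes$ as the exact image through all matrices of the model set~\cite{zhang2025data}, the product $[A^\star\;B^\star]\,\mathrm{col}(x_t,u_t)$ lies in $\widehat{\mathcal S}_{AB}^{c}(\delta)\boxtimes(\hat{\mathcal R}_t\times\mathcal U_t)$. Adding $w_t$ and using the definition of $\oplus$ gives $x_{t+1}\in\hat{\mathcal R}_{t+1}$.

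For the probabilistic conclusion, the set inclusion gives $\{x_{k+1}\in\hat{\mathcal R}_{k+1}\}\cap E_k^\delta\cap E_{w,k}(\delta)=E_k^\delta\cap E_{w,k}(\delta)$, and taking $\mathbb P$ of both sides yields \eqref{eq:joint_prob_one_step}. If $\mathbb P(E_k^\delta\cap E_{w,k}(\delta))>0$, dividing by it gives \eqref{eq:cond_prob_one_step}. If this probability is zero, the conditional statement is vacuous or taken by convention, and only the joint form is meaningful; the two forms are equivalent exactly in the nondegenerate case.

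The main obstacle is a subtlety rather than a computation. It is the measurability and well-definedness of $\{x_{k+1}\in\hat{\mathcal R}_{k+1}\}$, together with ensuring that $\boxtimes$ is genuinely an outer (here exact) image, so that membership of the true $[A^\star\;B^\star]$ transfers to membership of the product. We would rely on the cited definition of $\boxtimes$ for this point. We would also note that the induction needs $x_t\in\hat{\mathcal R}_t$ for all $t\le k$, which is why the conditioning uses the full prefix event $E_k^\delta$ rather than $E_{w,k}$ alone.
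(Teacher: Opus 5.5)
Your proposal is correct and follows essentially the same argument as the paper: for each outcome in $E_k^\delta\cap E_{w,k}(\delta)$ you show the realized state lies in the recursion's sets by induction, using membership of the true $[A^\star\;B^\star]$ in the model set, the exact-multiplication and Minkowski-sum definitions, and then you read off the joint and conditional forms. Your version is slightly more careful than the paper's, because it makes explicit that $E_{w,t}(\delta)$ for $t\le k$ is supplied by $E_k^\delta$ together with $E_{w,k}(\delta)$, and it notes that the conditional form needs $\mathbb{P}(E_k^\delta\cap E_{w,k}(\delta))>0$.
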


\begin{proof}
Fix any \(\omega\in E_k^\delta\cap E_{w,k}(\delta)\), so that
\([A^\star(\omega)\;\;B^\star(\omega)]
\in\widehat{\mathcal{S}}_{AB}^{c}(\delta)\)
and \(w_k(\omega)\in\mathcal{W}_\star(\delta)\).
On \(E_k^\delta\), by induction \(x_k\in\hat{\mathcal{R}}_k\)
(base case \(x_0\in\mathcal{X}_0=\hat{\mathcal{R}}_0\) holds
by definition). For any \(u_k\in\mathcal{U}_k\), the true
dynamics give
\begin{align}
x_{k+1}
=  [A^\star\;\;B^\star]
\begin{bmatrix}x_k\\u_k\end{bmatrix}
+ w_k(\omega).
\end{align}
Since \([A^\star\;\;B^\star]\in\widehat{\mathcal{S}}_{AB}^{c}(\delta)\)
and \((x_k,u_k)\in\hat{\mathcal{R}}_k\times\mathcal{U}_k\), one has
\begin{align}
[A^\star\;\;B^\star]
\begin{bmatrix}x_k\\u_k\end{bmatrix}
\in
\widehat{\mathcal{S}}_{AB}^{c}(\delta)
\boxtimes
\bigl(\hat{\mathcal{R}}_k\times\mathcal{U}_k\bigr).
\end{align}
Adding \(w_k(\omega)\in\mathcal{W}_\star(\delta)\) gives
\begin{align}
x_{k+1}
\in
\widehat{\mathcal{S}}_{AB}^{c}(\delta)
\boxtimes
\bigl(\hat{\mathcal{R}}_k\times\mathcal{U}_k\bigr)
\oplus\mathcal{W}_\star(\delta)
=\hat{\mathcal{R}}_{k+1}.
\end{align}
Since this holds for every
\(\omega\in E_k^\delta\cap E_{w,k}(\delta)\),
equation~\eqref{eq:cond_prob_one_step} follows.
Equation~\eqref{eq:joint_prob_one_step} is the equivalent
joint-probability form, obtained by multiplying both sides
of~\eqref{eq:cond_prob_one_step} by
\(\mathbb{P}(E_k^\delta\cap E_{w,k}(\delta))\).
\end{proof}

The previous lemma is a conditional containment result, and not a finite-horizon probability bound. The next corollary gives an explicit multi-step probability bound for the reachable tube produced by Algorithm~\ref{alg:refined_reach}. Because the high-confidence model and disturbance events are inferred from the same data, we do not assume them independent; instead we use a union (Bonferroni) bound, which holds regardless of their dependence.

\begin{corollary}
\label{thm:reach_hp_prefix_indep}
Consider the system~\eqref{syst} with \(x_0\in\mathcal{X}_0\) and
\(u_k\in\mathcal{U}_k\). Let \(\{\hat{\mathcal{R}}_k\}_{k\ge 0}\) be
generated by Algorithm~\ref{alg:refined_reach} through the exact
multiplication recursion
\begin{equation}
\label{eq:reach_recursion_MZ_corollary}
\hat{\mathcal R}_0=\mathcal X_0,\qquad
\hat{\mathcal R}_{k+1}
=
\widehat{\mathcal S}_{AB}^{c}(\delta)
\boxtimes
\bigl(\hat{\mathcal R}_k\times\mathcal U_k\bigr)
\oplus
\mathcal W_\star(\delta),
\end{equation}
where \(\boxtimes\) denotes exact multiplication of the matrix set with the
state-input set \cite{zhang2025data}, and \(\delta\) denotes the confidence level in
Definition~\ref{def:confidence_set}.  Then, for every finite horizon \(k\ge0\),
\begin{equation}
\label{eq:reach_prob_prefix_indep}
\mathbb{P}\!\left(
\mathcal{R}_t\subseteq\hat{\mathcal{R}}_t,\;
t=0,\ldots,k
\right)
\ge
1-(k+1)\delta.
\end{equation}
\end{corollary}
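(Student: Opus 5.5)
The plan is to combine the pathwise containment argument of Lemma~\ref{lem:one_step_trunc_invariance_no_indep} with a union bound over the complementary events. I will not assume any independence among the events.

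First, fix the horizon \(k\ge0\) and consider the prefix event \(E_k^\delta=E_{AB}(\delta)\cap\bigcap_{i=0}^{k-1}E_{w,i}(\delta)\), with the empty intersection for \(k=0\) read as \(E_{AB}(\delta)\). I will show by induction on \(t\) that, for every \(\omega\in E_k^\delta\), every trajectory started in \(\mathcal X_0\) and driven by inputs \(u_t\in\mathcal U_t\) satisfies \(x_t\in\hat{\mathcal R}_t\) for \(t=0,\ldots,k\).
\begin{list4}
\item The base case \(t=0\) holds because \(\hat{\mathcal R}_0=\mathcal X_0\).
\item For the step \(t\to t+1\) with \(t<k\), note that \(\omega\in E_t^\delta\cap E_{w,t}(\delta)\), since \(E_k^\delta\) is contained in that set.
\item Lemma~\ref{lem:one_step_trunc_invariance_no_indep} then gives \(x_{t+1}\in\hat{\mathcal R}_{t+1}\). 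Its proof is pointwise in \(\omega\), so the conclusion holds for every such \(\omega\), not merely almost surely.
\end{list4}
The true reachable set \(\mathcal R_t\) is the union of such \(x_t\) over all admissible initial states and inputs. Hence \(\mathcal R_t\subseteq\hat{\mathcal R}_t\) for \(t=0,\ldots,k\) on \(E_k^\delta\), and so \(\mathbb P(\mathcal R_t\subseteq\hat{\mathcal R}_t,\ t\le k)\ge\mathbb P(E_k^\delta)\).

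Second, I lower-bound \(\mathbb P(E_k^\delta)\) by Boole's inequality applied to the complement:
\[
\mathbb P\bigl((E_k^\delta)^c\bigr)\le \mathbb P\bigl(E_{AB}(\delta)^c\bigr)+\sum_{i=0}^{k-1}\mathbb P\bigl(E_{w,i}(\delta)^c\bigr).
\]
Using \eqref{eq:event_AB_trunc}--\eqref{eq:event_w_trunc} with \(\varepsilon=\delta\), each term is at most \(\delta\), so \(\mathbb P((E_k^\delta)^c)\le(k+1)\delta\). Therefore \(\mathbb P(E_k^\delta)\ge1-(k+1)\delta\), which gives \eqref{eq:reach_prob_prefix_indep}. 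No temporal independence or independence between the model and disturbance events is used, which is consistent with the remark preceding the corollary.

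The main obstacle is justifying the per-event probabilities. For the model event, \(\mathbb P(E_{AB}(\delta))\ge1-\delta\) should follow from Definition~\ref{def:confidence_set} applied to the CMPZ in \eqref{eq:SAB_CMPZ}, after vectorizing via Remark~\ref{CMPZ-CPZ}. The deterministic equality is untouched by truncation, and only the Gaussian factor of the absorbed representation from Lemma~\ref{lem:generalized_constraint_absorption} is truncated. The disturbance event is more delicate. \(\mathcal W_\star(\delta)\) is learned from data, so \(\mathbb P(E_{w,i}(\delta))\ge1-\delta\) must be read as a statement conditional on the scenario event of Lemma~\ref{lem:scenario_optimization_bound} with \(\rho=\delta\). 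Until that condition holds, Algorithm~\ref{alg:refined_reach} uses the prior \(\mathcal W^{(j)}(\delta)\), which satisfies the bound directly by Definition~\ref{def:confidence_set}.

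I would take \eqref{eq:event_w_trunc} as the standing hypothesis, as the statement does. I would then note that only inequalities \(\mathbb P(\cdot)\ge1-\delta\) are needed, not the equalities written there. Those inequalities are exactly what the union bound consumes.
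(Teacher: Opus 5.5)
Your proposal is correct and follows essentially the same route as the paper. You establish pathwise containment on the prefix event $E_k^\delta$ (the paper cites the induction of Theorem~\ref{thm:hp_reach_single_window}, while you apply Lemma~\ref{lem:one_step_trunc_invariance_no_indep} step by step), and then apply a union bound over the $k+1$ complementary events without any independence assumption. Your caveat is also well placed: $\mathbb P(E_{w,i}(\delta))\ge 1-\delta$ for the learned set does not follow from Definition~\ref{def:confidence_set} alone, since $\mathcal W_\star(\delta)\subseteq\mathcal W(\delta)$, so it must be taken as the standing hypothesis \eqref{eq:event_w_trunc}, whereas the paper's proof simply attributes it to that definition.
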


\begin{proof}
% For each \(k\ge 0\), define
% \begin{equation}
% \label{eq:prefix_event_indep}
% E_k
% =
% E_{AB}(\delta)\cap\bigcap_{t=0}^{k-1}E_{w,t}(\delta),
% \end{equation}
% with \(\bigcap_{t=0}^{-1}(\cdot)\) being the sure event.
% On \(E_k\), the containment follows by the same induction as in
% Theorem~\ref{thm:hp_reach_single_window}. The base case holds because
% \(\hat{\mathcal R}_0=\mathcal X_0\). If
% \(x_t\in\hat{\mathcal R}_t\) for some \(t<k\), then on
% \(E_{AB}(\delta)\), \([A\ B]\in\widehat{\mathcal{S}}_{AB}^{c}(\delta)\),
% and on \(E_{w,t}(\delta)\), \(w_t\in\mathcal{W}_\star(\delta)\). Hence,
% for any \(u_t\in\mathcal U_t\),
% \begin{align}
% x_{t+1}
% &=
% [A\ B]
% \begin{bmatrix}
% x_t\\ u_t
% \end{bmatrix}
% +w_t
% \nonumber\\
% &\in
% \widehat{\mathcal{S}}_{AB}^{c}(\delta)
% \boxtimes
% \bigl(\hat{\mathcal R}_t\times\mathcal U_t\bigr)
% \oplus
% \mathcal{W}_\star(\delta)
% =
% \hat{\mathcal R}_{t+1}.
% \end{align}
% Thus, \(E_k\) implies
% \begin{equation}
% \mathcal R_t\subseteq\hat{\mathcal R}_t,
% \qquad t=0,\ldots,k .
% \end{equation}
% Therefore,
% \begin{equation}
% \mathbb{P}\!\left(
% \mathcal{R}_t\subseteq\hat{\mathcal{R}}_t,\;
% t=0,\ldots,k
% \right)
% \ge
% \mathbb{P}(E_k).
% \end{equation}
% Using the assumed independence,
% \begin{align}
% \mathbb{P}(E_k)
% &=
% \mathbb{P}\!\left(
% E_{AB}(\delta)\cap\bigcap_{t=0}^{k-1}E_{w,t}(\delta)
% \right)
% \nonumber\\
% &=
% \mathbb{P}(E_{AB}(\delta))
% \prod_{t=0}^{k-1}\mathbb{P}(E_{w,t}(\delta))
% \nonumber\\
% &=
% p_\delta^{\,k+1}
% =
% (1-\varepsilon_\delta)^{k+1}.
% \end{align}
% Combining the last two inequalities gives
% \eqref{eq:reach_prob_prefix_indep}.
By the induction in Theorem~\ref{thm:hp_reach_single_window}, the containment
\(\mathcal R_t\subseteq\hat{\mathcal R}_t\) for all \(t=0,\ldots,k\) holds on the
intersection of the high-confidence model event \(E_{AB}(\delta)\) and the per-step
disturbance events \(\{E_{w,t}(\delta)\}_{t=0}^{k-1}\). Each of these \(k+1\) events
has probability at least \(1-\delta\) by Definition~\ref{def:confidence_set}, i.e.\
fails with probability at most \(\delta\). By the union (Bonferroni) bound, the
probability that any of them fails is at most \((k+1)\delta\), without any
independence assumption between the events. Hence
\(\mathbb{P}\!\left(\mathcal R_t\subseteq\hat{\mathcal R}_t,\;t=0,\ldots,k\right)\ge 1-(k+1)\delta\).
\end{proof}

% \begin{remark}
% The recursion in Algorithm~\ref{alg:refined_reach} is deterministic once
% \(\mathcal S^{(j)}(\delta)\) and \(\mathcal W^{(j)}(\delta)\) are fixed.
% Thus, all probability statements come from the confidence events for the
% model and disturbance sets. The realized-disturbance lemmas improve the
% model set used in propagation, while
% Lemma~\ref{lem:exact_containment_set_refinement} improves the aleatory
% disturbance set used for future disturbances. These two refinements enter
% Algorithm~\ref{alg:refined_reach} at different stages but jointly reduce
% the conservatism of the reachable tube.
% \end{remark}

\begin{proposition}
\label{prop:alg2_less_conservative}
Consider the same initial set \(\mathcal X_0\), input sets
\(\{\mathcal U_k\}_{k\ge0}\), and confidence level \(\delta\) in
Algorithms~\ref{alg:hp_reach_single_window} and~\ref{alg:refined_reach}.
Let \(\mathcal{S}_{AB}^{\mathrm{init}}(\delta)\) and
\(\mathcal W^{\mathrm{init}}(\delta)\) denote the model and aleatory
disturbance sets used by Algorithm~\ref{alg:hp_reach_single_window}. Let
\(\widehat{\mathcal S}_{AB}^{c}(\delta)\) and
\(\mathcal W_\star(\delta)\) denote the refined model and aleatory
disturbance sets used by the constrained branch of
Algorithm~\ref{alg:refined_reach}. Then,
\begin{align}
\widehat{\mathcal S}_{AB}^{c}(\delta)
\subseteq
\mathcal{S}_{AB}^{\mathrm{init}}(\delta),
\qquad
\mathcal W_\star(\delta)
\subseteq
\mathcal W^{\mathrm{init}}(\delta).
\label{eq:alg2_smaller_model_disturbance}
\end{align}
Consequently, the reachable tube generated by
Algorithm~\ref{alg:refined_reach} is contained in the tube generated by
Algorithm~\ref{alg:hp_reach_single_window}, i.e.,
\begin{align}
\widehat{\mathcal R}^{\,\mathrm{ref}}_k
\subseteq
\widehat{\mathcal R}^{\,\mathrm{init}}_k,
\qquad k\ge0 .
\label{eq:alg2_reach_smaller}
\end{align}
Moreover, as more data become available and additional update windows are
processed, Algorithm~\ref{alg:refined_reach} progressively reduces
conservatism.
\end{proposition}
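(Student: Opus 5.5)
The plan is to prove the two set inclusions in \eqref{eq:alg2_smaller_model_disturbance} separately, and then derive \eqref{eq:alg2_reach_smaller} by induction using monotonicity of the reachable-set recursion.

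\emph{Disturbance inclusion.} This is immediate from Lemma~\ref{lem:exact_containment_set_refinement}. In the constrained branch, $\mathcal W_\star(\delta)$ is the output of the LP~\eqref{eq:scaled_Hrep_refinement_lp} with the frozen prior $\mathcal W^{(j)}(\delta)$, and that lemma gives $\mathcal W_\star(\delta)\subseteq\mathcal W^{(j)}(\delta)$. By Remark~\ref{rem5}, the priors are nested along update windows, with $\mathcal W^{(0)}=\mathcal W^{\mathrm{init}}$. When the scenario condition fails, the prior is simply copied. A short induction over $j$ therefore yields $\mathcal W^{(j)}(\delta)\subseteq\mathcal W^{\mathrm{init}}(\delta)$.

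\emph{Model-set inclusion.} Both sets are images of disturbance-matrix sets under the same affine map $Y\mapsto(X_1-Y)D_0^\dagger$, by Lemmas~\ref{lem:AB_mixed_short} and~\ref{lem:AB_from_CMPZ}. The $\delta$-truncation replaces $\nu$ by $\gamma_\delta\beta$ with $\beta\in[-1,1]$ and leaves the deterministic equalities unchanged. So it suffices to show that the truncated refined disturbance set $\widehat{\mathcal W}^c(\delta)$ lies inside the truncated unrefined MPZ $\mathcal Z_{\mathrm{MP}}(\delta)$. I would argue this in three steps.
\begin{enumerate}
\item Imposing the data-consistency and fusion constraints only restricts the latent variables of the same generator family, as in Lemma~\ref{lem:datacons_nzmean} and Lemma~\ref{lem:exact_cpz_intersection}. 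Hence the constrained set is a subset of $\mathcal W_N^0$, exactly as in the first part of Lemma~\ref{lem:uncertainty_reduction}.
\item Since $\mathcal W^{(j)}\subseteq\mathcal W^{\mathrm{init}}$ at the confidence level, $\mathcal W_N^0$ built from $\mathcal W^{(j)}$ is contained in the MPZ~\eqref{eq:W_MPZ} built from the initial prior. This is a columnwise argument: the stacked set is the Cartesian product of $T$ copies.
\item The affine image of a subset is a subset, which gives the model-set inclusion.
\end{enumerate}

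\emph{Main obstacle.} The hard step is the truncation after constraint absorption. Lemma~\ref{lem:generalized_constraint_absorption} reparametrizes $\nu=A_s^\dagger(b-A_d\alpha)+K\zeta$ and then truncates $\zeta$, not $\nu$, with a width $\gamma_\delta$ computed from a different dimension. A point of the truncated absorbed set need not correspond to a $\nu$ with $\|\nu\|_\infty\le\gamma_\delta$. To handle this, I would interpret $\widehat{\mathcal S}_{AB}^c(\delta)$ as the truncation of the constrained set in the original coordinates, that is, the box $\nu\in[-\gamma_\delta,\gamma_\delta]$ intersected with the equality. With that interpretation the inclusion is trivially a restriction. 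Alternatively, one could intersect explicitly with the unrefined confidence set using Lemma~\ref{lem:exact_cpz_intersection}, which the fusion step permits. I would state this convention explicitly in the proof.

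\emph{Reachable tube and progressive refinement.} Given both inclusions, induct on $k$. The base case is $\widehat{\mathcal R}_0^{\mathrm{ref}}=\widehat{\mathcal R}_0^{\mathrm{init}}=\mathcal X_0$. For the inductive step, suppose $\widehat{\mathcal R}_k^{\mathrm{ref}}\subseteq\widehat{\mathcal R}_k^{\mathrm{init}}$. Every point $\theta\phi+w$ of the refined update has $\theta\in\widehat{\mathcal S}^c_{AB}(\delta)\subseteq\mathcal S^{\mathrm{init}}_{AB}(\delta)$, $\phi\in\widehat{\mathcal R}^{\mathrm{ref}}_k\times\mathcal U_k\subseteq\widehat{\mathcal R}^{\mathrm{init}}_k\times\mathcal U_k$, and $w\in\mathcal W_\star(\delta)\subseteq\mathcal W^{\mathrm{init}}(\delta)$. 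So the point lies in the unrefined update. Here I use that the exact set product $\boxtimes$ and the Minkowski sum $\oplus$ are monotone in each argument. The progressive-reduction claim follows by applying the same nesting window-to-window: $\mathcal W^{(j+1)}\subseteq\mathcal W^{(j)}$, and the model prior is fused by intersection via Lemma~\ref{lem:exact_cpz_intersection}, so it is nonincreasing. Repeating the induction then gives $\widehat{\mathcal R}^{(j+1)}_k\subseteq\widehat{\mathcal R}^{(j)}_k$.
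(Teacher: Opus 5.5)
Your proposal is correct and follows essentially the same route as the paper. In both arguments, the refined stacked disturbance set is a constrained subset of the unrefined one and is pushed through the common affine map $Y\mapsto(X_1-Y)D_0^\dagger$. Lemma~\ref{lem:exact_containment_set_refinement} gives the aleatory inclusion, monotonicity of the set product and Minkowski sum gives the tube inclusion by induction, and window-to-window nesting gives the progressive claim.

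Your remark on truncation after constraint absorption addresses a step the paper's proof only asserts. The paper takes $\widehat{\mathcal W}^{c}(\delta)\subseteq\mathcal W^{\mathrm{init}}_N(\delta)$ directly, implicitly relying on the representation \eqref{eq:W_CMPZ_derived}, which keeps the original generators and only a deterministic equality. Fixing the convention that the box is imposed on the original $\nu$ together with the equality is what makes that step rigorous.
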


\begin{proof}
By Lemma~\ref{lem:prior_to_wk}, each realized disturbance is first restricted
using prior system knowledge while remaining consistent with the fixed prior
disturbance description. Lemma~\ref{lem:datacons_nzmean} then imposes the
data-consistency constraints on the same realized disturbance sequence. Hence
the refined constrained realized-disturbance proxy
\(\widehat{\mathcal W}^{c}\) used in Algorithm~\ref{alg:refined_reach}
removes disturbance sequences that are inconsistent with either the prior
model information or the measured data, and therefore satisfies \vspace{-12pt}
\begin{align}
\widehat{\mathcal W}^{c}(\delta)
\subseteq
\mathcal W^{\mathrm{init}}_N(\delta),
\label{eq:refined_disturbance_subset_initial}
\end{align}
where \(\mathcal W^{\mathrm{init}}_N(\delta)\) is the stacked disturbance
set used to construct \(\mathcal{S}_{AB}^{\mathrm{init}}(\delta)\).

By Lemma~\ref{lem:AB_from_CMPZ}, the model set is obtained by the affine map
\(\mathcal S_{AB}(\mathcal W)=\{(X_1-W)D_0^\perp:W\in\mathcal W\}\).
Applying this map to the inclusion
\eqref{eq:refined_disturbance_subset_initial} gives
\begin{align}
\widehat{\mathcal S}_{AB}^{c}(\delta)
=
\mathcal S_{AB}(\widehat{\mathcal W}^{c}(\delta))
\subseteq
\mathcal S_{AB}(\mathcal W^{\mathrm{init}}_N(\delta))
=
\mathcal S_{AB}^{\mathrm{init}}(\delta).
\label{eq:refined_model_subset_initial}
\end{align}
Moreover, Lemma~\ref{lem:exact_containment_set_refinement} gives the nested
aleatory refinement
\begin{align}
\mathcal W_\star(\delta)
\subseteq
\mathcal W^{\mathrm{init}}(\delta).
\label{eq:refined_aleatory_subset_initial}
\end{align}
Together, \eqref{eq:refined_model_subset_initial} and
\eqref{eq:refined_aleatory_subset_initial} prove
\eqref{eq:alg2_smaller_model_disturbance}.

It remains to compare the reachable tubes. Both algorithms start from the
same initial set, so
\(\widehat{\mathcal R}^{\,\mathrm{ref}}_0
=
\widehat{\mathcal R}^{\,\mathrm{init}}_0
=
\mathcal X_0\).
Assume
\(\widehat{\mathcal R}^{\,\mathrm{ref}}_k
\subseteq
\widehat{\mathcal R}^{\,\mathrm{init}}_k\).
Using monotonicity of set propagation and Minkowski addition, together with
\eqref{eq:alg2_smaller_model_disturbance}, yields
\begin{align}
\widehat{\mathcal R}^{\,\mathrm{ref}}_{k+1}
&=
\widehat{\mathcal S}_{AB}^{c}(\delta)
\bigl(
\widehat{\mathcal R}^{\,\mathrm{ref}}_k
\times
\mathcal U_k
\bigr)
\oplus
\mathcal W_\star(\delta)
\nonumber\\
&\subseteq
\mathcal S_{AB}^{\mathrm{init}}(\delta)
\bigl(
\widehat{\mathcal R}^{\,\mathrm{init}}_k
\times
\mathcal U_k
\bigr)
\oplus
\mathcal W^{\mathrm{init}}(\delta)
=
\widehat{\mathcal R}^{\,\mathrm{init}}_{k+1}.
\end{align}
The result follows by induction.

Finally, consider two successive update windows. The next window adds new
prior-consistency and data-consistency restrictions to the realized
disturbance proxies, so the feasible disturbance sequences are restricted to
a subset of those allowed before. Applying the affine map
\(\mathcal S_{AB}(\cdot)\) therefore cannot enlarge the induced model set.
Similarly, whenever the scenario condition is satisfied,
Lemma~\ref{lem:exact_containment_set_refinement} constructs the new aleatory
set as a subset of the previous prior disturbance set. Hence both the model
set and the future disturbance set are non-expanding across updates. By the
same monotonicity argument used above, the reachable tubes are also
non-expanding as more data are incorporated; they become strictly smaller
whenever at least one of these set inclusions is strict.
\end{proof}

\begin{figure*}[!t]
    \centering
    \begin{subfigure}[t]{0.32\textwidth}
        \centering
        \includegraphics[width=\linewidth]{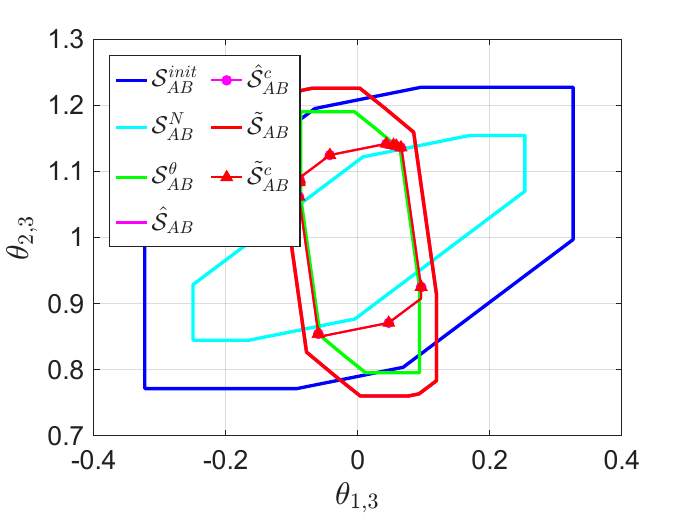}
        \caption{}
        \label{fig:AB_3}
    \end{subfigure}\hfill
    \begin{subfigure}[t]{0.32\textwidth}
        \centering
        \includegraphics[width=\linewidth]{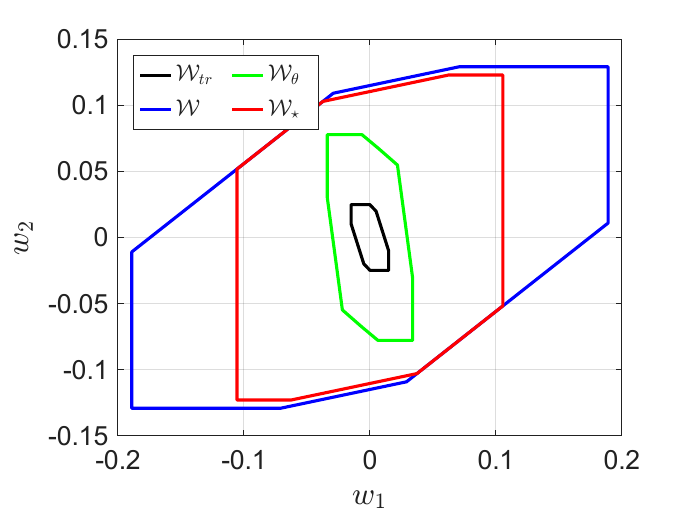}
        \caption{}
        \label{fig:W_al}
    \end{subfigure}\hfill
    \begin{subfigure}[t]{0.32\textwidth}
        \centering
        \includegraphics[width=\linewidth]{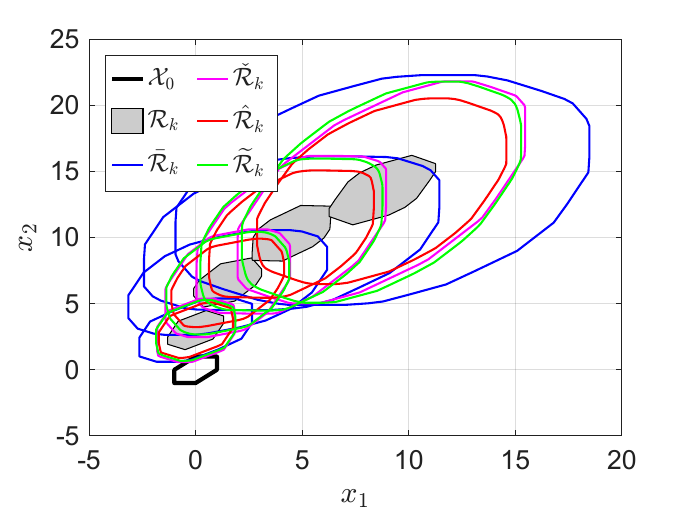}
        \caption{}
        \label{fig:RA}
    \end{subfigure}
    \caption{High-probability sets from data and the resulting reachable sets:
    (a) system-model sets on the input-matrix column
    \([\theta_{1,3},\theta_{2,3}]\); (b) aleatory disturbance sets;
    (c) reachable sets over \(k=1,\dots,4\) (Alanwar baseline omitted).}
    \label{fig:projSetB1}
\end{figure*}

\section{Simulation Results}

% \footnotetext{\href{https://github.com/ZhenZhang32768/Mixed-Zonotope}{https://github.com/ZhenZhang32768/Mixed-Zonotope}}

% \footnote{\url{https://github.com/ZhenZhang32768/Mixed-Zonotope}}

Consider the system \eqref{syst}, where the actual but unknown system dynamics are given by \cite{bisoffi2020data}
\(A=\left[\begin{smallmatrix} 0.8 & 0.5 \\ -0.4 & 1.2 \end{smallmatrix}\right]\),
\(B=\left[\begin{smallmatrix} 0 \\ 1 \end{smallmatrix}\right]\),
and the real disturbance lies within the PZ
\(\mathcal{W}_{\mathrm{tr}}
=\zono{c_{\mathrm{tr}},G_{\mathrm{tr},d},G_{\mathrm{tr},s}}_{\mathcal P}\),
where
\(c_{\mathrm{tr}}=\left[\begin{smallmatrix}0\\0\end{smallmatrix}\right]\),
\(G_{\mathrm{tr},d}=\left[\begin{smallmatrix}-0.0025 & 0.005 \\ 0.0025 & -0.015\end{smallmatrix}\right]\), and
\(G_{\mathrm{tr},s}=\sqrt{0.02}\left[\begin{smallmatrix}0.015 & 0 \\ 0 & 0.015\end{smallmatrix}\right]\).
Here, the prior knowledge is modeled as a probabilistic zonotope, where its
center and generators are formed using data obtained by applying stabilizing
input to the system in \eqref{syst}, but under a different noise set
\(\mathcal{W}_{\theta}
=\zono{c_p,G_{p,d},G_{p,s}}_{\mathcal P}\), given by
\(c_p=\left[\begin{smallmatrix}0\\0\end{smallmatrix}\right]\),
\(G_{p,s}=\sqrt{0.1}\left[\begin{smallmatrix}0.0125 & 0\\0 & 0.0125\end{smallmatrix}\right]\), and
\(G_{p,d}=\left[\begin{smallmatrix}0.006 & -0.0075 & 0.0066 & 0\\-0.0425 & 0.00625 & -0.00525 & 0.01\end{smallmatrix}\right]\).
At initialization, we adopt a deliberately conservative PZ
\(\mathcal{W}=\zono{c,G_d,G_s}_{\mathcal P}\), given by
\(c=\left[\begin{smallmatrix}0\\0\end{smallmatrix}\right]\),
\(G_d=\left[\begin{smallmatrix}0.05 & 0.08\\0.01 & 0.06\end{smallmatrix}\right]\), and
\(G_s=\sqrt{0.02}\left[\begin{smallmatrix}0.12 & 0\\0 & 0.12\end{smallmatrix}\right]\).
In simulation, we set the refinement period to $M=20$ and use a confidence level $\delta=10^{-3}$. Consequently, the high-confidence set contains the true model with probability at least $1-\delta=0.999$.

Fig.~\ref{fig:AB_3} compares several high-probability sets of system models
constructed under the same truncation level, shown on the projection onto the
input-matrix column. Enforcing data consistency (Lemma~\ref{lem:datacons_nzmean}) and propagating the resulting disturbance surrogate $\widehat{\mathcal W}_N$ through the data equation yields the data-consistent model set ${\mathcal{S}}_{AB}^N$, visibly tighter than the initial set ${\mathcal{S}}_{AB}^{init}$; pushing the prior disturbance set \(\mathcal W_{\theta}\) through the data map on the prior window (Lemma~\ref{lem:prior_to_wk}) gives the prior model set \(\mathcal S_{AB}^{\theta}\). These two constrain complementary directions, so the exact CMPZ fusion is the tightest set in this projection (visibly smaller than \emph{both}) and it contains the true system.
% This observation indicates that enforcing data consistency tightens the realized-disturbance description, thereby reducing the conservatism of the disturbance bound and yielding a noticeably tighter system set of system models.

Incorporating the prior (Lemma~\ref{lem:prior_to_wk}) and re-applying Lemma~\ref{lem:datacons_nzmean} gives the refined proxy $\widehat{\mathcal W}^{c}$, from which we build the model set two ways: an MPZ $\hat{\mathcal{S}}_{AB}$ from its unconstrained surrogate $\widehat{\mathcal W}$ (Lemma~\ref{lem:AB_mixed_short}), and a CMPZ $\hat{\mathcal{S}}_{AB}^c$ kept in the equality-coupled representation, so that truncation and propagation respect the coupling constraints. Each is computed for both orders of the underlying proxy intersection (data-then-prior and prior-then-data); the hatted symbols denote one order and the tilded symbols \(\tilde{\mathcal S}_{AB},\tilde{\mathcal S}_{AB}^c\) the other.

% It is important to emphasize that the CPZ-to-PZ transformation is equivalent in the untruncated sense.
% However, under the high-probability truncation adopted in this paper, the two representations are not strictly equivalent. 
% The reason is that PZ truncation effectively imposes an unconstrained truncation on the underlying random generators, whereas the constrained representation enforces equality couplings among random variables, restricting realizable samples to a lower-dimensional feasible subspace. 
% Consequently, even when the same dataset leads to the same proxy intersection $\widehat{\mathcal{W}}^c$, converting it to $\widehat{\mathcal{W}}$ and truncating at the same probability level may produce direction-dependent deterministic zonotope enclosures that differ from those obtained by directly truncating the constrained representation.
CPZ-to-PZ conversion is equivalent only before truncation: since PZ uncertainty
is unbounded, the model sets are confidence enclosures, and high-probability
truncation of the unconstrained surrogate over-approximates the coupled bounded
part. The MPZ sets \(\hat{\mathcal S}_{AB},\tilde{\mathcal S}_{AB}\) therefore
show mild directional expansion relative to the CMPZ sets
\(\hat{\mathcal S}_{AB}^c,\tilde{\mathcal S}_{AB}^c\), a safer enclosure when
different sources constrain different directions, not a violation of refinement.
Building the CMPZ directly from \(\widehat{\mathcal W}^c\) avoids this expansion,
stays inside the unrefined set (Fig.~\ref{fig:AB_3}), and is order-independent
(invariant to the intersection order to machine precision); it is also
noticeably tighter than the deterministic intersection of the same model sets.

Fig.~\ref{fig:projSetB1} separates the two sources of uncertainty that the
framework treats differently. The system-model set in Fig.~\ref{fig:AB_3}
captures \emph{epistemic} uncertainty about the unknown dynamics, which is
reducible with data: enforcing data consistency and prior knowledge shrinks it
substantially. The disturbance set in Fig.~\ref{fig:W_al} captures the
\emph{aleatory} uncertainty of the additive noise, which is inherently
irreducible and, hence, it cannot be contracted below its high-probability envelope at the
prescribed confidence. Accordingly, the learned aleatory set
\(\mathcal W_{\star}\) forms a sound, properly nested sequence
\(\mathcal W_{\mathrm{tr}}\subset\mathcal W_{\theta}\subset\mathcal W_{\star}
\subseteq\mathcal W\): it lies inside the conservative initial set \(\mathcal W\)
and contains the true disturbance, while the prior set \(\mathcal W_{\theta}\) is
itself a strict, informative subset of \(\mathcal W\). The refinement of
\(\mathcal W_{\star}\) is deliberately modest, reflecting that the stochastic
disturbance is genuine noise rather than removable model error; the decisive
conservatism reduction instead comes from the epistemic model-set refinement,
which propagates into the smaller reachable sets of Fig.~\ref{fig:RA}.

The reachable sets are propagated over a four-step horizon from the initial set
\(\mathcal X_0\) with center \(0\) and generator matrix
\(0.5\left[\begin{smallmatrix}1&0&1\\0&1&1\end{smallmatrix}\right]\), under the
input \(\mathcal U=\zono{3,0.25}\).
Table~\ref{tab:results2d} reports, for all methods, the final-step reachable-set
area and the reduction relative to the unrefined data-driven set. The proposed probabilistic CMPZ set \(\hat{\mathcal R}_k\) is the
tightest: it is $55\%$ smaller than the unrefined data-driven set, $48\%$ smaller
than the matrix-zonotope set-membership method of
Alanwar et al.~\cite{alanwar2023data}, and $24\%$ smaller than the deterministic
exact-multiplication method of Zhang et al.~\cite{zhang2025data}, while remaining
larger than, and containing, the model-based set built from the true dynamics.
Both baselines use the same exact set product: Alanwar et al.\ identifies an
unconstrained matrix-zonotope model set, whereas Zhang et al.~\cite{zhang2025data}
takes the geometric intersection of the data-driven and prior model sets together
with the original conservative disturbance \(\mathcal W\), which is neither
probabilistically refined. The gain over Zhang therefore measures the combined
benefit of the probabilistic model-set and disturbance refinement; the disturbance
contribution alone is isolated in Fig.~\ref{fig:noise_adv} below. The exact multiplication is
the costlier propagation (about $1.8$\,s per step against $3$--$18$\,ms for
the zonotope-based baselines), which is the price of the non-convex exact set product paid for the tighter sets.

We validate the high-probability guarantee empirically. Over \(K=10^{6}\)
Monte-Carlo trajectories with the disturbance drawn from the true (untruncated)
distribution and propagated through the true dynamics, the realized state lies in
\(\hat{\mathcal R}_k\) at every step (and jointly over the horizon) with an
empirical frequency whose $95\%$ Clopper--Pearson interval equals
\([1.0000,1.0000]\) to four decimals, for every method, confirming soundness. The
set-based tube is one-sided conservative: Minkowski-sum propagation over-covers,
so the empirical frequency saturates at one. That the guarantee is nonetheless \emph{tight}, not vacuously conservative, is
shown in Fig.~\ref{fig:calib}: sweeping \(\delta\) with each disturbance sampled
from its own distribution, the empirical coverage of the zonotopic \(1-\delta\)
enclosure of \(\mathcal W\) tracks the nominal \(1-\delta\) along the diagonal
for a Gaussian-dominant disturbance; the residual over-coverage comes from the
deterministically-bounded part, contained with certainty.

Fig.~\ref{fig:nsweep} reports a data-convergence study: as the data length \(N\)
grows, the data-driven reachable set contracts and its variability across random
data records (error bars over $10$ seeds) collapses, converging to a floor set by
the irreducible aleatory disturbance, confirming that more data consistently
tighten the model, consistent with the epistemic/aleatory separation above.

On this disturbance, the advantage of the probabilistic over the deterministic
treatment is modest at the nominal noise level but grows with the Gaussian
fraction of the disturbance. Fig.~\ref{fig:noise_adv} isolates this effect: at the
same empirically validated \(1-\delta\) coverage, the probabilistic reachable set
is \(2\)--\(3\) times smaller than the deterministic worst-case enclosure that a
bounded-noise method must use for the unbounded Gaussian component, and the gap
widens with the stochastic fraction. Sweeping \(\delta\) instead of \(s\)
confirms that the confidence level provides an explicit trade-off between set
size and confidence: the four-step area grows monotonically with \(1-\delta\)
and remains well below the deterministic worst-case at every level, with
empirical coverage \(\geq 1-\delta\) throughout.

This is a difference of disturbance \emph{semantics}, not of set arithmetic. Note that the deterministic baseline uses the same exact multiplication. Lacking a confidence
parameter, a bounded-noise method faces a dichotomy against unbounded noise: a
disturbance set matched to the \(1-\delta\) enclosure has its premise
\(w_k\in\mathcal W_{\det}\) violated at rate \(p=3.7\times10^{-4}\) per step
(\(s{=}16\)), so containment fails with probability \(1-(1-p)^{h}\) (about
\(4\%\) at \(h{=}100\), \(31\%\) at \(h{=}1000\)); inflating the set until
violations vanish (\(\delta_{\det}=10^{-9}\)) instead incurs the
\(2\)--\(3\times\) conservatism of Fig.~\ref{fig:noise_adv}. Determinizing the
model set as well (geometrically intersecting the data and prior model sets,
with the refined disturbance bounded at the worst-case level) yields a
still-looser tube (\(142.4\) vs.\ \(115.2\) at \(k{=}4\)); the prior deterministic
method, which instead keeps the original conservative disturbance, is looser yet
(\(151.6\), Table~\ref{tab:results2d}). Only the proposed framework
attaches a valid confidence statement, selecting its operating point explicitly
through \(\delta\).

We deliberately restrict the simulation to a two-state example, which suffices
to illustrate every component of the proposed theory while keeping all sets
directly visualizable in the parameter and state spaces. The scalability of the
underlying set-based machinery to higher-dimensional systems is already
established in prior work: matrix-zonotope data-driven reachability on a
five-state benchmark in~\cite{alanwar2023data}, and the exact-multiplication
propagation employed here on the same five-state benchmark
in~\cite{zhang2025data}.

\subsection{Hardware Validation}
We further validate the framework on a physical \mbox{1/10}-scale car-like
robot (Fig.~\ref{fig:jetracer}). Reflective markers provide ground-truth planar
pose \(x=[p_x,\,p_y,\,\psi]^\top\) from an external motion-capture system, and
the robot is commanded through its velocity interface with input
\(u=[v,\,\omega]^\top\) (linear and angular velocity). Input--output
trajectories were logged at \(20\,\mathrm{Hz}\) and resampled onto a
\(10\,\mathrm{Hz}\) grid (sample time \(0.1\,\mathrm{s}\)) for one-step
identification and reachability; the high-probability model and
disturbance sets were identified from three pooled excitation runs
(\(T=261\) one-step samples, with \([X_0^\top\;U_0^\top]^\top\) of full rank
\(n+m=5\)), and the \(1-\delta\) containment guarantee was tested on an
independent held-out run, with unmodeled kinematics and sensor noise absorbed
into the probabilistic disturbance set \(W\). Fig.~\ref{fig:hwreach} reports the
result: the data-driven probabilistic reachable sets track the held-out
trajectory, and the empirical held-out coverage is \(96.4\%\), \(98.2\%\), and
\(98.2\%\) at the target levels \(1-\delta\in\{90\%,95\%,99\%\}\),i.e., the
containment guarantee is met on real data. Pooling several runs is needed so
that the estimated disturbance set captures the run-to-run variation; a
single-run estimate is over-optimistic and does not generalize.

\begin{figure}[!t]
    \centering
    \includegraphics[width=0.5\columnwidth]{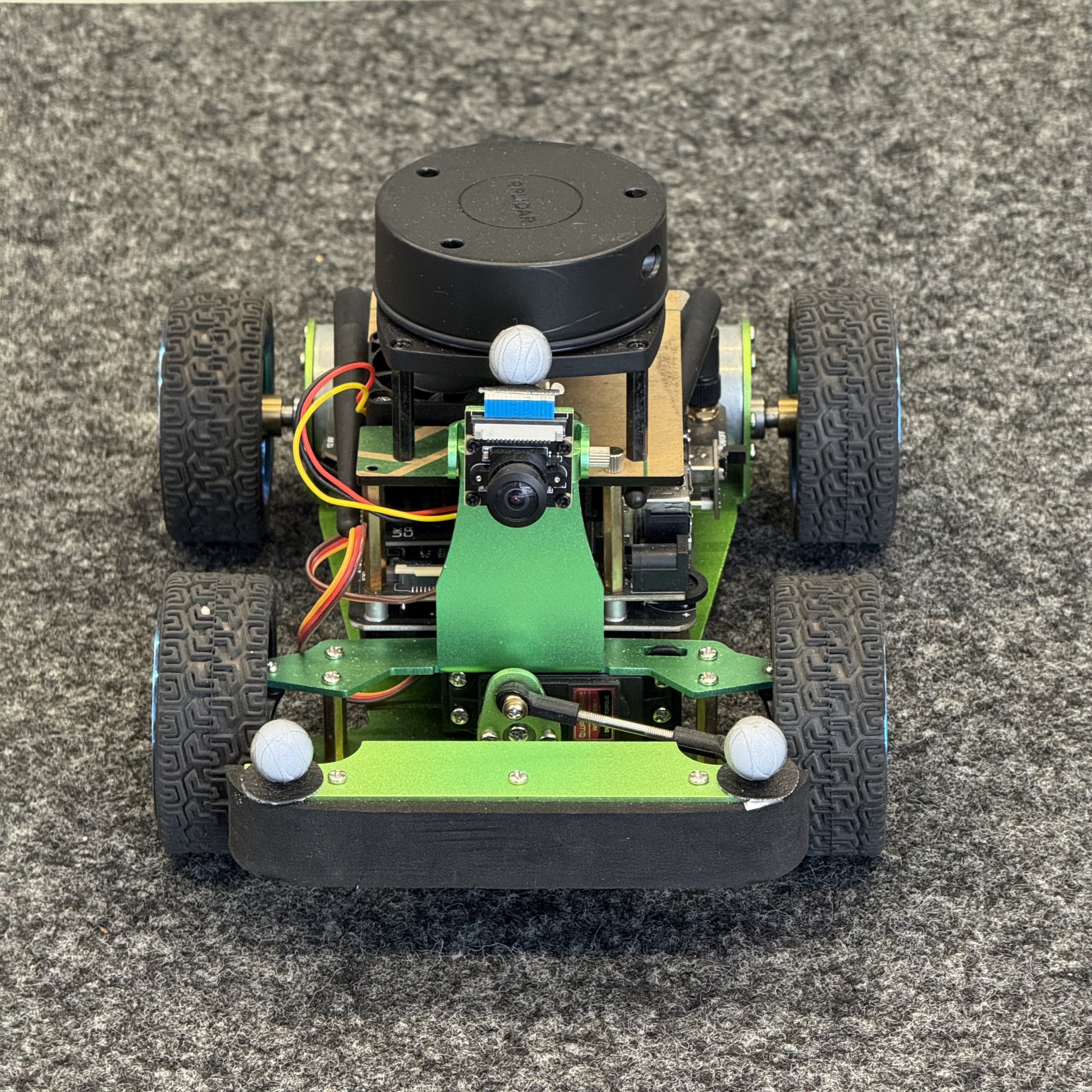}
    \caption{Car-like robot used for the hardware validation, with
    motion-capture markers for ground-truth pose.}
    \label{fig:jetracer}
\end{figure}

\begin{figure*}[!t]
    \centering
    \begin{subfigure}[t]{0.32\textwidth}
        \centering
        \includegraphics[width=\linewidth]{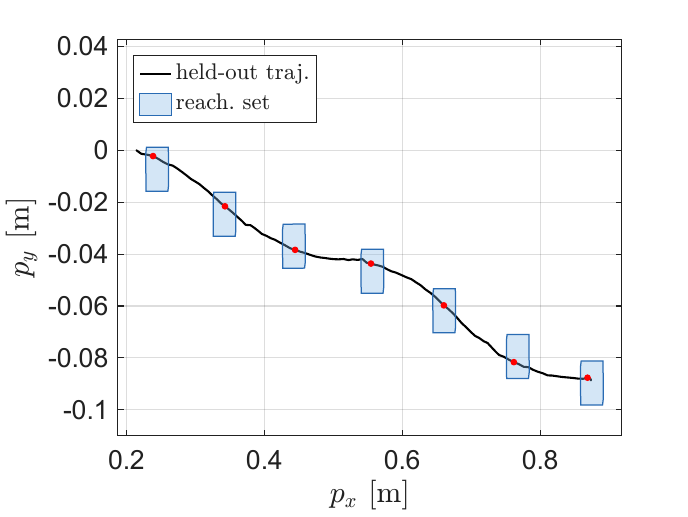}
        \caption{}
        \label{fig:hw_a}
    \end{subfigure}\hfill
    \begin{subfigure}[t]{0.32\textwidth}
        \centering
        \includegraphics[width=\linewidth]{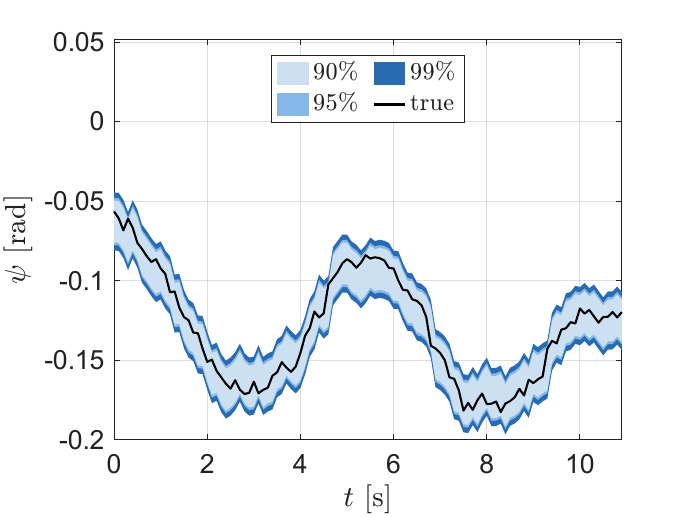}
        \caption{}
        \label{fig:hw_b}
    \end{subfigure}\hfill
    \begin{subfigure}[t]{0.32\textwidth}
        \centering
        \includegraphics[width=\linewidth]{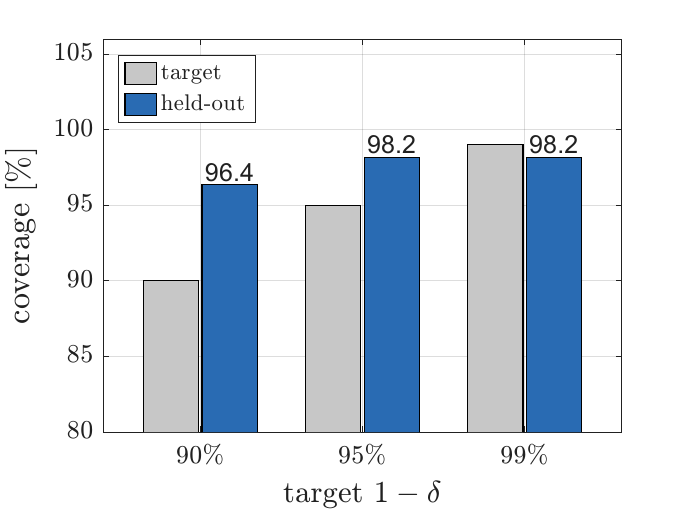}
        \caption{}
        \label{fig:hw_c}
    \end{subfigure}
    \caption{Hardware validation on the car-like robot:
    (a)~data-driven probabilistic reachable sets (blue) along a held-out
    trajectory (black, top view); (b)~measured heading \(\psi\) within the
    \(1-\delta\) band; (c)~empirical held-out coverage versus the target
    \(1-\delta\) at three confidence levels.}
    \label{fig:hwreach}
\end{figure*}

\begin{table}[!t]
    \centering
    \caption{Final-step (\(k=4\)) reachable-set area and reduction relative to
    the unrefined data-driven set.}
    \label{tab:results2d}
    \footnotesize
    \setlength{\tabcolsep}{4pt}
    \begin{tabular}{lcc}
    \toprule
    Reachable set & area & reduction \\
    \midrule
    Model-based (true \([A\,B]\)) & \(15.6\) & \(93.9\%\) \\
    Alanwar et al.~\cite{alanwar2023data} & \(221.4\) & \(13.4\%\) \\
    Unrefined data-driven & \(255.5\) & \(0.0\%\) \\
    Refined (MPZ) & \(148.1\) & \(42.1\%\) \\
    Proposed (prob., exact mult.) & \(\mathbf{115.2}\) & \(\mathbf{54.9\%}\) \\
    Deterministic worst-case~\cite{zhang2025data}\(^{\dagger}\) & \(151.6\) & \(40.7\%\) \\
    \bottomrule
    \end{tabular}
    \par\smallskip
    \begin{minipage}{\columnwidth}\footnotesize
    \(^{\dagger}\)The prior deterministic method: the model set is the geometric
    intersection of the data-driven and prior model sets, and the disturbance is
    the original conservative set \(\mathcal W\) (neither probabilistically
    refined) bounded at the worst-case level \(\delta_{\det}=10^{-9}\).
    \end{minipage}
\end{table}

\begin{figure*}[!t]
    \centering
    \begin{subfigure}[t]{0.32\textwidth}
        \centering
        \includegraphics[width=\linewidth]{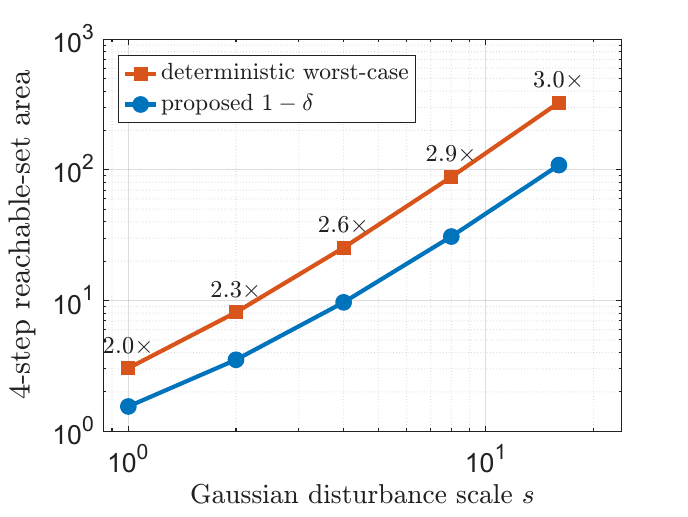}
        \caption{}
        \label{fig:noise_adv}
    \end{subfigure}\hfill
    \begin{subfigure}[t]{0.32\textwidth}
        \centering
        \includegraphics[width=\linewidth]{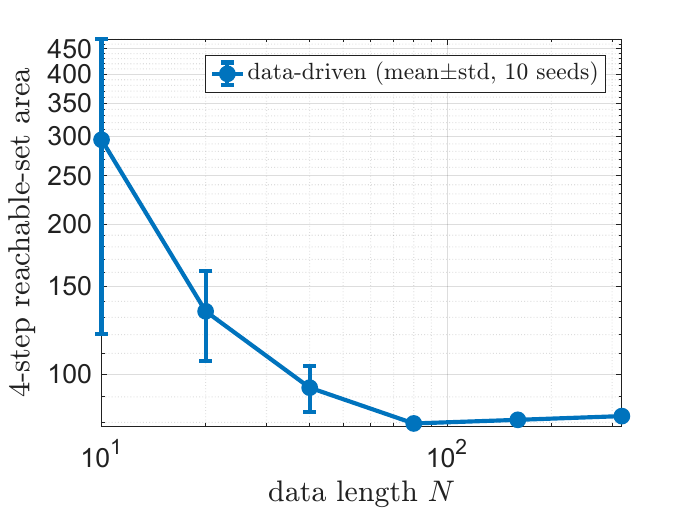}
        \caption{}
        \label{fig:nsweep}
    \end{subfigure}\hfill
    \begin{subfigure}[t]{0.32\textwidth}
        \centering
        \includegraphics[width=\linewidth]{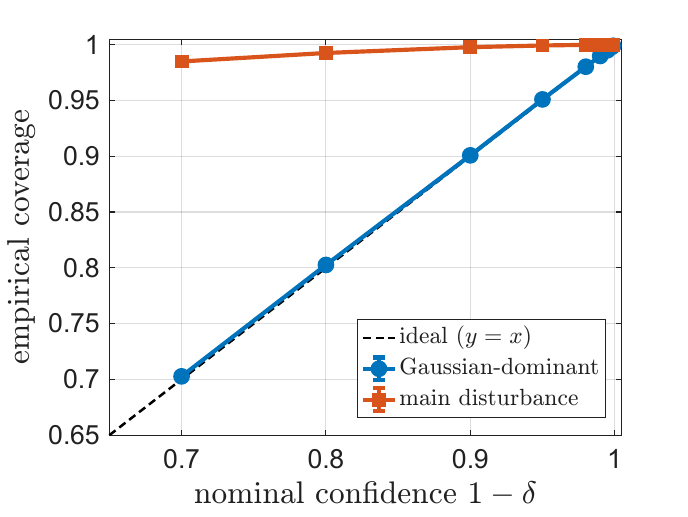}
        \caption{}
        \label{fig:calib}
    \end{subfigure}
    \caption{Quantitative studies: (a) four-step reachable-set area versus the
    Gaussian disturbance scale \(s\) (probabilistic vs.\ deterministic
    worst-case); (b) four-step area versus data length \(N\) (mean \(\pm\) std,
    $10$ records); (c) empirical coverage of the \(1-\delta\) disturbance
    enclosure versus nominal confidence (\(95\%\) Clopper--Pearson).}
\end{figure*}

\section{Conclusion}

This paper developed a data-driven reachability framework for discrete-time
linear systems with additive disturbances modeled by probabilistic zonotopes.
The proposed approach separates past realized disturbances from future
aleatory disturbances and refines them at different stages of the pipeline.
Realized disturbances are tightened using prior model information and
data-consistency constraints, yielding a less conservative data-consistent
model set. Once enough refined proxies are collected, the conservative
aleatory disturbance set is contracted through a scenario-supported refinement
program and used for future propagation. The resulting reachable tubes preserve
high-probability containment guarantees while reducing conservatism at the
same confidence level. Numerical results demonstrate tighter model,
disturbance, and reachable sets compared with unrefined probabilistic and
deterministic baselines. Future work will extend the framework to nonlinear
systems.

\bibliographystyle{IEEEtran} % 
\bibliography{refs}          % 

@InProceedings{UncertaintyLearning1,
  title     = {Learning-based Rigid Tube Model Predictive Control},
  author    = {Yulong Gao and Shuhao Yan and Jian Zhou and Mark Cannon and Alessandro Abate and Karl Henrik Johansson},
  booktitle = {Proceedings of the 6th Annual Learning for Dynamics \& Control Conference},
  series    = {Proceedings of Machine Learning Research},
  volume    = {242},
  pages     = {492--503},
  year      = {2024},
}

@ARTICLE{1632303,
  author={Calafiore, G.C. and Campi, M.C.},
  journal={IEEE Transactions on Automatic Control}, 
  title={The scenario approach to robust control design}, 
  year={2006},
  volume={51},
  number={5},
  pages={742-753},
  }

@article{UncertaintyLearning2,
  author    = {Julian Berberich and Carsten W. Scherer and Frank Allgower},
  title     = {Combining Prior Knowledge and Data for Robust Controller Design},
  journal   = {IEEE Transactions on Automatic Control},
  volume    = {68},
  number    = {8},
  pages     = {4618--4633},
  year      = {2022},
}

@article{ganai2024hamilton,
  title={Hamilton-jacobi reachability in reinforcement learning: A survey},
  author={Ganai, Milan and Gao, Sicun and Herbert, Sylvia L},
  journal={IEEE Open Journal of Control Systems},
  volume={3},
  pages={310--324},
  year={2024},
  publisher={IEEE}
}

@article{althoff2021set,
  title={Set propagation techniques for reachability analysis},
  author={Althoff, Matthias and Frehse, Goran and Girard, Antoine},
  journal={Annual Review of Control, Robotics, and Autonomous Systems},
  volume={4},
  number={1},
  pages={369--395},
  year={2021},
  publisher={Annual Reviews}
}

@inproceedings{althoff2023checking,
  title={Checking and establishing reachset conformance in CORA 2023},
  author={Althoff, Matthias},
  booktitle={Proc. of 10th International Workshop on Applied Verification of Continuous and Hybrid Systems},
  year={2023}
}

@article{wang2025system,
  title={System Identification in the Network Era: A Survey of Data Issues and Innovative Approaches},
  author={Wang, Qing-Guo and Zhang, Liang},
  journal={IEEE/CAA Journal of Automatica Sinica},
  volume={12},
  number={7},
  pages={1305--1319},
  year={2025},
  publisher={IEEE}
}

@inproceedings{luo2023reachability,
  title={Reachability analysis for linear systems with uncertain parameters using polynomial zonotopes},
  author={Luo, Ertai and Kochdumper, Niklas and Bak, Stanley},
  booktitle={Proceedings of the 26th ACM International Conference on Hybrid Systems: Computation and Control},
  pages={1--12},
  year={2023}
}

@inproceedings{doshi2022hamilton,
  title={Hamilton-Jacobi multi-time reachability},
  author={Doshi, Manan and Bhabra, Manmeet and Wiggert, Marius and Tomlin, Claire J and Lermusiaux, Pierre FJ},
  booktitle={2022 IEEE 61st Conference on Decision and Control (CDC)},
  pages={2443--2450},
  year={2022},
  organization={IEEE}
}

@inproceedings{UncertaintyLearning3,
author = {Ashoori, MohammadHossein and Aminzadeh, Ali and Lavaei, Abolfazl and Nejati, Amy},
title = {Physics-Informed Safety Verification of Nonlinear Systems: A Scenario Approach with Data Mitigation},
year = {2025},
publisher = {Association for Computing Machinery},
booktitle = {Proceedings of the ACM/IEEE 16th International Conference on Cyber-Physical Systems (with CPS-IoT Week 2025)},
}

@article{UncertaintyLearning4,
  author    = {Julian Berberich and Anne Romer and Carsten W. Scherer and Frank Allgower},
  title     = {Robust data-driven state-feedback design},
  journal   = {arXiv:1909.04314},
  year      = {2019},
}

@ARTICLE{UncertaintyLearning5,
  author={Modares, Amir and Ghiasi, Niyousha and Kiumarsi, Bahare and Modares, Hamidreza},
  journal={IEEE Transactions on Automatic Control}, 
  title={Unifying Direct and Indirect Learning for Safe Control of Linear Systems}, 
  year={2026},
  volume={},
  number={},
  pages={1-8},
  }

@article{bisoffi2020data,
  title={Data-based guarantees of set invariance properties},
  author={Bisoffi, Andrea and De Persis, Claudio and Tesi, Pietro},
  journal={IFAC-PapersOnLine},
  volume={53},
  number={2},
  pages={3953--3958},
  year={2020},
  publisher={Elsevier}
}

@phdthesis{althoff2010reachability,
  title={Reachability analysis and its application to the safety assessment of autonomous cars},
  author={Althoff, Matthias},
  year={2010},
  school={Technische Universität München}
}

@inproceedings{althoff2009safety,
  title={Safety assessment for stochastic linear systems using enclosing hulls of probability density functions},
  author={Althoff, Matthias and Stursberg, Olaf and Buss, Martin},
  booktitle={2009 European Control Conference (ECC)},
  pages={625--630},
  year={2009},
  organization={IEEE}
}

@article{kuhn1998rigorously,
  title={Rigorously computed orbits of dynamical systems without the wrapping effect},
  author={K{\"u}hn, Wolfgang},
  journal={Computing},
  volume={61},
  number={1},
  pages={47--67},
  year={1998},
  publisher={Springer}
}

@inproceedings{zhang2025data,
  title={Data-driven nonconvex reachability analysis using exact multiplication},
  author={Zhang, Zhen and Niazi, M Umar B and Chong, Michelle S and Johansson, Karl H and Alanwar, Amr},
  booktitle={2025 IEEE 64th Conference on Decision and Control (CDC)},
  pages={4882--4889},
  year={2025},
  organization={IEEE}
}

@book{anderson2005optimal,
  title={Optimal filtering},
  author={Anderson, Brian DO and Moore, John B},
  year={2005},
  publisher={Courier Corporation}
}

@article{mitchell2005time,
  title={A time-dependent Hamilton-Jacobi formulation of reachable sets for continuous dynamic games},
  author={Mitchell, Ian M and Bayen, Alexandre M and Tomlin, Claire J},
  journal={IEEE Transactions on automatic control},
  volume={50},
  number={7},
  pages={947--957},
  year={2005},
  publisher={IEEE}
}

@article{xue2023reach,
  title={Reach-avoid analysis for polynomial stochastic differential equations},
  author={Xue, Bai and Zhan, Naijun and Fr{\"a}nzle, Martin},
  journal={IEEE Transactions on Automatic Control},
  volume={69},
  number={3},
  pages={1882--1889},
  year={2023},
  publisher={IEEE}
}

@article{sieber2022system,
  title={System level disturbance reachable sets and their application to tube-based MPC},
  author={Sieber, Jerome and Zanelli, Andrea and Bennani, Samir and Zeilinger, Melanie N},
  journal={European Journal of Control},
  volume={68},
  pages={100680},
  year={2022},
  publisher={Elsevier}
}

@article{zhang2022robust,
  title={Robust tube-based model predictive control with Koopman operators},
  author={Zhang, Xinglong and Pan, Wei and Scattolini, Riccardo and Yu, Shuyou and Xu, Xin},
  journal={Automatica},
  volume={137},
  pages={110114},
  year={2022},
  publisher={Elsevier}
}

@article{xiang2020reachable,
  title={Reachable set estimation for neural network control systems: A simulation-guided approach},
  author={Xiang, Weiming and Tran, Hoang-Dung and Yang, Xiaodong and Johnson, Taylor T},
  journal={IEEE Transactions on Neural Networks and Learning Systems},
  volume={32},
  number={5},
  pages={1821--1830},
  year={2020},
  publisher={IEEE}
}

@inproceedings{chen2022reachability,
  title={Reachability analysis for cyber-physical systems: Are we there yet?},
  author={Chen, Xin and Sankaranarayanan, Sriram},
  booktitle={NASA formal methods symposium},
  pages={109--130},
  year={2022},
  organization={Springer}
}

@article{wang2023safe,
  title={Safe reinforcement learning for automated vehicles via online reachability analysis},
  author={Wang, Xiao and Althoff, Matthias},
  journal={IEEE Transactions on Intelligent Vehicles},
  year={2023},
  publisher={IEEE}
}

@article{chen2018decomposition,
  title={Decomposition of reachable sets and tubes for a class of nonlinear systems},
  author={Chen, Mo and Herbert, Sylvia L and Vashishtha, Mahesh S and Bansal, Somil and Tomlin, Claire J},
  journal={IEEE Transactions on Automatic Control},
  volume={63},
  number={11},
  pages={3675--3688},
  year={2018},
  publisher={IEEE}
}

@article{alanwar2023data,
  title={Data-driven reachability analysis from noisy data},
  author={Alanwar, Amr and Koch, Anne and Allg{\"o}wer, Frank and Johansson, Karl Henrik},
  journal={IEEE Transactions on Automatic Control},
  volume={68},
  number={5},
  pages={3054--3069},
  year={2023},
  publisher={IEEE}
}

@article{hu2025robust,
  title={Robust Data-Driven Predictive Control for Unknown Linear Systems with Bounded Disturbances},
  author={Hu, Kaijian and Liu, Tao},
  journal={IEEE Transactions on Automatic Control},
  year={2025},
  publisher={IEEE}
}

@inproceedings{gao2022robust,
  title={Robust risk-aware model predictive control of linear systems with bounded disturbances},
  author={Gao, Yulong and Liu, Changxin and Johansson, Karl H},
  booktitle={2022 IEEE 61st Conference on Decision and Control (CDC)},
  pages={1148--1155},
  year={2022},
  organization={IEEE}
}

@article{scott2016constrained,
  title={Constrained zonotopes: A new tool for set-based estimation and fault detection},
  author={Scott, Joseph K and Raimondo, Davide M and Marseglia, Giuseppe Roberto and Braatz, Richard D},
  journal={Automatica},
  volume={69},
  pages={126--136},
  year={2016},
  publisher={Elsevier}
}

@article{raghuraman2022set,
  title={Set operations and order reductions for constrained zonotopes},
  author={Raghuraman, Vignesh and Koeln, Justin P},
  journal={Automatica},
  volume={139},
  pages={110204},
  year={2022},
  publisher={Elsevier}
}

@inproceedings{girard2005reachability,
  title={Reachability of uncertain linear systems using zonotopes},
  author={Girard, Antoine},
  booktitle={International workshop on hybrid systems: Computation and control},
  pages={291--305},
  year={2005},
  organization={Springer}
}

@article{blanchini1999set,
  title={Set invariance in control},
  author={Blanchini, Franco},
  journal={Automatica},
  volume={35},
  number={11},
  pages={1747--1767},
  year={1999},
  publisher={Elsevier}
}

@article{der2009aleatory,
  title={Aleatory or epistemic? Does it matter?},
  author={Der Kiureghian, Armen and Ditlevsen, Ove},
  journal={Structural safety},
  volume={31},
  number={2},
  pages={105--112},
  year={2009},
  publisher={Elsevier}
}

@article{li2025aleatory,
  title={Aleatory and epistemic uncertainty in reliability analysis: An engineering perspective},
  author={Li, Pei-Pei and Valdebenito, Marcos A and Dang, Chao and Beer, Michael and Faes, Matthias GR},
  journal={Structural Safety},
  pages={102666},
  year={2025},
  publisher={Elsevier}
}

@article{chen2024robust,
  title={Robust model predictive control with polytopic model uncertainty through system level synthesis},
  author={Chen, Shaoru and Preciado, Victor M and Morari, Manfred and Matni, Nikolai},
  journal={Automatica},
  volume={162},
  pages={111431},
  year={2024},
  publisher={Elsevier}
}

@article{fiacchini2021probabilistic,
  title={Probabilistic reachable and invariant sets for linear systems with correlated disturbance},
  author={Fiacchini, Mirko and Alamo, Teodoro},
  journal={Automatica},
  volume={132},
  pages={109808},
  year={2021},
  publisher={Elsevier}
}

@article{hewing2019scenario,
  title={Scenario-based probabilistic reachable sets for recursively feasible stochastic model predictive control},
  author={Hewing, Lukas and Zeilinger, Melanie N},
  journal={IEEE Control Systems Letters},
  volume={4},
  number={2},
  pages={450--455},
  year={2019},
  publisher={IEEE}
}

@article{hewing2020recursively,
  title={Recursively feasible stochastic model predictive control using indirect feedback},
  author={Hewing, Lukas and Wabersich, Kim P and Zeilinger, Melanie N},
  journal={Automatica},
  volume={119},
  pages={109095},
  year={2020},
  publisher={Elsevier}
}

@article{li2022set,
  title={Set-based state estimation with probabilistic consistency guarantee under epistemic uncertainty},
  author={Li, Shen and Stouraitis, Theodoros and Gienger, Michael and Vijayakumar, Sethu and Shah, Julie A},
  journal={IEEE Robotics and Automation Letters},
  volume={7},
  number={3},
  pages={5958--5965},
  year={2022},
  publisher={IEEE}
}

@article{shetty2020predicting,
  title={Predicting state uncertainty bounds using non-linear stochastic reachability analysis for urban GNSS-based UAS navigation},
  author={Shetty, Akshay and Gao, Grace Xingxin},
  journal={IEEE Transactions on Intelligent Transportation Systems},
  volume={22},
  number={9},
  pages={5952--5961},
  year={2020},
  publisher={IEEE}
}

\begin{IEEEbiographynophoto}{Amir Modares}
He is a Student Member of IEEE. He received the B.S. degree in
electrical engineering from Azad University, Qaen, Iran, in 2020, and
the M.S. degree in electrical engineering from the Sharif University of
Technology, Tehran, Iran, in 2022. He is currently a Ph.D. student at
the University of Cyprus. His research interests include machine learning,
convex optimization, safe control, and nonlinear control.
\end{IEEEbiographynophoto}

\begin{IEEEbiographynophoto}{Zhen Zhang}
He received the B.Sc.\ degree in Automation from Chang'an University, Xi'an, China, in 2021, and the M.Sc.\ degree in Control Science and Engineering from Northwestern Polytechnical University, Xi'an, China, in 2024. He is currently pursuing the Ph.D.\ degree at the School of Computation, Information and Technology, Technical University of Munich, Germany. His research interests include data-driven reachability analysis, set-based methods, and safety verification of dynamical systems.
\end{IEEEbiographynophoto}

\begin{IEEEbiographynophoto}{Themistoklis Charalambous} (Senior Member, IEEE) received his B.A. and M.Eng. degrees in Electrical and Information Sciences from Trinity College, University of Cambridge. He completed his Ph.D. in the Control Laboratory of the Department of Engineering at the University of Cambridge. Following his Ph.D., he held postdoctoral positions at Imperial College London, the Royal Institute of Technology (KTH), and Chalmers University of Technology. In 2017, he joined Aalto University as a tenure-track Assistant Professor. In 2018, he was awarded an Academy of Finland Research Fellowship, and in 2020, he was appointed tenured Associate Professor at Aalto University. In 2021, he joined the University of Cyprus as a tenure-track Assistant Professor, while remaining affiliated with Aalto University as a Visiting Professor. Since April 2023, he has also been a Visiting Professor at the FinEst Centre for Smart Cities. Since May 2026, he has been a tenured Associate Professor at the University of Cyprus.
\end{IEEEbiographynophoto}

\begin{IEEEbiographynophoto}{Amr Alanwar} He is an assistant professor at Technical University of Munich. He received an M.Sc. in Computer Engineering from Ain Shams University, Cairo, Egypt, in 2013 and a Ph.D. in Computer Science from the Technical University of Munich, Germany, in 2020. He was a postdoctoral researcher at KTH Royal Institute of Technology. He was also a research assistant at the University of California, Los Angeles. He received the Emmy Noether Funding from the German Research Foundation in 2025. Also, he received the Best Paper Award in the Systems and Applications Track at HSCC/ICCPS during CPS Week 2026 and the Best Demonstration Paper Award at the IPSN during CPS Week 2017. He was a finalist in the Qualcomm Innovation Fellowship for two consecutive years.
\end{IEEEbiographynophoto}

\begin{IEEEbiographynophoto}{Hamidreza Modares}
He received the B.S. degree from the University of Tehran,
Tehran, Iran, in 2004, the M.S. degree from the Shahrood University of
Technology, Shahrood, Iran, in 2006, and the Ph.D. degree from the
University of Texas at Arlington, Arlington, TX, USA, in 2015, all in
electrical engineering. He is currently an Associate Professor with the
Department of Mechanical Engineering, Michigan State University. Before
joining Michigan State University, he was an Assistant Professor with the
Department of Electrical Engineering, Missouri University of Science and
Technology. His current research interests include reinforcement learning,
safe control, machine learning in control, distributed control of
multi-agent systems, and robotics. He is an Associate Editor of the
\emph{IEEE Transactions on Systems, Man, and Cybernetics: Systems}.
\end{IEEEbiographynophoto}
\end{document}